\DeclareRobustCommand{\thanksmark}[1]{\footnotemark[#1]}
\definecolor{mycolor1}{RGB}{0,5,200}
\definecolor{mycolor2}{RGB}{0,150,5}
\newcommand{\ex}{\mathrm{e}}
\newcommand{\maxValue}{\ensuremath{r}}
\newcommand{\maxValueOne}{\ensuremath{r_1}}
\newcommand{\classfont}{\mathsf}
\newcommand{\class}[1]{\mbox{{\(\classfont{\,#1}\)}}}
\newcommand{\np}{\ensuremath{\class{NP}}}
\newcommand{\nump}{\ensuremath{\#\class{P}}}
\newcommand{\adjedges}{\ensuremath{E}}
\newcommand{\Holf}{\ensuremath{\mathrm{Hol}_\SetFunctions}}
\newcommand{\N}{\ensuremath{\mathbb{N}}}
\newcommand{\Z}{\ensuremath{\mathbb{Z}}}
\newcommand{\C}{\ensuremath{\mathbb{C}}}
\newcommand{\R}{\ensuremath{\mathbb{R}}}
\newcommand{\E}{\ensuremath{\mathbb{E}}}
\renewcommand{\P}{\ensuremath{\mathbb{P}}}
\newcommand{\SetFunctions}{\ensuremath{\mathcal{F}}}
\newcommand{\Matchings}{\ensuremath{\mathcal{M}}}
\newcommand{\Independent}{\ensuremath{\mathcal{I}}}
\newcommand{\dist}{\ensuremath{\mathrm{dist}}}
\DeclareMathOperator{\supp}{supp}
\newcommand{\vdompoly}{\ensuremath{\gamma_v}}
\newcommand{\vcolouring}{\ensuremath{\varphi_\gamma}}
\newcommand{\vSetPolymers}{\mathcal C _v}
\newcommand{\vfunction}{\ensuremath{Z(\vSetPolymers(G),\Phi(\cdot,z))}}
\newcommand{\groundmatching}{\ensuremath{M}}
\newcommand{\difference}{\ensuremath{\Delta}}
\newcommand{\domainvarfunctions}{\ensuremath{\mathcal F_0}}
\newcommand{\domain}{\ensuremath{D}}
\newcommand{\domainvar}{\ensuremath{\kappa}}
\newcommand{\adjkappa}{\ensuremath{E^\domainvar}}
\newcommand{\dompoly}{\ensuremath{\gamma_\phi}}
\theoremstyle{plain}
\newtheorem{theorem}{Theorem}[section]
\newtheorem{corollary}[theorem]{Corollary}
\newtheorem{lemma}[theorem]{Lemma}
\theoremstyle{definition}
\newtheorem{definition}[theorem]{Definition}
\newtheorem{remark}[theorem]{Remark}
\newtheorem{observation}[theorem]{Observation}
\title{Zeros and approximations of Holant polynomials on the complex plane}
\author{\small Katrin~Casel\thanks{\texttt{
			\{katrin.casel, philipp.fischbeck, tobias.friedrich, andreas.goebel, gregor.lagodzinski\}@hpi.de}
		}
	}	
\author{\small Philipp~Fischbeck\thanksmark{1} }
\author{\small Tobias~Friedrich\thanksmark{1} }
\author{\small Andreas~G\"obel\thanksmark{1} }
\author{\small J.~A.~Gregor~Lagodzinski\thanksmark{1} }
\affil{\small Hasso Plattner Institute, University of Potsdam, Potsdam, Germany }
\date{}
\begin{document}

\maketitle

\begin{abstract}
	We present fully polynomial-time approximation schemes for a broad class of Holant problems with complex edge weights, which we call \emph{Holant polynomials}. 
	We transform these problems into partition functions of abstract combinatorial structures known as \emph{polymers} in statistical physics. Our method involves establishing zero-free regions for the partition functions of polymer models and using the most significant terms of the \emph{cluster expansion} to approximate them.
	
	Results of our technique include new approximation and sampling algorithms for a diverse class of Holant polynomials in the low-temperature regime and approximation algorithms for general Holant problems with small signature weights.	
	Additionally, we give randomised approximation and sampling algorithms with faster running times for more restrictive classes. Finally, we improve the known zero-free regions for a perfect matching polynomial.
	
\end{abstract}	

\section{Introduction}
\label{sec:intro}

The ongoing effort to characterise the complexity of approximating partition functions originating from statistical physics has recently seen great progress. Such partition functions take as input a metric, usually a graph, that encodes how particles interact. Two kinds of such partition functions have been studied in the literature: vertex spin systems, where vertices assume some state (spin) and edges are functions encoding the energy of vertex interactions; and edge spin systems, where the edges assume spins and the vertices are functions encoding the energy of edge interactions. 
So far, the main body of literature focuses on the first category of vertex spin systems. Results include remarkable connections between the phase transitions spin systems undergo and the approximability of their partition function, see e.g.~\cite{2006:Weitz:counting_independent_sets,2010:Sly:computational_transition,2017:Peters:conjecture_of_Sokal,2014:Galanis:inapproximability_independent_hard_core}. 

This article focuses on the second category of edge spin systems, which can be naturally encoded under the \emph{Holant} framework of counting problems. Our results include approximation algorithms for \emph{Holant polynomials} in the low energy regime and approximation algorithms for \emph{Holant problems} with vertices encoding small weights. In particular we identify new tractable cases for such Holants with respect to their approximability.

The Holant framework originates in Valiant's holographic algorithms~\cite{2008:Valiant:holographic_algorithms} to model perfect matching computations and was extended by Cai, Lu and Xia~\cite{2011:Cai:complexity_Holant} to encode partition functions of edge spin systems such as edge covers and Eulerian orientations. A Holant problem is parametrised by a finite set of functions (often called \emph{signatures}) $\SetFunctions$ with domain $D$.
For a graph $G=(V,E)$ and a mapping $\pi\colon V\rightarrow\SetFunctions$, where $\pi(v)=f_v$ maps $v\in V$ to a function $f_v\colon D^{\adjedges(v)}\rightarrow \mathbb C$, the \emph{Holant problem} is to compute the function
\[
 \Holf(G,\pi)=\sum_{\sigma\in D^E} \prod_{v\in V}f_v(\sigma|_{\adjedges(v)})
\]
with $\adjedges(v)$ denoting the set of edges incident to $v$.

External conditions like fugacity are known in statistical physics as \emph{external fields} and encoded in partition functions as weights in the system. In the Holant framework this yields Holant problems with external fields, which we call \emph{Holant polynomials}. For a signature set $\SetFunctions$ and a domain $D=\{0,1,\dots,\kappa\}$ for some $\kappa\in\Z_{>0}$ the \emph{Holant polynomial} maps a graph $G=(V,E)$ and a mapping $\pi\colon V\rightarrow\SetFunctions$
to the function
\[
	Z_\SetFunctions(G,\pi,\mathbf{z})=\sum_{\sigma\in D^{E}} \prod_{v\in V}f_v(\sigma|_{\adjedges(v)})\prod_{i=0}^\kappa z_i^{|\sigma|_i},
\]
where $\mathbf{z}=(z_0,z_1,\dots,z_\kappa)$ and $|\sigma|_i$ denotes the number of edges $e\in E$ with $\sigma(e)=i$. As an example, if $\mathbf{z}=(1,z)$ and the set of signatures $\SetFunctions$ contains the functions taking the value~1 if at most one of the input variables is set to~1, the Holant polynomial is the matching polynomial $Z_{\Matchings}(G,\mathbf{z})$.

Our results restrict the signature sets to be subsets of $\SetFunctions_0=\{f\mid f(\mathbf{0})\neq 0\}$ and to contain only polynomially computable signatures. For $f\in\domainvarfunctions$ with arity $d$ we define $r(f)=\max_{x\in\domain^d\setminus\{\mathbf{0}\}}\{|f(x)|/|f(\mathbf{0})|\}$ and for a signature class $\SetFunctions\subseteq\domainvarfunctions$ we define $r(\SetFunctions)=\max_{f\in\SetFunctions}\{r(f)\}$. Our main theorem is the following.
\begin{theorem}
	\label{thm:Holant_poly_FPTAS}
	Let $\mathcal F\subseteq \mathcal F_0$. For all graphs $G$ of maximum degree $\Delta$ and all $\pi\colon V\rightarrow \mathcal F$, the Holant polynomial admits an FPTAS for $\mathbf{z}$ in
	\[\left\{(z_0,z_1,\dots,z_\domainvar)\in \mathbb C^{\domainvar+1}\ \Big|\  z_0\neq0, \frac{|z_i|}{|z_0|}<  (\Delta\domainvar\ex^{2}\maxValueOne(\maxValueOne+1))^{-1},\ 1\leq i\leq \domainvar\right\}\]
	with $\maxValueOne = \max\{1, r(\mathcal F)\}$.
\end{theorem}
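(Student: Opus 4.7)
The plan is to realise the Holant polynomial as the partition function of a polymer model whose ground state is the all-zero edge assignment, establish a zero-free polydisk via the Koteck\'y--Preiss criterion, and then invoke the cluster-expansion algorithmic framework of Helmuth--Perkins--Regts and Patel--Regts to extract the FPTAS. First I would factor out the contribution of $\sigma \equiv 0$: for any $\sigma \in D^E$ let $S(\sigma) = \{e \in E : \sigma(e) \neq 0\}$ and observe that, as $f_v(\mathbf 0)\neq 0$ for every $v$, the term indexed by $\sigma$ equals $z_0^{|E|}\prod_v f_v(\mathbf 0)$ times a product over the connected components of $S(\sigma)$ (in the line graph of $G$), provided these components occupy pairwise vertex-disjoint subgraphs. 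This suggests defining a \emph{polymer} $\gamma$ as a pair $(H_\gamma, \sigma_\gamma)$ with $H_\gamma$ a connected edge-subgraph of $G$ and $\sigma_\gamma \colon E(H_\gamma) \to \{1,\dots,\kappa\}$, and declaring two polymers compatible iff $V(H_\gamma) \cap V(H_{\gamma'}) = \emptyset$. Letting $\tilde\sigma_\gamma$ denote the extension of $\sigma_\gamma$ by $0$, set
\[
w(\gamma) \;=\; \prod_{i=1}^{\kappa}\Bigl(\tfrac{z_i}{z_0}\Bigr)^{|\sigma_\gamma|_i}\prod_{v \in V(H_\gamma)} \frac{f_v(\tilde\sigma_\gamma|_{\adjedges(v)})}{f_v(\mathbf 0)}.
\]
Then $Z_\SetFunctions(G,\pi,\mathbf z) = z_0^{|E|}\bigl(\prod_v f_v(\mathbf 0)\bigr)\,\Xi(G)$, where $\Xi(G) = \sum_{\Gamma \text{ compatible}} \prod_{\gamma\in\Gamma} w(\gamma)$.

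Second, I would verify the Koteck\'y--Preiss condition with size function $|\gamma| = |E(H_\gamma)|$ and decay function $g(\gamma) = |V(H_\gamma)|$. Since a polymer $\gamma_0$ is incompatible only with polymers touching one of its $|V(H_{\gamma_0})|$ vertices, it suffices to bound $\sum_{\gamma \ni v}|w(\gamma)|\,e^{|\gamma|+g(\gamma)}$ uniformly in $v$. Two estimates feed into this. On one side, $|w(\gamma)| \leq \max\{1,r(\SetFunctions)\}^{|V(H_\gamma)|}\cdot \max_{1\leq i\leq\kappa}(|z_i|/|z_0|)^{|\gamma|}$, since every vertex factor is a ratio of signatures bounded in modulus by $r_1$. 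On the other, the number of connected edge-subgraphs of size $k$ through a fixed vertex of a $\Delta$-regular line graph is at most $(e\Delta)^k/k$ (a standard Galvin--type bound), and for each such subgraph there are at most $\kappa^k$ choices of non-zero labels. Summing the resulting geometric series in $\Delta\kappa\,r_1(r_1+1)\,\max_i(|z_i|/|z_0|)$ and absorbing the factor $e^{|\gamma|+g(\gamma)}$ (each $|\gamma|$-edge polymer has $g(\gamma)\leq |\gamma|+1$) yields convergence precisely inside the polydisk of Theorem~\ref{thm:Holant_poly_FPTAS}, and in particular $\Xi(G)\neq 0$ there.

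Finally, zero-freeness of $\Xi(G)$ in an open polydisk around the closed region in the statement lets me apply the cluster-expansion algorithm: truncate $\log\Xi(G) = \sum_{X}\phi(X)\prod_\gamma w(\gamma)^{X(\gamma)}$ at order $K = \Theta(\log(|V|/\varepsilon))$ and evaluate each truncated coefficient by enumerating connected subgraphs of $G$ of size at most $K$, which runs in time $(|V|/\varepsilon)^{O(1)}$ since $\Delta$ is constant and the signatures are polynomially computable. The main obstacle is the tight bookkeeping in the Koteck\'y--Preiss step: the precise prefactor $\Delta\kappa\,e^2\,r_1(r_1+1)$ requires carefully separating interior vertices of $H_\gamma$ (charged against $g(\gamma)$) from boundary vertices (whose incident edges are only partially labelled), so that the $r_1$-bound is paid exactly once per vertex and the combinatorial $(e\Delta)^k$ factor is not inflated by an additional $\Delta$; balancing these bounds against the two exponentials $e^{|\gamma|}$ and $e^{g(\gamma)}$ is what produces the $e^2$ and the $r_1(r_1+1)$ rather than $r_1^2$.
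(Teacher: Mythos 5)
Your proposal is correct and follows essentially the same route as the paper: translate the Holant polynomial into a polymer partition function with ground state $\sigma\equiv 0$ (the paper's $\mathcal C_\domainvar(G)$ model and Lemma~\ref{lem:Holant_polynomial_to_polymer_partition_function}), establish a zero-free polydisk via Koteck\'y--Preiss with $a(\gamma)$ linear in $|E(\gamma)|$ and the Borgs--Chayes--Kahn--Lov\'asz subgraph count (Lemma~\ref{lem:holant_poly_zeros}), and then extract an FPTAS by truncating the cluster expansion (Theorem~\ref{thm:polymers_fptas}). Two small caveats: the ``interior versus boundary vertex'' bookkeeping you anticipate is not actually needed --- the paper simply uses $|V(\gamma)|\le |E(\gamma)|+1$ and $|E(\gamma)|/|V(\gamma)|\ge 1/2$ to balance the geometric series and land on $e^2 r_1(r_1+1)$ --- and the Helmuth--Perkins--Regts algorithmic theorem as stated applies to vertex-spin polymer models, so the paper proves a general-purpose variant (Theorem~\ref{thm:polymers_fptas}, with an auxiliary univariate variable $x$ grading by $|E(\gamma)|$) before it can be invoked here.
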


This result captures a very broad class of Holant polynomials as the only requirement in terms of the class of signatures is that the all zeros assignment $\sigma_0$ contributes a non-zero term in the sum. Essentially, this ensures the existence of a trivially computable solution for the related decision problem, the all zeros configuration. In a sense, an efficient way to show the existence of a solution is necessary since hardness of this decision problem immediately implies hardness of approximation as Goldberg et al.~observe~\cite[Theorem~1]{2004:Goldberg:relative_complexity_approx_coutning}.

The region stated in Theorem~\ref{thm:Holant_poly_FPTAS} excludes the value $\mathbf{z}=\mathbf{1}$, therefore it does not apply to Holant problems directly. However, our technique also yields the following theorem.
\begin{theorem}\label{thm:Holant_FPTAS}
 Let $\mathcal F\subseteq \mathcal F_0$, with $\maxValue(\SetFunctions)< \max\{(2\sqrt\ex)^{-1}(\Delta\kappa\ex)^{-\frac\Delta2} ,0.2058(\domainvar+1)^{-\Delta}\}$. For all graphs $G$ of maximum degree $\Delta$ and all $\pi\colon V\rightarrow \mathcal F$, the Holant problem $Z_{\mathcal F}(G,\pi)$ admits an FPTAS.
\end{theorem}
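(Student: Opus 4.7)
The plan is to establish the two bounds inside the maximum through independent arguments and use whichever is more favorable for the given $\Delta$ and $\kappa$. The first argument reduces the Holant problem to a Holant polynomial so that Theorem~\ref{thm:Holant_poly_FPTAS} applies as a black box; the second applies the polymer and cluster-expansion machinery directly to the Holant problem.

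For the first bound, I would introduce a positive parameter $\lambda$ and define rescaled signatures $\tilde{f}_v(\sigma_v)=f_v(\sigma_v)\,\lambda^{|\sigma_v|_{>0}}$, where $|\sigma_v|_{>0}=\sum_{i=1}^{\kappa}|\sigma_v|_i$ counts the nonzero coordinates at $v$. Since each nonzero edge contributes to $|\sigma_v|_{>0}$ at both of its endpoints, $\prod_v\tilde{f}_v(\sigma_v)=\lambda^{2|\sigma|_{>0}}\prod_v f_v(\sigma_v)$. Choosing $z_0=1$ and $z_i=\lambda^{-2}$ for $1\le i\le\kappa$ cancels the extra factor, yielding $Z_{\mathcal{F}}(G,\pi)=Z_{\tilde{\mathcal{F}}}(G,\tilde{\pi},\mathbf{z})$, with rescaled signatures obeying $r(\tilde{\mathcal{F}})\le r(\mathcal{F})\lambda^{\Delta}$. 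Applying Theorem~\ref{thm:Holant_poly_FPTAS} then requires $\lambda^{-2}<(\Delta\kappa e^{2}\tilde{r}_1(\tilde{r}_1+1))^{-1}$ with $\tilde{r}_1=\max\{1,r(\tilde{\mathcal{F}})\}$; optimising $\lambda$ against $r(\mathcal{F})$ delivers the first bound $(2\sqrt{e})^{-1}(\Delta\kappa e)^{-\Delta/2}$.

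For the second bound, which is preferable when $\Delta$ is large compared to $\log(\kappa+1)$, I would bypass the polynomial reduction altogether. Factoring $Z_{\mathcal{F}}(G,\pi)=\prod_v f_v(\mathbf{0})\cdot\sum_{\sigma}\prod_v\hat{f}_v(\sigma_v)$ with $\hat{f}_v=f_v/f_v(\mathbf{0})$ produces normalised signatures satisfying $\hat{f}_v(\mathbf{0})=1$ and $|\hat{f}_v(x)|\le r(\mathcal{F})$ for $x\ne\mathbf{0}$. The inner sum becomes a polymer partition function whose polymers are connected edge-subsets carrying the support of deviations from the all-zero configuration; a polymer on $s$ edges has at most $\kappa^{s}$ internal configurations, and each incident vertex contributes a signature factor of modulus at most $r(\mathcal{F})$. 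Combining these with the classical bound of order $(e\Delta)^{s}$ on the number of connected edge-subgraphs of size $s$ through a fixed vertex and a factor at most $(\kappa+1)^{\Delta}$ per vertex accounting for local configurations, the Kotecký--Preiss criterion reduces to an inequality of the form $r(\mathcal{F})(\kappa+1)^{\Delta}<c$; optimising the free Kotecký--Preiss parameter yields $c=0.2058$.

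The main obstacle will be the second strategy: crafting a polymer model with a natural incompatibility relation for Holant problems and verifying the Kotecký--Preiss condition uniformly in $G$ and $\pi$, while handling complex-valued signatures for which cancellations in the polymer-weight sums cannot be relied upon. Once the Kotecký--Preiss condition is in place, the extraction of an FPTAS by truncating the cluster expansion to $O(\log(|V|/\varepsilon))$ terms follows the same template as Theorem~\ref{thm:Holant_poly_FPTAS}.
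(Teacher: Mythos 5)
The statement's two bounds are both proved in the paper by applying the Koteck\'y--Preiss criterion \emph{directly} to the polymer partition function $Z(\mathcal C_\kappa(G),\Phi_\pi(\cdot,\mathbf 1))$ in Lemma~\ref{lem:Holant_zeros}, once with $a(\gamma)=\alpha|V(\gamma)|$ and the $\Delta i/2$ edge bound (giving the first term) and once with $a(\gamma)=\alpha|V(\gamma)|$ together with the $(\kappa+1)^{j\Delta}$ enumeration of polymers by vertex count (giving the second). The FPTAS then follows by the same argument as in Theorem~\ref{thm:Holant_poly_FPTAS}. Your second strategy is essentially this, so it is on track, modulo filling in the enumeration bounds and the observation that the range parameter $q$ feeding into Theorem~\ref{thm:polymers_fptas} must be raised to the power $1/\Delta$ to convert a per-vertex bound into a per-edge one.

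Your first strategy, however, has a concrete quantitative gap: the rescaling reduction does \emph{not} deliver the claimed bound $(2\sqrt e)^{-1}(\Delta\kappa e)^{-\Delta/2}$. The identity $Z_{\mathcal F}(G,\pi)=Z_{\tilde{\mathcal F}}(G,\tilde\pi,\mathbf z)$ with $\tilde f_v=f_v\lambda^{|\sigma_v|_{>0}}$, $z_i=\lambda^{-2}$, and $r(\tilde{\mathcal F})\le r(\mathcal F)\lambda^{\Delta}$ (for $\lambda\ge 1$) is all correct, but optimising forces $\tilde r_1=1$ (i.e.\ $\lambda\le r(\mathcal F)^{-1/\Delta}$), and then Theorem~\ref{thm:Holant_poly_FPTAS} requires $\lambda^{-2}<(2\Delta\kappa e^2)^{-1}$, so the best you can extract is $r(\mathcal F)<(2\Delta\kappa e^2)^{-\Delta/2}$. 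The ratio of the target to this quantity is $(2e)^{\Delta/2}/(2\sqrt e)$, which exceeds $1$ for every $\Delta\ge 2$ and grows exponentially in $\Delta$, so the reduction proves a strictly weaker statement. The loss is structural: Theorem~\ref{thm:Holant_poly_FPTAS} is calibrated per edge (it charges $\tilde r_1$ uniformly to every vertex, even when few incident edges are nonzero), whereas the paper's direct Koteck\'y--Preiss argument for the first bound is calibrated per vertex, charging $r(\mathcal F)$ once per vertex and exploiting that a connected subgraph with $i$ vertices has at most $\Delta i/2$ edges. To close this gap you must drop the reduction and run the Koteck\'y--Preiss condition directly on the polymer model with $a(\gamma)=\alpha|V(\gamma)|$, as in Lemma~\ref{lem:Holant_zeros}.

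Two minor notes. The rescaling requires $\lambda\ge 1$ for the inequality $\lambda^{|x|_{>0}}\le\lambda^{\Delta}$, which you should state. And your worry about cancellations for complex-valued signatures is a non-issue: Koteck\'y--Preiss only ever sees $|\Phi(\gamma)|$, so one simply bounds moduli throughout.
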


Key to our technique is to translate the Holant polynomial to the partition function of an \emph{abstract polymer model}. One of the advantages of polymer models is that they are \emph{self-reducible}. A known implication of self-reducibility is the equivalence between approximate counting and sampling~\cite{Jerrum:1986:Random_Generation, Sinclair:1989:Approximate_Counting}. As Helmuth, Perkins and Regts observe~\cite[Section~5]{2018:Helmuth:algorithmic_pirogov_sinai_theory} counting algorithms for polymer models can be converted to algorithms that approximately sample from the \emph{Gibbs distribution}~$\mu_G$ where in our case
\[
	\mu_G(\sigma)=\frac{\prod_{v\in V(G)}f_v(\sigma|_{\adjedges(v)})\prod_{i=0}^\kappa z_i^{|\sigma|_i}}{Z_{\mathcal F}(G,\pi,\mathbf{z})}.
\]

One of the downsides of the above algorithms is that the runtime is in $\mathcal O(n^\Delta)$. Under more restrictive conditions and for particular polymer models Chen et al.~\cite{2019:Chen:fast_algorithms_low_temperature} showed how to obtain faster randomised algorithms. We extend these results to show that they apply to our polymer models for Holant polynomials.
\begin{theorem}\label{thm:holant_poly_random}
 Let $G$ be a graph of maximum degree $\Delta$, $\mathcal F\subseteq\mathcal F_0$ and $\pi\colon V(G)\rightarrow \mathcal F$. For the Holant polynomial $Z_{\mathcal F}(G,\pi,\mathbf{z})$ there exists an
 $\varepsilon$-sampling algorithm from the distribution $\mu_G$ with run-time in $\mathcal O(\Delta n\log(n/\varepsilon))$ for $\mathbf{z}$ in
 \[\left\{(z_0,z_1,\dots,z_\domainvar)\in (\R_{\geq 0})^{\domainvar+1}\ \Big|\  z_0\neq0, \frac{z_i}{z_0}\leq ((\Delta\domainvar)^3\ex^5\maxValueOne^2)^{-1},\ 1\leq i\leq \domainvar\right\}\]
  with $\maxValueOne=\max\{1, r(\mathcal F)\}$. Furthermore, $Z_{\mathcal F}(G,\pi,\mathbf{z})$ admits an FPRAS with run-time $\mathcal O(\Delta n^2/\varepsilon^2\log^2(\Delta n/\varepsilon))$ for these values of $\mathbf{z}$.
\end{theorem}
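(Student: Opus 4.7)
The plan is to re-use the abstract polymer representation of $Z_{\mathcal F}(G,\pi,\mathbf z)$ introduced in the proof of Theorem~\ref{thm:Holant_poly_FPTAS} and to invoke the randomised polymer-sampling framework of Chen et al.~\cite{2019:Chen:fast_algorithms_low_temperature}. That framework guarantees, whenever the polymer weights satisfy a strengthened decay condition beyond the cluster-expansion condition used for the FPTAS, both an $\varepsilon$-sampler for the polymer Gibbs measure via a local Markov chain in $\mathcal O(\Delta n\log(n/\varepsilon))$ time and an FPRAS for the partition function in $\mathcal O(\Delta n^2/\varepsilon^2\log^2(\Delta n/\varepsilon))$ time, the latter via the standard self-reducibility reduction. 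Thus the mathematical content reduces to verifying the strong condition for the Holant polymer model at non-negative $\mathbf z$ in the stated region, and then lifting a polymer sample to a sample from $\mu_G$.

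The polymer decomposition from the proof of Theorem~\ref{thm:Holant_poly_FPTAS} writes each configuration $\sigma\in D^E$ as an all-zero background plus a compatible family of connected polymers $\gamma$ carrying non-zero spins, with weight
\[
|w(\gamma,\mathbf z)|\;\leq\;\Bigl(\maxValueOne\cdot\max_{i\ge 1}\tfrac{z_i}{z_0}\Bigr)^{|\gamma|}.
\]
Combined with the standard estimate of $(\ex\Delta)^k$ connected subgraphs of size $k$ through a fixed vertex and with the $\kappa^k$ choices of non-zero spin labels on the edges of such a polymer, the hypothesis $\max_i z_i/z_0\le((\Delta\kappa)^3\ex^5 \maxValueOne^2)^{-1}$ is calibrated so that Chen et al.'s strong polymer condition is satisfied with a constant factor of slack. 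The cubes on $\Delta$ and $\kappa$ and the exponent $5$ on $\ex$ are precisely what that stronger condition requires in order to control simultaneously the sum of weighted polymers through a vertex and the coupling argument underlying their fast sampler.

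Once the strong condition is verified, Chen et al.~\cite{2019:Chen:fast_algorithms_low_temperature} directly yield an $\varepsilon$-sampler for the polymer Gibbs measure $\nu$ in the claimed time. To sample from $\mu_G$ I would, given a polymer configuration $\{\gamma_1,\dots,\gamma_k\}$ drawn from $\nu$, independently decorate each $\gamma_j$ by choosing a non-zero spin assignment on its edges proportional to $\prod_{v\in V(\gamma_j)}f_v(\cdot)$, and assign spin $0$ to every remaining edge. Because each $f_v$ is polynomially computable and the polymers are of poly-logarithmic size with high probability in this regime, this post-processing costs $\mathcal O(n)$ overall and preserves the $\varepsilon$-total-variation guarantee. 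The FPRAS then follows from the standard counting-from-sampling reduction along the self-reducible polymer hierarchy, contributing the additional $n/\varepsilon^2$ and logarithmic factors.

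The main obstacle is the meticulous verification of the strong polymer condition with constants sharp enough to produce the stated clean bound, in particular tracking the $\kappa$-dependence that is absent from the binary-spin polymer models to which Chen et al.'s theorem is customarily applied. Each edge of a polymer may carry any of $\kappa$ distinct non-zero values, so the effective number of spin-labelled polymer states through a fixed vertex is inflated by a factor of $\kappa^{|\gamma|}$, which must be reabsorbed into the weight bound; this is what drives the cubic (rather than linear) dependence on $\kappa$ in the hypothesis. Once this bookkeeping is in place the algorithmic content is essentially black-box from~\cite{2019:Chen:fast_algorithms_low_temperature}.
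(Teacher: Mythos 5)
Your high-level plan---translate $Z_{\mathcal F}(G,\pi,\mathbf z)$ to the polymer model, check the strengthened polymer condition of Chen et al., then use their Markov chain framework to sample and the standard annealing reduction to count---is indeed the route the paper takes. But there is an inconsistency in how you set up the polymer model that produces an extraneous and unjustified post-processing step.

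In the paper's polymer model for Holant polynomials (Section~\ref{sec:holant_polynomials}) a polymer is a \emph{pair} $\dompoly=(\gamma,\phi_\gamma)$: a connected edge-induced subgraph together with an edge-colouring $\phi_\gamma\colon E(\gamma)\to[\kappa]$. Consequently the set of compatible families $\mathcal I_\kappa(G)$ is \emph{already} in bijection with the set of assignments $\sigma\in D^E$ (Lemma~\ref{lem:Holant_polynomial_to_polymer_partition_function}), so a draw from the polymer Gibbs measure $\mu_{\mathcal C_\kappa(G)}$ translates to a draw from $\mu_G$ with no further work: the Markov chain of Section~\ref{sec:random} acts directly on $\mathcal I_\kappa(G)$, and the factor $\kappa^{|E(\gamma)|}$ shows up in the \emph{count} of polymers through a vertex (Corollary~\ref{lem:connected_subgraph_bound_kappa}), not in the weight. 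You, by contrast, first describe polymers as ``carrying non-zero spins'' and fold $\kappa^k$ into the entropy bound---consistent with the coloured model---but then propose, after sampling a polymer configuration, to ``independently decorate each $\gamma_j$ by choosing a non-zero spin assignment on its edges proportional to $\prod_{v}f_v(\cdot)$.'' If your polymers are coloured, this decoration step does nothing and should not be there; if they are uncoloured (weights summing over colourings), then the decoration is necessary but (i) the correct conditional law must also weight by $\prod_i (z_i/z_0)^{|\cdot|_i}$, not just by $\prod_v f_v$, and (ii) the claim that this ``costs $\mathcal O(n)$ overall'' because polymers are ``of poly-logarithmic size with high probability'' is not substantiated---sampling a colouring of a polymer with $m$ edges from a distribution over $\kappa^m$ options is not obviously cheap, and you would need to argue about the size distribution of sampled polymers and give an efficient per-polymer sampler. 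The paper sidesteps all of this by working with coloured polymers throughout.

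A second, smaller gap: Chen et al.'s framework requires verifying \emph{two} separate conditions---the paper's Definition~\ref{cond:mixing} (mixing condition, controlling the path-coupling contraction) and Definition~\ref{cond:sampling} (polymer sampling condition, ensuring each Markov-chain step runs in expected $\mathcal O(1)$ time). Your proposal collapses these into a single ``strong polymer condition'' and never states either precisely, let alone checks them. The constants $(\Delta\kappa)^3 e^5 r_1^2$ in the hypothesis are specifically calibrated so that $\Phi_\pi(\dompoly,\mathbf z)\le ((\Delta\kappa)^3 e^5)^{-|E(\gamma)|}$ holds (sampling condition with $\tau\ge 5+3\ln(\kappa\Delta)$) \emph{and} $\sum_{\dompoly'\not\sim\dompoly}|E(\gamma')|\Phi_\pi(\dompoly',\mathbf z)\le\frac12|E(\gamma)|$ (mixing condition), and the derivation uses $|V(\gamma)|\le|E(\gamma)|+1$ together with $r_1\ge 1$ to cancel the $r_1^{|V(\gamma)|}$ term. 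These computations are the actual content of the proof and you have not carried them out.
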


Again, our technique also applies to Holant problems.
\begin{theorem}\label{thm:holant_random}
 Let $G$ be a graph of maximum degree $\Delta$, $\mathcal F\subseteq\mathcal F_0$ such that $r(\mathcal{F})\leq ((\Delta\domainvar)^{-\frac{3\Delta}{2}}\ex^{-\frac{5\Delta}{2}}) $ and $\pi\colon V(G)\rightarrow \mathcal F$. There exists an $\varepsilon$-sampling algorithm from the distribution $\mu_G$ for $Z_{\mathcal F}(G,\pi)$ with run-time in $\mathcal O(\Delta n\log(n/\varepsilon))$. Furthermore, $Z_{\mathcal F}(G,\pi)$ admits an FPRAS with run-time $\mathcal O(\Delta n^2/\varepsilon^2\log^2(\Delta n/\varepsilon))$.
\end{theorem}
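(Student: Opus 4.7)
The plan is to reduce the pure Holant problem $Z_{\mathcal F}(G,\pi)$ to the Holant polynomial setting of Theorem~\ref{thm:holant_poly_random} via a gauge transformation that absorbs a carefully chosen external field $\mathbf z$ into the vertex signatures. This mirrors the way Theorem~\ref{thm:Holant_FPTAS} is obtained from Theorem~\ref{thm:Holant_poly_FPTAS}: a pure Holant problem is the $\mathbf z=\mathbf 1$ case of a polynomial, but conversely, one can rewrite it as a polynomial with \emph{any} $\mathbf z$ by pushing compensating weights back into the vertex functions, and I will use this freedom to land inside the zero-free region of Theorem~\ref{thm:holant_poly_random}.

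Concretely, I would pick a real parameter $c\in(0,1)$ and set $z_0=1$, $z_i=c$ for $1\le i\le\domainvar$. For each vertex $v$, define the rescaled signature
\[
	g_v(\sigma|_{\adjedges(v)})=f_v(\sigma|_{\adjedges(v)})\prod_{e\in\adjedges(v)}z_{\sigma(e)}^{-1/2},
\]
splitting each edge weight equally between its two endpoints. Writing $\pi'(v)=g_v$ and $\mathcal F'$ for the resulting signature class, the per-configuration weight on each side is identical, so $Z_{\mathcal F'}(G,\pi',\mathbf z)=Z_{\mathcal F}(G,\pi)$ and the two Gibbs measures coincide; any $\varepsilon$-sampler for the polymer-model Gibbs distribution built from $(G,\pi',\mathbf z)$ therefore directly gives an $\varepsilon$-sampler for $\mu_G$. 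Moreover $g_v(\mathbf 0)=f_v(\mathbf 0)\ne 0$ and $g_v$ remains polynomially computable, so $\mathcal F'\subseteq\mathcal F_0$.

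The main step is to tune $c$ so that $(G,\pi',\mathbf z)$ lands in the region allowed by Theorem~\ref{thm:holant_poly_random}. A short computation gives $|g_v(\sigma)/g_v(\mathbf 0)|\le r(\mathcal F)\,c^{-\Delta/2}$, so $c\ge r(\mathcal F)^{2/\Delta}$ forces $r(\mathcal F')\le 1$ and hence $\maxValueOne=1$ for the new instance. The field condition of Theorem~\ref{thm:holant_poly_random} then reduces to $c\le((\Delta\domainvar)^3\ex^5)^{-1}$. Both inequalities can be simultaneously met precisely when
\[
	r(\mathcal F)^{2/\Delta}\le((\Delta\domainvar)^3\ex^5)^{-1},\qquad\text{i.e.,}\qquad r(\mathcal F)\le(\Delta\domainvar)^{-3\Delta/2}\ex^{-5\Delta/2},
\]
which is exactly the hypothesis of the theorem. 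Setting $c=r(\mathcal F)^{2/\Delta}$ and invoking Theorem~\ref{thm:holant_poly_random} on $(G,\pi',\mathbf z)$ then yields the claimed $\varepsilon$-sampler and FPRAS with the stated running times, since the transformation preserves $\Delta$, $\domainvar$, and $n$.

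The main obstacle I expect is not the inequalities themselves but careful bookkeeping for the weight-splitting, especially for loops where a single edge contributes twice at the same endpoint; one must verify that the telescoping identity $\prod_{v}\prod_{e\in\adjedges(v)}z_{\sigma(e)}^{1/2}=\prod_{e\in E}z_{\sigma(e)}$ still holds in that case, so that the partition functions on the two sides genuinely agree term by term. Once this identity is established, the reduction is routine, and the hypothesis on $r(\mathcal F)$ is precisely what is required to make the two-sided constraint on $c$ feasible.
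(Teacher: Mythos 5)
Your argument is correct, and it takes a genuinely different route from the paper's own proof. The paper proves Theorem~\ref{thm:holant_random} directly: it considers the polymer model at $\mathbf z = \mathbf 1$, bounds $\Phi_\pi(\dompoly,\mathbf 1) \leq r(\mathcal F)^{|V(\gamma)|}$, combines this with the observation $|V(\gamma)| \geq 2|E(\gamma)|/\Delta$ (since $r(\mathcal F)<1$ makes this the relevant direction) to deduce $\Phi_\pi(\dompoly,\mathbf 1)\leq ((\Delta\kappa)^3\ex^5)^{-|E(\gamma)|}$, and then checks that this implies the sampling and mixing conditions (Definitions~\ref{cond:sampling} and~\ref{cond:mixing}) exactly as in the proof of Theorem~\ref{thm:holant_poly_random}. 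You instead run a gauge transformation that splits each edge weight $z_{\sigma(e)}^{-1/2}$ between its two endpoints, reinterprets the pure Holant as a Holant polynomial at $\mathbf z=(1,c,\dots,c)$ with rescaled signatures, and invokes Theorem~\ref{thm:holant_poly_random} as a black box. Both proofs ultimately verify the same inequality on polymer weights --- in fact if you expand your rescaled weights $\Phi_{\pi'}(\dompoly,\mathbf z)$ you will find the $c$ factors cancel and you recover exactly $\Phi_\pi(\dompoly,\mathbf 1)$, so the underlying polymer model is unchanged --- but your packaging as a reduction is clean and makes the hypothesis $r(\mathcal F) \leq (\Delta\kappa)^{-3\Delta/2}\ex^{-5\Delta/2}$ transparent as the condition under which the two-sided constraint on $c$ has a solution. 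One small caveat: your remark that this ``mirrors'' the derivation of Theorem~\ref{thm:Holant_FPTAS} from Theorem~\ref{thm:Holant_poly_FPTAS} is not quite how the paper gets that result either (the paper proves a separate zero-freeness lemma, Lemma~\ref{lem:Holant_zeros}, rather than performing a gauge reduction), but this does not affect the validity of your argument. You should also note, as you do, that you need $f_v$ to take non-negative real values (assumed throughout Section~\ref{sec:random}) so that $g_v$ does too and $\mathbf z\in(\R_{\geq 0})^{\kappa+1}$ as required by Theorem~\ref{thm:holant_poly_random}.
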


\subsection{Methodology}
\label{sec:intro_method}
The central part of our technical approach is to establish a polymer model for Holant problems and Holant polynomials, which translates these into independent set problems. Polymer models are an established tool for the study of partition functions in statistical physics originating from the work of Gruber and Kunz~\cite{1971:Gruber:properties_polymer_systems} and Koteck{\'y} and Preiss~\cite{1986:Kotecky:cluster_expansion_polymer_models}. A polymer model consists of a finite set~$K$ of elements called \emph{polymers} and a symmetric and reflexive binary \emph{incompatibility} relation denoted by~$\not\sim~\subseteq K\times K$. Based on this relation, $\mathcal I(K)$ denotes the set of all subsets, $\Gamma \subseteq K$ of pairwise compatible polymers, which we will call \emph{families} of pairwise compatible polymers.
Given an assignment of weights (\emph{polymer functionals}) $\Phi\colon K\rightarrow \mathbb C$ to the polymers yields the \emph{polymer partition function} $Z(K,\Phi)$ by 
\[
	Z(K,\Phi)=\sum_{\Gamma\in \mathcal I(K)}\prod_{\gamma \in \Gamma} \Phi(\gamma)\,.
\]

Given a polymer model we can construct the \emph{polymer graph} $(K,\not\sim)$ where the polymers represent the vertices of this graph and the edges are given by the incompatibility relation. In this way we observe that the families of compatible polymers are the independent sets of the polymer graph. Weighted sums of independent sets are naturally expressed by the independence polynomial of a graph~$G$ defined on $\mathbf{z}=(z_v)_{v\in V(G)}$ as
\[
	Z_\Independent(G,\mathbf{z})=\sum_{I\in\Independent(G)}\prod_{v\in I}z_v ,
\] 
where $\Independent(G)$ is the set of independent sets of~$G$. From this definition observe that the partition function of a polymer system is the independence polynomial of the polymer graph, where each vertex~$\gamma$ in the polymer graph has weight $\Phi(\gamma)$.

As a simple example of how to translate a Holant polynomial to a polymer system consider again the matching polynomial~$Z_\Matchings$. 
By converting $G$ to its line graph $G'$ observe that 
$Z_\Matchings(G,(1,z))=Z_\Independent(G',\mathbf{z})$, where $\mathbf{z}=(z)_{v\in V(G)}$. Our general method of Holant polynomials applied to the matching polynomial precisely captures this conversion:
the polymer graph $(K,\not\sim)$ is the line graph~$G'$ of $G$, polymers are the edges of $G$ and the weight function is $\Phi(\gamma)=z$. 

In Section~\ref{sec:holant_polynomials} we show how this translation can be extended to a general class of Holant polynomials.

\paragraph{Deterministic Algorithms.}

The strategy to derive a multiplicative approximation for $Z_\SetFunctions$ is to obtain an additive approximation for $\log Z(G,\Phi)$. This idea originates from the work of Mayer and Montroll~\cite{1941:Mayer:molecular_distribution}, which gave a convenient infinite series representation of $\log Z$ called the \emph{cluster expansion} and observed that one can obtain good evaluations of $\lim_{V(G)\rightarrow\infty} \frac{1}{V(G)}\log Z(G,\Phi)$ using the cluster expansion of $\log Z(G,\Phi)$.

There are two main advantages of translating a partition function to a polymer system. The first one is the convenient-to-use theorem of Koteck{\'y} and Preiss \cite[Theorem~1]{1986:Kotecky:cluster_expansion_polymer_models} that establishes conditions for zero-free regions of $Z(G,\Phi)$ and absolute convergence of the cluster expansion of $\log Z(G,\Phi)$.
The second advantage is that when the cluster expansion of $\log Z(G,\Phi)$ converges absolutely then it is the same series as the Taylor series expansion of $\log Z(G,\Phi)$, as Dobrushin observes~\cite{1996:Dobrushin:Estimates_of_semiinvariants}. This allows us to efficiently compute the first terms of the Taylor expansion of $\log Z(G,\Phi)$ using the cluster expansion and obtain a sufficient additive approximation for $\log Z(G,\Phi)$. 

The first ones to derive approximation schemes employing polymer systems were Helmuth et al.~\cite{2018:Helmuth:algorithmic_pirogov_sinai_theory}. With~\cite[Theorem~2.2]{2018:Helmuth:algorithmic_pirogov_sinai_theory} they give conditions under which such a system can be used to obtain deterministic approximation algorithms. Their theorem only applies to polymer systems for vertex spin systems and not to our polymer model. To this end, we extend their theorem and give conditions under which partition functions of general polymer systems can be efficiently approximated (Theorem~\ref{thm:polymers_fptas}). We remark that Theorem~\ref{thm:polymers_fptas} is not restricted to polymers for Holants and it might be of independent interest.

\paragraph{Randomised Algorithms.}

Our fast randomised algorithms are based on the \emph{Markov chain Monte Carlo method} (MCMC). Given a polymer system $(K,\not\sim)$ with weight function $\Phi$ we define a Markov chain with $\mathcal I(K)$ as state space and stationary distribution $\mu_K$, where
\[\mu_K(\Gamma)=\frac{\prod_{\gamma\in\Gamma}\Phi(\gamma)}{Z(K,\Phi)}.\]
We identify the conditions that constitute this chain as \emph{rapidly mixing}, i.e. the distribution of its state after polynomially many transitions is $\varepsilon$-close to its stationary distribution. A random sample~$\Gamma$ is obtained by running this chain starting with $\emptyset$ for $\mathcal O(\Delta n\log(n/\varepsilon))$ time. Our polymer representation for Holants yields a bijection between the families in $\mathcal I(K)$ and assignments $\sigma$ of the edges, thus a sampling algorithm for polymers implies a sampling algorithm for Holants.

Chen et al. studied a Markov chain for polymers representing a vertex spin system and established conditions to efficiently sample using this chain~\cite{2019:Chen:fast_algorithms_low_temperature}. We obtain our algorithms by adapting their approach to polymer models that originate from Holants. The sampling algorithms give fast randomised approximate counting algorithms. We discuss the technical details in Section~\ref{sec:random}.

\subsection{Related literature and discussion of our results}

\paragraph{Holant polynomials}

Most Holant polynomials considered in the literature study special cases of graph polynomials. Among them, the matching polynomial $Z_{\Matchings}(G,\pi,(z,1))$ of a graph $G$ is perhaps the most studied from an algorithmic perspective. It was first studied in statistical physics as the partition function of the monomer-dimer model~\cite{1972:Heilmann:Monomer_Dimer_systems}. The first approximation algorithm for the matching polynomial was the MCMC algorithm of Jerrum and Sinclair~\cite{1989:Jerrum:approximating_permanent}. Barvinok~\cite{2016:Barvinok:computing_permanent_complex_matrices} (see also \cite[Section~5.1]{2016:Barvinok:combinatorics_complexity_partition_functions}) was the first to connect absence of zeros with approximation algorithms. He gave a quasi-polynomial algorithm for all $z\in\C$ with $|z|\leq (1/(4\Delta-1))$ by computing $\mathcal{O}(\log |V(G)|)$ coefficients of the Taylor series expansion of the logarithm of the matching polynomial. Patel and Regts~\cite{2017:Patel:poly_time_approx_partition_function} refined the computation of coefficients and gave an FPTAS for this region. On the other hand, Bez{\'{a}}kov{\'{a}} et al. \cite{2018:Bezakova:complexity_approx_matching_polynomial} showed that it is \nump{}-hard to approximate $Z_{\Matchings}(G,(1,z))$ when $z\in \mathbb R$ and $z<-1/(4(\Delta-1))$. 

There is an inherent connection between the algorithmic results and the location of the roots of the matching polynomial. It is known that all roots are negative reals with $z<-1/(4\Delta -1)$~\cite{1972:Heilmann:Monomer_Dimer_systems}. Directly applied to the matching polynomial Theorem~\ref{thm:Holant_poly_FPTAS} yields an FPTAS for the region $|z|<1/(\Delta2\ex^2)$. However, adjusting the calculations to exploit the structure of matchings we obtain an improved bound of $|z|<1/(\ex(2\Delta -1))$ (see Section~\ref{sec:intro-Boolean}).

Another graph polynomial studied in the literature is the edge cover polynomial. Translated to our framework, this is the  Holant polynomial $Z_{\mathcal C}(G,\pi,(1,z))$ where $\mathcal C$ contains the functions that evaluate to 1 if at least one of the inputs is 1. Liu, Lu and Zhang~\cite{2014:Liu:FPTAS_weighted_edge_covers} discovered an FPTAS for the edge cover polynomial for $z\in\R_{\geq0}$. Csikv\'{a}ri and Oboudi~\cite{2011:Csikvari:roots_edge_cover_polynomial} showed that the roots of the edge cover polynomial - contrary to the matching polynomial - can take imaginary values and are contained in  $\{z\in\C\mid |z|\leq 5.1\}$.

A different kind of Holant polynomial was studied by Lu, Wang and Zhang~\cite{2014:Lu:FPTAS_weighted_fibonacci_gates}. In their setting each edge contributes its own individual weight instead of each domain element as in our case. Their results include approximation algorithms for real weighted Holants with a special type of signatures called Fibonacci gates.

\paragraph{Holant problems}

There is an assiduous ongoing effort to characterise the computational complexity of Holant problems. The literature on exact computations of Holant problems is extensive and most results restrict the signatures to be of Boolean domain and symmetric, i.e. their value only depends on the Hamming weight of their input~\cite{2016:Cai:dichotomy_vanishing_signatures}. Due to the complexity of the problem only few results go beyond symmetric signatures~\cite{2018:Cai:dichotomy_real_Holant,2017:Lin:complexity_Holant_Boolean_weights,2018:Backens:dichotomy_complex_Holant} or consider higher domain symmetric functions \cite{2013:Cai:dichotomy_Holant_domain_3,2014:Cai:complexity_edge_colorings_higher_domain_Holant}.

When considering the complexity of approximating Holant problems there are classical results targeting particular cases such as matchings~\cite{1989:Jerrum:approximating_permanent}, weighted even subgraphs~\cite{1993:Jerrum:poly_time_approximation_Ising} and edge covers~\cite{1997:Bubley:graph_orientations_no_sink} and newer results on restricted classes of Holant problems~\cite{2013:McQuillan:approximating_Holant,2016:Huang:canonical_paths_art_science,2019:Cai:approximability_six_vertex_model}. Recently, Guo et al.~\cite{2019:Guo:Holant} gave a complexity characterisation for a subclass of Boolean symmetric functions called ``generalised second order recurrences''. Their results rely on proving zero-free regions and only apply to Holant problems where each vertex of the input graph is mapped to the same signature. They remark that it is not clear how to get approximation algorithms when the vertices are mapped to different signatures. Theorem~\ref{thm:Holant_FPTAS} partially addresses this as it allows for mixed signature classes.

To our knowledge, the only result on zero-free regions of Holant problems with arbitrary domain size that includes non-symmetric signatures and the possibility to assign mixed signatures to vertices is due to Regts~\cite{2018:Regts:Zero-free_regions}. As discussed in \cite{2017:Patel:poly_time_approx_partition_function} this result can be used to derive FPTAS' for such Holant problems. The results in~\cite{2018:Regts:Zero-free_regions} require the signatures to output a complex value close to~1 on any input which makes them incomparable to Theorem~\ref{thm:Holant_FPTAS}, requiring our signatures to output only complex values close to 0. 
An advantage of Theorem~\ref{thm:Holant_FPTAS} is that it allows for signatures to encode hard constraints, i.e. to take the value 0. 

As an interesting side note, we mention possible implications for the problem of counting perfect matchings, a central problem in computational counting whose complexity remains unresolved. There is an expanding list of approximation problems that are equivalent to counting perfect matchings \cite{2013:McQuillan:approximating_Holant,2019:Guo:Holant,2019:Cai:Pefect_Matchings}. In one of these results, Guo et al.~\cite{2019:Guo:Holant} show approximation-equivalence between counting perfect matchings and some classes of Holant problems. These signature families are not included in Theorem~\ref{thm:Holant_FPTAS} for the Holant problems but are captured by Theorem~\ref{thm:Holant_poly_FPTAS}. Thus, if one could show a reduction to a Holant polynomial for a value of $z$ within the ones in Theorem~\ref{thm:Holant_poly_FPTAS}, then one would get an approximation algorithm for counting perfect matchings.

\paragraph{Further applications of our technique}

As we already remarked, Theorem~\ref{thm:polymers_fptas} applies to general polymer systems, that is the polymer system does not need to originate from a graph theoretic problem.
In Section~\ref{sec:outlook} we discuss the potential extensions of our technique to the problem of counting weighted solutions to a system of sparse linear equations. This problem was recently studied by Barvinok and Regts~\cite{2019:Barvinok:integer_points} where they obtained zero-free regions and approximation algorithms. We show how to express this problem naturally as a polymer system and use the Koteck{\'y}--Preiss condition to obtain bounds for zero-free regions and deterministic algorithms. Although our bounds for the general case are weaker than the ones in~\cite{2019:Barvinok:integer_points} we can get improved bounds for some cases. In particular, one can define a univariate polynomial expressing perfect matchings in hypergraphs as deviations from a ground perfect matching. For this polynomial we use our technique and improve the bounds in~\cite{2019:Barvinok:integer_points} (see Section~\ref{sec:hpm} for details). It remains open to see if, by refining the analysis, one can obtain better bounds for the general case.

\section{Boolean Holant polynomials}
\label{sec:intro-Boolean}
In this section we illustrate our technique on the example
of Boolean Holant polynomials. The results sketched here are subsumed by the general results of Section~\ref{sec:holant_polynomials}. 
Formally, for a graph $G=(V,E)$ and an assignment $\pi\colon V \to \SetFunctions$ the \emph{Boolean Holant polynomial} is defined as
\[
	Z_\SetFunctions(G,\pi,(z_0, z_1))=\sum_{\sigma\in \{0,1\}^E} \prod_{v\in V}\left(f_v(\sigma|_{\adjedges(v)}) z_0^{|\sigma|_0} z_1^{|\sigma|_1}\right),
\]
where for a vertex $v \in V$ the set of its incident edges is denoted by $E(v)$ and for $i\in \{0,1\}$ the number of edges mapped to $i$ by $\sigma$ is denoted by $|\sigma|_i$. We assume without loss of generality $z_0\neq0$ and translate the Boolean Holant polynomial to a univariate polynomial of $z=z_1/z_0$. Furthermore, we assume the empty graph to yield a function value of $1$, i.e.~$f(\mathbf 0)=1$, and denote the class of such Boolean signatures by $\SetFunctions_1=\{f\mid f(\mathbf 0)=1\}$. Therefore, for $\SetFunctions \subseteq \SetFunctions_1$ the Holant polynomial $Z_\SetFunctions(G,\pi,(z_0, z_1))$ translates to $Z_{\SetFunctions}(G,\pi,z)$.\par
%
%
We define the set $\mathcal{C}(G)$ of \emph{polymers} of~$G$ to be all connected subgraphs of~$G$ containing at least one edge. Such general subgraphs without isolated vertices can be seen as \emph{edge-induced} and capture the characteristics of the Holant framework. The basic idea of this definition is that assignments~$\sigma$ can be expressed by their corresponding edge-induced connected subgraphs. 
Two polymers $\gamma,\gamma'\in \mathcal C(G)$ are \emph{compatible} if their vertex-sets are disjoint, otherwise they are \emph{incompatible}, i.e.~$\gamma\not\sim\gamma'$ if and only if $V(\gamma)\cap V(\gamma')\neq\emptyset$ and thus $\not \sim$ is reflexive and symmetric. We denote by $\mathcal{I}(G)$ 
the collection of all finite sets of polymers from $\mathcal{C}(G)$ that are pairwise compatible. By definition, each set $\Gamma\in\mathcal{I}(G)$ corresponds to a collection of vertex-disjoint connected subgraphs of~$G$. Hence, we interpret $\Gamma$ as a graph and write $V(\Gamma)$ and $E(\Gamma)$ to refer to the vertices and edges of $\Gamma$, respectively. 


For a mapping $\pi\colon V\rightarrow \mathcal F_0$ we define a corresponding family of polymer functionals (weights associated to the polymers) $\Phi_\pi(\cdot,z)\colon \mathcal C(G)\rightarrow \mathbb C$  with $z\in\C$   by
\[
	\Phi_\pi(\gamma, z)=z^{|E(\gamma)|}\prod_{v\in V(\gamma)} f_v(\mathds{1}_{E(\gamma)}),
\] 
where $\mathds{1}_A$ denotes the function which assigns~$1$ to every entry that corresponds to the elements of $A$ and~$0$ otherwise. 
The partition function for this graph polymer model then translates to
\[
	Z(\mathcal C(G),\Phi_\pi(\cdot,z)) = \sum_{\Gamma\in \mathcal{I}(G)} \prod_{\gamma\in\Gamma} \Phi_\pi(\gamma, z) = \sum_{\Gamma\in \mathcal{I}(G)} \prod_{\gamma\in\Gamma} 
	z^{|E(\gamma)|}\prod_{v\in V(\gamma)} f_v(\mathds{1}_{E(\gamma)})\,.
\]

Every assignment~$\sigma$ uniquely corresponds to the set of connected subgraphs induced by the edges $e\in E$ with $\sigma(e)=1$. This yields a bijective mapping from the set of assignments $\sigma\in\{0,1\}^{V}$ to the set of families $\Gamma\in\mathcal I(G)$. We can further observe that the weight contribution of an assignment~$\sigma$ in the Holant polynomial is equal to the weight contribution of its respective family~$\Gamma$ to the polymer partition function. Hence,
\[
	Z_{{\SetFunctions}_0}(G,\pi,z)=Z(\mathcal C(G),\Phi_\pi(\cdot,z)).
\]
One of the huge benefits a representation as a polymer partition function yields is the access to a big repertoire of results from statistical physics on abstract polymer systems. The \emph{cluster expansion}~\cite{friedli_velenik_2017} yields a representation of $\log Z(\mathcal C(G),\Phi_\pi(\cdot,z))$ by sums of weights of connected subgraphs of the polymer graph, but only for values of $z$ in a region where $Z(\mathcal C(G),\Phi_\pi(\cdot,z))$ has no roots. In such a zero-free region it is also known that an additive approximation for the logarithm yields a multiplicative approximation to the original function. Proving that the partition function $Z(\mathcal C(G),\Phi_\pi(\cdot,z))$ has no roots in a certain region is hence central to deriving approximation results. In this regard, the polymer representation enables the application of the following useful result which holds for general polymer models $(K,\not\sim, \Phi)$.
\begin{theorem}[Koteck{\'y} and Preiss~{\cite[Theorem~1]{1986:Kotecky:cluster_expansion_polymer_models}}]
	\label{thm:cluster_expansion_polymer_models}
	If there exists a function $a\colon K\rightarrow [0,\infty)$ such that for all $\gamma \in K$,
	\[
	\sum_{\gamma' \not\sim \gamma} |\Phi(\gamma')| \ex^{a(\gamma')} \leq a(\gamma)\,,
	\]
	where the sum is over all polymers $\gamma'$ incompatible with $\gamma$, then the cluster expansion for $\log Z(K,\Phi)$ converges absolutely and, in particular, $Z(K,\Phi) \neq 0$.
\end{theorem}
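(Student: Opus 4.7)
The plan is to establish three things in sequence: that the cluster expansion is a valid formal identity for $\log Z(K,\Phi)$; that under the Kotecký--Preiss hypothesis it converges absolutely; and that this forces $Z(K,\Phi)\neq 0$.

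First I would set up the standard cluster expansion formalism. Call an ordered tuple $X=(\gamma_1,\dots,\gamma_n)\in K^n$ a \emph{cluster} if the graph $H(X)$ on vertex set $\{1,\dots,n\}$ with edges $\{i,j\}$ whenever $\gamma_i \not\sim \gamma_j$ (and $i\neq j$) is connected. Define the Ursell function
\[
\phi(X) \;=\; \frac{1}{n!}\sum_{\substack{G\subseteq H(X)\\ G\text{ connected spanning}}} (-1)^{|E(G)|}.
\]
Applying $\log(1+y)=\sum_{k\geq 1}(-1)^{k+1}y^k/k$ to $Z(K,\Phi)-1 = \sum_{\emptyset\neq\Gamma\in\mathcal I(K)}\prod_{\gamma\in\Gamma}\Phi(\gamma)$ and rearranging the resulting sum by the connected components of the incompatibility graph of each unordered multiset of polymers yields, as a formal identity, $\log Z(K,\Phi) = \sum_{X\text{ cluster}} \phi(X)\prod_{i=1}^{|X|}\Phi(\gamma_i)$.

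The main technical step is the \emph{pinned bound}: for every $\gamma_0\in K$,
\[
\sum_{X\ni \gamma_0} |\phi(X)|\prod_{\gamma\in X}|\Phi(\gamma)|\,\mathrm{e}^{a(\gamma)} \;\leq\; a(\gamma_0).
\]
To prove it I would invoke the Penrose tree--graph identity, which lets us bound $n!\,|\phi(X)|$ by the number of spanning trees of $H(X)$, thereby replacing the sum over clusters by a sum over trees of polymers rooted at $\gamma_0$. The sum can then be organised generation by generation: at each vertex of the tree one uses the Kotecký--Preiss hypothesis $\sum_{\gamma'\not\sim\gamma}|\Phi(\gamma')|\mathrm{e}^{a(\gamma')}\leq a(\gamma)$ to absorb the contribution of its children. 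Induction on the depth of the tree (equivalently on $|X|$) then yields the stated bound, with the root factor $\mathrm{e}^{a(\gamma_0)}$ being precisely the slack that makes the induction close. The combinatorial bookkeeping in the Penrose reduction, and in particular ensuring that the constants are sharp enough to obtain $a(\gamma_0)$ on the right-hand side rather than a larger multiple of it, is the main obstacle.

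Finally, to deduce $Z(K,\Phi)\neq 0$, I would consider the one-parameter deformation $Z(K,t\Phi)$ for $t\in[0,1]$. Since $|t\Phi(\gamma)|\leq|\Phi(\gamma)|$, the Kotecký--Preiss hypothesis continues to hold along the entire path with the same $a$, so the pinned bound shows that the cluster expansion of $\log Z(K,t\Phi)$ is an absolutely (and uniformly) convergent analytic function of $t$ on a neighbourhood of $[0,1]$. In particular $\log Z(K,t\Phi)$ is finite everywhere on $[0,1]$, so $Z(K,t\Phi)\neq 0$ on this path; specialising to $t=1$ gives the result. If $K$ is infinite, the same argument is applied to an exhaustion by finite subsets $K_n\uparrow K$, the uniform pinned bound furnishing the control needed to pass to the limit.
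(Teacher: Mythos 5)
The paper does not prove this theorem: it is stated and used as a black-box citation of Koteck\'y and Preiss, so there is no internal argument to compare against. Your sketch is a reasonable account of a standard modern route to the Koteck\'y--Preiss criterion --- set up the cluster expansion as a formal identity, establish a pinned tail bound by controlling $|\phi(X)|$ with a Penrose/Rota tree-graph inequality and running an induction over trees that closes via the hypothesis, and then deduce $Z(K,\Phi)\neq 0$ from the deformation $t\mapsto Z(K,t\Phi)$ --- and it is essentially the treatment one finds in the textbook of Friedli and Velenik cited elsewhere in the paper. Two remarks. First, the original 1986 Koteck\'y--Preiss proof does not pass through the Penrose identity; it is a shorter direct induction, so your route is genuinely different from (though equivalent in conclusion to) the cited source, and trades compactness for the extra robustness that tree-graph bounds provide. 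Second, the precise pinned inequality you assert, $\sum_{X\ni\gamma_0}|\phi(X)|\prod_{\gamma\in X}|\Phi(\gamma)|\ex^{a(\gamma)}\leq a(\gamma_0)$, is not quite the form the standard induction produces: the usual output places the exponential weight only on the non-root polymers, or appears as $\sum_{X\ni\gamma_0}|\phi(X)|\prod_{\gamma\in X}|\Phi(\gamma)|\leq|\Phi(\gamma_0)|\ex^{a(\gamma_0)}$, which is what feeds directly into both absolute convergence and the telescoping in the deformation argument. You are right that the combinatorial bookkeeping in the Penrose reduction is where the work lies; the inequality as you wrote it is exactly the place where that care needs to be exercised, and one should either re-derive it in that exact form or substitute the standard version.
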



We use this theorem to show that $Z(\mathcal C(G),\Phi_\pi(\cdot,z))\neq 0$ for types of signatures and regions of $z$ described by the following notation. For the Boolean case, for each signature $f\in\SetFunctions_1$ with arity $d$ it follows that $r(f)=\max\{f(\mathbf{x})\mid\mathbf{x}\in\{0,1\}^d\setminus \{\mathbf 0\}\}$, since $f(\mathbf{0})=1$. Recall that for a function class $\SetFunctions\subseteq\SetFunctions_1$ we defined $r(\SetFunctions)=\max\{r(f)\mid f\in\SetFunctions\}$.
\begin{theorem}\label{boolean_bound}
	For all graphs $G$ of maximum degree $\Delta$, all finite $\SetFunctions\subseteq\SetFunctions_1$, all $\pi\colon V(G)\rightarrow\SetFunctions$ and $z$ in $\{z\in \mathbb C\mid |z|\leq (\Delta\ex^{2}\maxValueOne(\maxValueOne+1))^{-1}\}$ with $\maxValueOne=\max\{1, r(\SetFunctions)\}$, the Taylor series expansion of $\log Z_{\mathcal F}(G,\pi,z)$ converges absolutely.
\end{theorem}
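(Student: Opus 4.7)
The plan is to verify the Koteck\'y--Preiss condition (Theorem~\ref{thm:cluster_expansion_polymer_models}) for the polymer system $(\mathcal{C}(G),\not\sim,\Phi_\pi(\cdot,z))$ constructed earlier in this section. Once this is done, Theorem~\ref{thm:cluster_expansion_polymer_models} gives absolute convergence of the cluster expansion of $\log Z(\mathcal{C}(G),\Phi_\pi(\cdot,z))=\log Z_{\SetFunctions}(G,\pi,z)$, and the discussion in Section~\ref{sec:intro_method} (Dobrushin) identifies the cluster expansion with the Taylor series of $\log Z_{\SetFunctions}(G,\pi,z)$ throughout the region of absolute convergence, which yields the claim.

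First I would establish a uniform bound on the polymer weights. For $\gamma\in\mathcal{C}(G)$ and any $v\in V(\gamma)$, at least one edge of $\gamma$ is incident to $v$, so the restriction of $\mathds{1}_{E(\gamma)}$ to $E(v)$ is non-zero. Combined with $f_v(\mathbf 0)=1$, this yields $|f_v(\mathds{1}_{E(\gamma)})|\leq r(f_v)\leq \maxValueOne$, and therefore
\[
|\Phi_\pi(\gamma,z)|\ \leq\ |z|^{|E(\gamma)|}\,\maxValueOne^{|V(\gamma)|}.
\]

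I would then take $a(\gamma)=|V(\gamma)|$ in Theorem~\ref{thm:cluster_expansion_polymer_models}. Since $\gamma\not\sim\gamma'$ exactly when $V(\gamma)\cap V(\gamma')\neq\emptyset$, a union bound over the shared vertex gives
\[
\sum_{\gamma'\not\sim\gamma}|\Phi_\pi(\gamma',z)|\,\ex^{a(\gamma')}\ \leq\ \sum_{v\in V(\gamma)}\ \sum_{\substack{\gamma'\in\mathcal{C}(G)\\ v\in V(\gamma')}}|z|^{|E(\gamma')|}(\maxValueOne\,\ex)^{|V(\gamma')|},
\]
so it suffices to show the inner sum is at most $1$ for every $v$, since then the left-hand side is at most $|V(\gamma)|=a(\gamma)$. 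Parametrising the inner sum by the edge count $k=|E(\gamma')|$ and using $|V(\gamma')|\leq k+1$ for a connected $\gamma'$, the inner sum is bounded by $\sum_{k\geq 1}N_k(v)\,|z|^{k}(\maxValueOne\,\ex)^{k+1}$, where $N_k(v)$ denotes the number of connected edge-induced subgraphs of $G$ with $k$ edges containing $v$. Inserting the standard estimate $N_k(v)\leq(\ex\Delta)^{k}$, valid for graphs of maximum degree $\Delta$, leaves a geometric series in $\ex^{2}\Delta\,\maxValueOne\,|z|$, which is at most $1$ throughout the region $|z|\leq(\Delta\ex^{2}\maxValueOne(\maxValueOne+1))^{-1}$ of the hypothesis.

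The main obstacle is the combinatorial counting together with the algebraic optimisation needed to recover exactly the factor $\maxValueOne(\maxValueOne+1)$ in the denominator. A na\"ive union bound together with the generic subgraph count $(\ex\Delta)^k$ and the choice $a(\gamma)=|V(\gamma)|$ produces a denominator of the form $\maxValueOne(\maxValueOne\ex+1)$; obtaining the sharper $\maxValueOne(\maxValueOne+1)$ claimed in the theorem requires either a tighter counting inequality for connected edge-induced subgraphs through $v$ or an optimised choice $a(\gamma)=c|V(\gamma)|$ with $c$ tuned so that the excess factors of $\ex$ are absorbed.
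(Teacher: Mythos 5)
Your strategy is structurally the same as the paper's: verify the Koteck\'y--Preiss condition for the polymer system $(\mathcal C(G),\not\sim,\Phi_\pi(\cdot,z))$, using Lemma~\ref{lem:connected_subgraph_bound} to count connected subgraphs through a shared vertex and summing a geometric series. You have also correctly identified where your execution falls short: the choice $a(\gamma)=|V(\gamma)|$ (your $c=1$) only produces a region of the form $|z|\lesssim(\Delta\ex^{2}\maxValueOne(\maxValueOne\ex+1))^{-1}$, which is strictly smaller than the stated $(\Delta\ex^{2}\maxValueOne(\maxValueOne+1))^{-1}$. So as written the argument does not quite prove the theorem.

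The resolution in the paper is simply to take $a(\gamma)=|E(\gamma)|$ rather than $|V(\gamma)|$. Since every polymer has at least one edge, $|V(\gamma)|\leq |E(\gamma)|+1\leq 2|E(\gamma)|$, so the factor $|V(\gamma)|$ that the union bound (equivalently, the $|V(\gamma)|\cdot\frac{(\ex\Delta)^i}{2}$ estimate on $|\mathcal C_\gamma(i)|$) produces on the left-hand side is absorbed by $2|E(\gamma)|$ on the right, with the factor $2$ exactly cancelling the $\tfrac 12$ in the Borgs--Chayes--Kahn--Lov\'asz bound. Using $a(\gamma')=|E(\gamma')|$ also saves one power of $\ex$ relative to $a(\gamma')=|V(\gamma')|$, since $\ex^{|E(\gamma')|}$ is a factor $\ex$ smaller than $\ex^{|V(\gamma')|}$ when $\gamma'$ is a tree. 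Then with $y=\ex^{2}\Delta\maxValueOne|z|$ the condition becomes $\frac{\maxValueOne}{2}\cdot\frac{y}{1-y}\leq\frac 12$, i.e.\ $y\leq(\maxValueOne+1)^{-1}$, giving exactly the claimed region. Your second proposed fix --- tuning $a(\gamma)=c|V(\gamma)|$ --- also works; indeed $c=\tfrac 12$ (which for trees agrees with $|E(\gamma)|$ up to an additive constant) already yields a region containing the one claimed, and the paper's more general Lemma~\ref{lem:holant_poly_zeros} performs essentially this optimisation over the constant before settling on $\alpha=1$ for readability. So the gap is real as the argument stands, but your diagnosis of its cause and both of your proposed repairs are correct.
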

To show this we also use the following result of Borgs, Chayes, Kahn and Lov{\'{a}}sz.
%
%
\begin{lemma}[Borgs, Chayes, Kahn and Lov{\'{a}}sz~{\cite[Lemma~2.1(b)]{2013:Borgs:convergence_graphs_bounded_degree}}]
	\label{lem:connected_subgraph_bound}
	In any graph $G$ of maximum degree $\Delta$, the number of connected subgraphs of $G$ with $m$ edges containing a fixed vertex is at most $\frac{1}{m+1}\binom{(m+1)\Delta}{m}<\frac{(\ex \Delta)^m}{2}$.
\end{lemma}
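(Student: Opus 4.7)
The plan is to prove the lemma by an injective encoding of rooted connected subgraphs as complete rooted plane $\Delta$-ary trees, followed by the classical Fuss-Catalan enumeration and an elementary Stirling estimate. First I would fix an arbitrary total ordering on $V(G)$, which induces an ordering of the (at most $\Delta$) neighbors of every vertex. Given a connected subgraph $H$ of $G$ with $m$ edges containing $v$, I would perform a canonical depth-first traversal of $H$ starting at $v$ (always choosing edges in the fixed order) to obtain a rooted spanning tree $T_H$ of $H$, and assign every non-tree (back) edge to its descendant endpoint in $T_H$. The encoding $\phi(H)$ is a complete rooted plane $\Delta$-ary tree with exactly $m+1$ internal nodes: the $|V(H)|$ vertices of $H$ contribute ``vertex-internal-nodes'', the $m-|V(H)|+1$ back edges contribute ``back-edge-internal-nodes''; at every vertex-internal-node labelled $u$ the $\Delta$ children-slots are indexed by the $\Delta$ global-neighbor-positions at $u$ in $G$, filled by the corresponding internal node if the edge of $G$ at that slot is a tree edge (descendant child) or a back edge with $u$ its descendant end, and by a leaf otherwise; back-edge-internal-nodes carry $\Delta$ leaf children.

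Injectivity of $\phi$ I would establish by a top-down reconstruction: starting from the root, which is labelled $v$, at each slot $i$ of a vertex-internal-node labelled $u$ the slot determines a specific neighbor $w$ of $u$ in $G$ (or a virtual slot if $u$ has fewer than $\Delta$ neighbors); if the child at slot $i$ is a leaf no edge is added, and if it is an internal node then $w$ either has already appeared as an ancestor on the root-to-current path (in which case the child is a back-edge-internal-node and the back edge $\{u,w\}$ is added to $H$) or has not (in which case the child is the vertex-internal-node for $w$, the tree edge $\{u,w\}$ is added, and the recursion descends). This unambiguously recovers $T_H$ together with all non-tree edges, and hence $H$. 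With injectivity in hand, the classical enumeration of complete rooted plane $\Delta$-ary trees with $n$ internal nodes gives $\tfrac{1}{(\Delta-1)n+1}\binom{\Delta n}{n}$, which with $n=m+1$ equals $\tfrac{1}{m+1}\binom{(m+1)\Delta}{m}$; this identity can be proved by Lagrange inversion applied to $y=1+zy^\Delta$, or equivalently by Raney's cycle lemma, and yields the first inequality.

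For the second inequality I would apply the standard estimate $\binom{n}{k}\leq(\ex n/k)^k$ with $n=(m+1)\Delta$ and $k=m$, obtaining
\[
\frac{1}{m+1}\binom{(m+1)\Delta}{m} \leq \frac{(\ex\Delta)^m(1+1/m)^m}{m+1} \leq \frac{\ex\,(\ex\Delta)^m}{m+1},
\]
which is strictly less than $(\ex\Delta)^m/2$ as soon as $m+1\geq 2\ex$, i.e.\ $m\geq 5$; the finitely many small cases $m\in\{1,2,3,4\}$ can be verified by direct computation from the explicit formula $\frac{1}{m+1}\binom{(m+1)\Delta}{m}$. The main obstacle lies in setting up the encoding so that exactly $m+1$ internal nodes are produced, so that each internal node has at most $\Delta$ children, and so that the two kinds of internal nodes can be distinguished purely from the plane-tree structure together with the global ordering of $G$; once this bookkeeping is settled, the Fuss-Catalan count and the Stirling estimate are entirely standard.
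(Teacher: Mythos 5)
The paper never proves this statement itself --- Lemma~\ref{lem:connected_subgraph_bound} is imported verbatim from Borgs, Chayes, Kahn and Lov\'asz --- so there is no internal proof to compare against; what you give is a correct, self-contained derivation of the cited bound. The encoding works: a canonical DFS of $H$ from $v$ produces a spanning tree whose non-tree edges are all ancestor--descendant pairs (undirected DFS has no cross edges), so hanging each vertex of $H$ at its parent's slot and one extra all-leaf internal node per back edge at its descendant endpoint yields a complete plane $\Delta$-ary tree with exactly $|V(H)|+(m-|V(H)|+1)=m+1$ internal nodes, and your top-down decoding (the ancestor test on the root-to-node path correctly separates tree edges from back edges precisely because cross edges cannot occur) recovers $H$, giving injectivity. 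The Fuss--Catalan count matches the stated expression, since $\binom{\Delta(m+1)}{m+1}=\frac{(\Delta-1)(m+1)+1}{m+1}\binom{\Delta(m+1)}{m}$, and your tail estimate $\frac{1}{m+1}\binom{(m+1)\Delta}{m}\leq \frac{\ex(\ex\Delta)^m}{m+1}$ together with the four small cases $m\in\{1,2,3,4\}$ (which do check out) gives the strict inequality; note only that the second inequality requires $m\geq 1$ (it fails at $m=0$), which is harmless since the paper only invokes the lemma for polymers with at least one edge. Your route is essentially the same exploration-tree idea underlying the original proof --- injecting connected subgraphs containing $v$ into subtrees of a rooted $\Delta$-branching tree and counting those by a Catalan-type (Lagrange inversion / cycle lemma) formula --- with the non-tree edges bookkept as extra internal nodes; what it buys is self-containedness rather than any sharpening, which is perfectly adequate here.
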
			
%
%
To illustrate where the bound presented in Theorem~\ref{boolean_bound} originates from, consider a fixed polymer $\gamma \in\mathcal C(G)$.  Each $\gamma'$ incompatible with $\gamma$ contributes to the sum that has to be bounded by $a(\gamma)$ the term 
\[
	|\Phi_\pi(\gamma',z)|\,\ex^{a(\gamma')}  = \Big|z^{|E(\gamma')|}\!\!\!\!\prod_{v\in V(\gamma')}\!\!f_v(\mathds{1}_{E(\gamma')})\Big|\ex^{a(\gamma')}  \leq |z|^{|E(\gamma')|}\,\maxValue(\SetFunctions)^{|V(\gamma')|} \,\ex^{a(\gamma')}\,.
\]
We estimate the number of polymers $\gamma'$ incompatible with $\gamma$ with respect to their number of edges. For each $1\leq i\leq m$ denote by $\mathcal C _{\gamma}(i)$ the set of polymers $\gamma'\in \mathcal C(G)$ with $\gamma \not\sim \gamma'$ and $i=|E(\gamma')|$.
First, we observe that by the definition of the polymers in $\mathcal{C}(G)$ the incompatibility $\gamma \not\sim \gamma'$ implies that $\gamma$ and $\gamma'$ share at least one vertex. Second, each polymer in $\mathcal C(G)$ has to be a connected subgraph of $G$. By Lemma~\ref{lem:connected_subgraph_bound} we conclude that $\mathcal C_{\gamma}(i)$ has cardinality at most $|V(\gamma)|\frac{(\ex\Delta)^{i}}{2}$.

With this estimation at hand we choose the function $a(\gamma')=|E(\gamma')|$, which is non-negative on $\mathcal C(G)$. This choice of $a$ with the bounds on the cardinality of $\mathcal C_{\gamma}(i)$ yields
\[
	\sum_{\gamma' \not\sim \gamma} |\Phi_\pi(\gamma', z)| \ex^{a(\gamma')}
	\leq\sum_{i=1}^{|E|} |\mathcal C_{\gamma}(i)|\,|z|^i\, \maxValueOne^{i+1}\,\ex^i
	\leq\sum_{i=1}^{|E|}|V(\gamma)|\frac{(\ex\Delta)^{i}}{2} |z|^i\, \maxValueOne^{i+1}\,\ex^i.
\]
Choosing $|z|\leq (\Delta\ex^{2}\maxValueOne(\maxValueOne+1))^{-1}$ this estimation gives the condition required in Theorem~\ref{thm:cluster_expansion_polymer_models}.

\begin{remark}
 Theorem~\ref{thm:cluster_expansion_polymer_models} allows the choice of any function $a:K\rightarrow[0,\infty)$. The way we estimate the number of incompatible polymers introduces a factor of $|\gamma|$ on the left-hand side of the inequality of this theorem, for some size function $|\cdot|$. We are interested in bounds on $|z|$ that do not depend on the size of polymers which may reach the size of the instance. For this reason we must choose $a(\gamma)\in \Omega(|\gamma|)$. Furthermore, each polymer is incompatible to itself, hence in our partition functions we have to satisfy $|z|^{|\gamma|}e^{a(\gamma)}\leq a(\gamma)$, which for bounds independent on $|\gamma|$, requires $a(\gamma)\in O(|\gamma|)$. Therefore, for all our bounds in this paper we will consider $a(\gamma)$ linear in $|\gamma|$.
\end{remark}
In the zero-free region of $Z(\mathcal C(G),\Phi_\pi(\cdot,z))$ now provided by Theorem~\ref{boolean_bound} we consider approximating $\log Z(\mathcal C(G),\Phi_\pi(\cdot,z))$ and use the cluster expansion representation to do so. An additive approximation for $\log Z(\mathcal C(G),\Phi_\pi(\cdot,z))$ can be computed by the first $\mathcal O(\log|V(G)|)$ coefficients of its Taylor expansion. By the definition of our weights it is obvious that only polymers with few edges contribute to these small coefficients; recall that the cluster expansion represents the logarithm of the partition function by sums of products of polymer weights. For a graph $G$ of bounded maximal degree $\Delta$  Lemma~\ref{lem:connected_subgraph_bound} yields that there are at most $|V(G)|\frac{(\ex \Delta)^m}{2}$ polymers with $m$ edges. Lemma~\ref{lem:connected_subgraph_bound} also allows to bound the degree of the polymer graph restricted to the polymers with few edges. These properties enable an efficient enumeration of all connected subgraphs of the polymer graph which contribute to the first coefficients of the Taylor expansion of  $\log Z(\mathcal C(G),\Phi_\pi(\cdot,z))$. Overall these ideas yield the following theorem.
\begin{theorem}\label{thm:boolean_holant}
	For all graphs $G$ of maximum degree $\Delta$, all $\SetFunctions\subseteq\SetFunctions_1$, all $\pi\colon V(G)\rightarrow\SetFunctions$ and $z$ in $\{z\in \mathbb C\mid |z|\leq (\Delta\ex^{2}\maxValueOne(\maxValueOne+1))^{-1}\}$ with $\maxValueOne=\max\{1, r(\SetFunctions)\}$, the Boolean Holant polynomial $Z_{\mathcal F}(G,\pi,z)$ admits an FPTAS.
\end{theorem}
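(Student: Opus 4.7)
The plan is to obtain the FPTAS by approximating $\log Z(\mathcal{C}(G),\Phi_\pi(\cdot,z))$ additively, then exponentiating to get a multiplicative approximation of $Z_{\SetFunctions}(G,\pi,z) = Z(\mathcal{C}(G),\Phi_\pi(\cdot,z))$. Fix $\varepsilon > 0$ and $z$ in the stated region. Since $Z(\mathcal{C}(G),\Phi_\pi(\cdot,z))$ is a polynomial in $z$ of degree at most $|E(G)|$, it suffices to compute its first $m = \mathcal{O}(\log(n/\varepsilon))$ Taylor coefficients around $z = 0$ and rely on a tail bound coming from absolute convergence (Theorem \ref{boolean_bound}). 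The standard Barvinok--Patel--Regts argument then turns the truncated log into an $\varepsilon$-additive approximation of $\log Z$, hence an $\varepsilon$-multiplicative approximation of $Z$.

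The key step is to obtain the Taylor coefficients efficiently. By Dobrushin's observation mentioned in the introduction, in the zero-free region the cluster expansion of $\log Z$ coincides with its Taylor series, so each coefficient is a finite sum, indexed by clusters (sequences of pairwise-incompatible polymers whose incompatibility graph is connected), of Ursell-function terms weighted by $\prod \Phi_\pi(\gamma,z)$. A cluster contributing to the coefficient of $z^k$ must use polymers whose total number of edges equals $k$, so for $k \leq m$ the only relevant polymers are connected subgraphs of $G$ with at most $m$ edges. By Lemma \ref{lem:connected_subgraph_bound}, the number of such subgraphs rooted at a fixed vertex is at most $(\ex\Delta)^m/2$, and they can all be enumerated in time $n \cdot (\ex\Delta)^{\mathcal{O}(m)} = \mathrm{poly}(n/\varepsilon)$ using the usual BFS/pivot enumeration scheme on $G$.

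Next I would build the induced subgraph of the polymer graph on these ``small'' polymers. The relevant clusters are connected subgraphs of this polymer graph whose total edge weight is at most $m$; Lemma \ref{lem:connected_subgraph_bound} applied to the polymer graph (whose maximum degree, restricted to polymers of size $\leq m$, is bounded by $|V(\gamma)| \cdot m \cdot (\ex\Delta)^m = \mathrm{poly}(n/\varepsilon)$) again gives a polynomial number of such clusters and an enumeration procedure in polynomial time. For each enumerated cluster one evaluates the Ursell function (a fixed combinatorial quantity) and the product of polymer weights $\Phi_\pi(\gamma,z)$, which is efficient because each $f_v$ is polynomially computable by hypothesis. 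Summing these contributions yields the first $m$ coefficients.

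The main obstacle is to prove the tail estimate: one must show that truncating the cluster/Taylor expansion at degree $m = \mathcal{O}(\log(n/\varepsilon))$ incurs additive error at most $\varepsilon$ in $\log Z$. This is where the strictness of the Kotecký--Preiss condition from Theorem \ref{boolean_bound} is used: it gives geometric decay of the tail of $\sum_\Gamma |\Phi_\pi(\gamma,z)|\ex^{a(\gamma)}$ with exponent bounded away from~$1$, so the tail of the cluster expansion beyond clusters of total size $m$ is at most $n \cdot c^m$ for some constant $c < 1$ (depending on the slack in the inequality $|z| \leq (\Delta \ex^{2} \maxValueOne (\maxValueOne+1))^{-1}$, which must be made strict, e.g.\ replacing $\maxValueOne+1$ by $\maxValueOne+1+\delta$ for an arbitrarily small $\delta>0$, or by taking $m$ slightly larger). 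Choosing $m$ of the order $\log(n/\varepsilon)/\log(1/c)$ then gives the required $\varepsilon$-approximation, and the total running time is polynomial in $n$ and $1/\varepsilon$.
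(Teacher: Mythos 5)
Your overall plan matches the paper's: translate $Z_\SetFunctions$ to the polymer partition function $Z(\mathcal C(G),\Phi_\pi(\cdot,z))$, use the Koteck{\'y}--Preiss theorem (via Theorem~\ref{boolean_bound}) to get a zero-free region, compute a logarithmic number of Taylor coefficients of $\log Z$ from the cluster expansion, and exponentiate. This is exactly what the paper does by specialising Theorem~\ref{thm:Holant_poly_FPTAS}, which in turn invokes the general Theorem~\ref{thm:polymers_fptas}.

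The step that does not work as you wrote it is the enumeration of clusters. You propose to bound the number of relevant clusters by applying Lemma~\ref{lem:connected_subgraph_bound} to the polymer graph, observing that its maximum degree (restricted to polymers with at most $m$ edges) is $\mathrm{poly}(n/\varepsilon)$. But that lemma gives at most $(\ex D)^j$ connected subgraphs with $j$ vertices through a fixed vertex in a graph of degree $D$; with $D=\mathrm{poly}(n/\varepsilon)$ and $j$ up to $m=\Theta(\log(n/\varepsilon))$, this yields $(n/\varepsilon)^{\Theta(\log(n/\varepsilon))}$, i.e.\ a quasi-polynomial count, not a polynomial one. What makes the enumeration polynomial is not a degree bound on the polymer graph in terms of $n$, but the refined, size-sensitive bounds encoded in conditions~4 and~5 of Theorem~\ref{thm:polymers_fptas}: for a polymer $\gamma$ the number of incompatible polymers of size at most $m$ is $\exp(\mathcal{O}(m+\log|\gamma|))$, i.e.\ controlled by the polymer sizes and not by $|K|$, and the total size budget across a contributing cluster is exactly $k$. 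The paper (following Helmuth et al.~and Patel--Regts) then enumerates labelled trees with polymer assignments, propagating these local bounds down the tree, which is what gives $\exp(\mathcal{O}(k)+\log|K|)$ rather than $D^{\mathcal{O}(k)}$. You need to replace the naive application of Lemma~\ref{lem:connected_subgraph_bound} to the polymer graph with this tree-labelling enumeration (or a citation to Theorem~\ref{thm:polymers_fptas}) for the runtime claim to go through. Your tail estimate via the slack in the Koteck{\'y}--Preiss condition is a legitimate alternative to the paper's use of \cite[Lemma~2.2]{2017:Patel:poly_time_approx_partition_function}, and you correctly flag that the inequality defining the region must be strict for either argument to yield a finite truncation level.
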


Our running example of the matching polynomial is also captured by the above theorem. Directly applied, Theorem~\ref{thm:boolean_holant} yields an FPTAS for $z$ with $|z| < 1/(2\Delta\ex^2)$. However, since every polymer consists of exactly one edge, we can improve the bound for $\mathcal C_\gamma$ to $2\Delta-1$ and obtain an FPTAS for $z$ with $|z|\leq 1/(\ex(2\Delta -1))$.


\section{Abstract polymer models and approximation algorithms}
\label{sec:polymers}

\label{sec:polymer_fptas}

We develop a general tool to derive approximation schemes for polymer partition functions.  
We will use the following definition of approximation.
\begin{definition}[\cite{2017:Patel:poly_time_approx_partition_function}]
	Let $q$ and $\zeta$ be non-zero complex numbers. We call $\zeta$ a \emph{multiplicative $\varepsilon$-approximation to $q$} if $e^{-\varepsilon} \leq |q|/|\zeta|\leq e^\varepsilon$ and if the angle between $\zeta$ and $q$ (as seen as vectors in $\mathbb{C} = \mathbb{R}^2$) is at most $\varepsilon$.
\end{definition}

The following observation follows immediately from the definition of the multiplicative $\varepsilon$-approximation.
\begin{observation}
	\label{obs:epsilon-approximation}
	For any $\varepsilon, c > 0$, if $\zeta$ is a multiplicative $\varepsilon$-approximation to some value $q$, then we can also achieve a multiplicative $\varepsilon$-approximation for $cq$, namely $c\zeta$.
\end{observation}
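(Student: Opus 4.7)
The plan is to unpack the two defining conditions of a multiplicative $\varepsilon$-approximation and verify each survives scaling by a positive real $c$. The first condition concerns the ratio of absolute values, and the second concerns the angular displacement in $\mathbb{C}$; since $c > 0$ acts by scalar multiplication (positive real), neither the modular ratio nor the argument of a complex number should be affected.

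Concretely, first I would compute the modular ratio for the scaled pair: $|cq|/|c\zeta| = (c\,|q|)/(c\,|\zeta|) = |q|/|\zeta|$, so the hypothesis $e^{-\varepsilon} \leq |q|/|\zeta| \leq e^\varepsilon$ transfers verbatim to $e^{-\varepsilon} \leq |cq|/|c\zeta| \leq e^\varepsilon$. Second, I would argue that multiplication by a positive real $c$ preserves arguments of nonzero complex numbers, i.e.\ $\arg(cq) = \arg(q)$ and $\arg(c\zeta) = \arg(\zeta)$, so the angle between the vectors $cq$ and $c\zeta$ equals the angle between $q$ and $\zeta$, which by hypothesis is at most $\varepsilon$. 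Both defining properties therefore hold for $c\zeta$ and $cq$.

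There is essentially no obstacle here; the observation simply records the invariance of the approximation notion under positive real scaling so that subsequent sections may harmlessly normalise partition functions (for instance, pulling out a factor such as $z_0^{|E|}$ or a product $\prod_v f_v(\mathbf{0})$) before approximating the residual polymer sum. The one minor point worth noting in the write-up is that the argument uses $c > 0$ in an essential way: if $c$ were an arbitrary nonzero complex number, the modulus bound would still transfer, but the angle condition would require $c$ to be a positive real (or, more generally, a real scalar together with the same sign convention applied to both $q$ and $\zeta$).
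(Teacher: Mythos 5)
Your proof is correct and matches the paper's approach: the paper simply notes the observation ``follows immediately from the definition'' without further elaboration, and your write-up is exactly the obvious unpacking of the two conditions (modulus ratio invariant under positive scaling, argument invariant under positive scaling). Your closing remark about $c>0$ being essential for the angle condition is a nice clarification, but otherwise this is the same argument the paper leaves implicit.
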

A \emph{fully polynomial-time approximation scheme (FPTAS)} is an algorithm that, given a problem and some fixed parameter $\varepsilon > 0$, produces a multiplicative $\varepsilon$-approximation of its solution in time polynomial in $n$ and $1/\varepsilon$. 

Recall from Section~\ref{sec:intro_method} that a general polymer system consists of a finite set~$K$ of polymers and an incompatibility relation $\not\sim\ \subseteq K\times K$. For a family of weight functionals $\{\Phi(\cdot,z)\colon K\rightarrow \mathbb C\mid z\in \mathbb C\}$ one can consider the corresponding partition functions as a mapping $Z(K,\Phi(\cdot,\cdot))\colon \mathbb C\rightarrow \mathbb C$, with $z\mapsto Z(K,\Phi(\cdot,z))$;
observe that the polymer representation we used for Boolean Holant polynomials falls into this setting with $\Phi(\cdot,z)=\Phi_\pi(\cdot,z)$. 
For each polymer $\gamma\in K$ its weight can then also be seen as a function $\Phi(\gamma,\cdot)\colon \mathbb C\rightarrow \mathbb C$ with $z\mapsto \Phi(\gamma,z)$. 
%

Recently, Helmuth et al.~\cite{2018:Helmuth:algorithmic_pirogov_sinai_theory} also used a polymer representation to derive FPTAS' for some graph polynomials. Their polymer system is used to model problems characterised by \emph{edge}-constraints and assignments on the vertices of the input graph.
Translated to our notation their system is defined as follows.
A polymer $\vdompoly$ for a graph $G$ is a pair  $(\gamma, \vcolouring)$, where $\gamma\in \mathcal C(G)$ and  $\vcolouring\colon V(\gamma)\rightarrow D$. Denote the collection of these polymers by $\vSetPolymers(G)$.  Further, the incompatibility relation is defined differently. In this edge-constraint system polymers are already incompatible if they affect a common edge, which formally translates to $\vdompoly \not\sim \vdompoly'$ if the graph distance of the respective subgraphs $\gamma$ and $\gamma'$  in $G$ is less than~2. In their model they consider weight-functions over only one variable $z$. Additionally, they assume the weight functions $\Phi(\vdompoly,z)$ under study to be analytic functions of $z$ in a neighbourhood of the origin of the complex plane and that there is an absolute constant $\rho >0$ such that for each $\vdompoly\in \vSetPolymers(G)$ the first non-zero term in the Taylor series expansion of $\Phi(\vdompoly, z)$ around zero is of order $k \geq |\gamma| \rho$; this property is what they refer to as \emph{Assumption 1}. For this polymer model, the conditions to derive an FPTAS for the resulting polymer partition function $\vfunction$ are summarised in the following theorem.
%
\begin{theorem}[\cite{2018:Helmuth:algorithmic_pirogov_sinai_theory}]\label{thm:pigorov}
	Fix $\Delta$ and let $\mathfrak{G}$ be a set of graphs of degree at most $\Delta$. Suppose:
	\begin{enumerate}
		\item There is a constant $c$ so that $\vfunction$ is a polynomial in $z$ of degree at most $c|G|$ for all $G \in \mathfrak{G}$.
		\item The weight functions satisfy Assumption 1 and the Taylor coefficients up to order $m$ of  $\Phi(\vdompoly,z)$ can be computed  in time $\exp (\mathcal{O}(m + \log |G|))$ for each $G\in \mathfrak{G}$ and $\vdompoly \in \vSetPolymers (G)$.
		\item For every connected subgraph $G'$ of every $G \in \mathfrak{G}$, we can list all polymers $\vdompoly  \in \vSetPolymers(G)$ with ${\gamma} = G'$ in time $\exp(\mathcal{O}(|G'|))$.
		\item There exists $\delta >0$ so that for all $|z|< \delta$ and all $G \in \mathfrak{G}$, $\vfunction\neq 0$.
	\end{enumerate}
	Then for every $z$ with $|z|<\delta$, there is an FPTAS for $\vfunction$  for all $G \in \mathfrak{G}$.
\end{theorem}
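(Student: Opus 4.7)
The plan is to follow the Barvinok--Patel--Regts paradigm adapted to abstract polymer systems: reduce multiplicative approximation of $\vfunction$ to additive approximation of $\log \vfunction$, truncate its Taylor series around $0$, and compute the first $O(\log(|G|/\varepsilon))$ coefficients via the cluster expansion, using the polymer-model hypotheses to make the enumeration of clusters efficient.

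First I would observe that by Observation~\ref{obs:epsilon-approximation} and the non-vanishing guaranteed by condition~4, a multiplicative $\varepsilon$-approximation of $\vfunction$ follows from an additive $\varepsilon$-approximation of $\log \vfunction$. By condition~1, $\vfunction$ is a polynomial in $z$ of degree $N\le c|G|$, and by condition~4 all its roots $\zeta_i$ satisfy $|\zeta_i|\ge \delta$. Hence, for $|z_0|\le(1-\eta)\delta$ with any fixed $\eta>0$, the factorisation
\[
	\log\vfunction(z_0) \;=\; \log\vfunction(0)+\sum_{i=1}^{N}\log(1-z_0/\zeta_i)
\]
yields the standard tail bound $\bigl|\log\vfunction(z_0)-\sum_{k=0}^{m}T_k(z_0)\bigr|\le (N/\eta)(1-\eta)^{m+1}$, where $T_k$ is the degree-$k$ term of the Taylor expansion of $\log\vfunction$ around $0$. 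Choosing $m=O(\log(|G|/\varepsilon))$ reduces the task to computing the coefficients $T_0,\dots,T_m$ to sufficient precision.

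Next I would invoke the abstract-polymer cluster expansion
\[
	\log\vfunction \;=\; \sum_{X}\phi^T(X)\prod_{\vdompoly\in X}\Phi(\vdompoly,z),
\]
where $X$ ranges over ordered tuples of polymers whose incompatibility graph is connected and $\phi^T$ is the Ursell function. By Assumption~1, the Taylor series of each $\Phi(\vdompoly,z)$ begins at order at least $\rho$ times the size of its underlying subgraph, so only clusters of total underlying-subgraph size at most $m/\rho$ can contribute to the coefficient of $z^m$; this keeps the truncated sum finite. Absolute convergence and equality of the cluster and Taylor series on the zero-free disk, as observed by Dobrushin and already exploited in the excerpt's Boolean Holant argument, justify working coefficient by coefficient.

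Finally I would enumerate the relevant clusters. A contributing cluster of total polymer-size at most $s:=m/\rho$ is supported on a connected subgraph $H$ of $G$ of size at most $s$ (the edge-overlap form of incompatibility forces any two polymers in the cluster to be at graph distance less than~$2$, so the union of their subgraphs is connected). I would enumerate such subgraphs $H$ in time $|G|\cdot\exp(O(s))$ via the Patel--Regts routine, backed by Lemma~\ref{lem:connected_subgraph_bound}; condition~3 then lists all polymers $\vdompoly\in\vSetPolymers(G)$ with $\gamma=H$ in time $\exp(O(s))$; condition~2 supplies the Taylor coefficients of each $\Phi(\vdompoly,\cdot)$ up to order $m$ in time $\exp(O(m+\log|G|))$; and the Ursell function of each enumerated tuple is a fixed combinatorial quantity computable in time exponential only in the tuple's length. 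Summing these contributions yields $T_0,\dots,T_m$ in time $\mathrm{poly}(|G|/\varepsilon)$, and exponentiating the truncated polynomial produces the required multiplicative $\varepsilon$-approximation. The main obstacle is making the cluster enumeration simultaneously complete and run in time $\exp(O(m))$: this is precisely where the three polymer-model hypotheses conspire, and weakening any one of conditions~2, 3, or the subgraph-count estimate forces a slowdown that breaks the FPTAS bound.
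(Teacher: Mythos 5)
Your proposal follows the same Barvinok--Patel--Regts strategy that the paper attributes to Helmuth, Perkins and Regts and then re-implements in its own Theorem~\ref{thm:polymers_fptas}: use conditions~1 and~4 to reduce the multiplicative goal to an additive approximation of $\log \vfunction$, truncate its Taylor series at $m=\mathcal O(\log(|G|/\varepsilon))$ (your factorisation and tail bound are the usual Patel--Regts estimate), and recover the first $m$ coefficients from the cluster expansion, where Assumption~1 restricts attention to clusters whose total underlying-subgraph size is at most $m/\rho$. The only place you depart from the paper is the cluster enumeration. You enumerate connected vertex-subsets $H\subseteq V(G)$ of size at most $m/\rho$ and then collect the polymers inside $H$; your key observation---that under the distance-$<2$ incompatibility the induced subgraph on $\bigcup_i V(\gamma_i)$ is connected and has at most $\sum_i|\gamma_i|$ vertices---is correct and makes this polynomial. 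One small imprecision: condition~3 must be invoked for every connected subgraph $G'\subseteq G[H]$, not only for $G'=H$, since the polymers in a cluster typically have supports that are proper pieces of the union; this does not affect the running-time bound, as there are only $\exp(\mathcal O(|H|))$ such $G'$. By contrast, the paper enumerates clusters by listing abstract trees on at most $m/\rho$ vertices and labelling their nodes with polymers so that adjacent nodes receive incompatible ones. That variant never references the base graph $G$, which is precisely what allows the paper to lift the argument to arbitrary abstract polymer systems in Theorem~\ref{thm:polymers_fptas}. For the concrete vertex-spin polymer model of the cited theorem, both enumerations are valid and yield the same FPTAS.
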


The proof of this result is based on calculating the contribution of polymers of small size to the logarithm of the partition function. Assumption 1 ensures that only those small polymers contribute to the required first coefficients of the Taylor expansion. Enumerating all polymers up to a fixed size however is not enough to reproduce these coefficients as the polymers contribute within families and an enumeration of all small families  would be too costly. To circumvent this problem the contribution of all small families can be computed by an inclusion-exclusion principle on sets of incompatible polymers. More formally, this yields a representation of $\log\vfunction$ not over families of compatible polymers but instead over connected subgraphs of the polymer graph. An enumeration of trees and labellings with incompatible polymers then gives an efficient way to compute the low coefficients of the Taylor expansion of $\log\vfunction$ via this representation.

The crucial step to consider the more convenient representation of $\log\vfunction$ is the so-called \emph{cluster expansion} for hard-core interactions~\cite{friedli_velenik_2017}, which holds not only for the specific function $\vfunction$ but for general polymer partition functions. In our notation, the cluster expansion of the partition function $Z(K,\Phi(\cdot,z))$ of a polymer system $K$ with weights $\Phi(\cdot,z)$ is the following representation (which is always possible as soon as the Taylor expansion of $\log Z(K,\Phi(\cdot,z))$ around zero converges absolutely):
\begin{equation}
\label{eq:cluster_expansion}
	\log Z(K,\Phi(\cdot,z)) =\sum_{k\geq 1}\frac{1}{k!}\sum_{(\gamma_1,\dots,\gamma_k)}\phi(\gamma_1,\dots,\gamma_k)\prod_{i=1}^{k}\Phi(\gamma_i,z).
\end{equation}

The sum in the above display equation is over ordered tuples of polymers. $\phi$ is the \emph{Ursell} function of such a tuple, with 
\[
\phi(\gamma_1,\dots,\gamma_k)=\sum_{G\in T(\gamma_1,\dots,\gamma_k)} (-1)^{|E(G)|},
\]
where $T(\gamma_1,\dots,\gamma_k)$ denotes the set of all spanning connected subgraphs of the polymer incompatibility graph restricted to the polymers that appear in $(\gamma_1,\dots,\gamma_k)$.
Observe that the Ursell function assigns zero to all polymer sets $S$ which are disconnected in the polymer graph, which yields the aforementioned representation of $\log Z(K,\Phi(\cdot,z))$ only requiring connected subgraphs of the polymer incompatibility graph.

In order to describe the properties required to approximate the partition function of a general polymer model, we use the following notation to describe properties of a polymer system which play a role in the computational complexity of approximating its partition function. For $\gamma\in K$ the notion $|\gamma|$ denotes some fixed size-function $|\cdot|\colon K\rightarrow \mathbb R_+$, which can be chosen suitably similar to the function $a$ in Theorem~\ref{thm:cluster_expansion_polymer_models}. Similarly we use $|K|$ to denote a suitable size (not necessarily the cardinality) for the set of polymers $K$. With these notions we can lift \cite[Theorem 2.2]{2018:Helmuth:algorithmic_pirogov_sinai_theory} to the general definition of a polymer model and derive the following set of conditions to efficiently compute approximations to a polymer partition function.
\begin{theorem}
\label{thm:polymers_fptas}
	Let $(K,\not\sim)$ be a polymer system and  $\{\Phi(\cdot,z)\colon K\rightarrow \mathbb C\mid z\in \mathbb C\}$ a family of weight functions with the following conditions:
	\begin{enumerate}
		\item The function $Z(K,\Phi(\cdot,z))$ is a polynomial of $z$ of degree  $d$.
		\item Each function $\Phi(\gamma, z)$,  $\gamma \in K$,  is analytic in an open neighbourhood around zero such that the first non-zero term in its Taylor series expansion around zero is of order $k \geq |\gamma| \rho$ for some $\rho >0$.
		\item For each $m\in \mathbb N$ and  $\gamma\in K$, the Taylor coefficients up to order $m$ of $\Phi(\gamma,z)$ can be computed in time $\exp (\mathcal{O}(m + \log |K|))$.              
		\item All polymers ${\gamma} \in K$ with $|{\gamma}| \leq m$ can be enumerated in $\exp(\mathcal{O}(m)+\log |K|)$.
		\item For each  polymer $\gamma \in K$  all polymers $\gamma'\in K$ with $\gamma' \not\sim \gamma$ and $|{\gamma'}|\leq m$ can be enumerated in time $\exp(\mathcal{O}(m + \log |{\gamma}|))$.
		
		\item There exists $\delta >0$ so that for all $|z|< \delta$, $\log(Z(K,\Phi(\cdot,\cdot)))$ converges absolutely.
	\end{enumerate}
	Then for every $z$ with $|z|<\delta$ and every $\varepsilon>0$ an $\varepsilon$-approximation for $Z(K,\Phi(\cdot,z))$ can be computed in 
	$\exp(\mathcal O(\frac 1\rho(1-\frac{|z|}{\delta})^{-1}\log(\frac d \varepsilon)+\log(|K|)))$.
\end{theorem}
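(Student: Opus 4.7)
The plan is to adapt the Barvinok--Patel--Regts truncated-Taylor approach, as carried out for concrete polymer models in~\cite{2018:Helmuth:algorithmic_pirogov_sinai_theory}, to the abstract polymer setting. Since Condition~1 makes $Z(K,\Phi(\cdot,z))$ a polynomial of degree $d$ in $z$, and Condition~6 provides a zero-free disk of radius $\delta$, a standard argument (e.g.~\cite{2017:Patel:poly_time_approx_partition_function}) shows that computing the truncated Taylor expansion of $\log Z(K,\Phi(\cdot,z))$ around $z=0$ up to order
\[
m=\mathcal{O}\bigl((1-|z|/\delta)^{-1}\log(d/\varepsilon)\bigr)
\]
and exponentiating yields a multiplicative $\varepsilon$-approximation of $Z(K,\Phi(\cdot,z))$. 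The task thus reduces to computing the first $m+1$ Taylor coefficients of $\log Z(K,\Phi(\cdot,z))$ exactly (or to sufficient precision) within the claimed time bound.

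For this I would invoke the cluster expansion~\eqref{eq:cluster_expansion}, which converges absolutely on the disk by Condition~6 and therefore coincides with the Taylor series of $\log Z(K,\Phi(\cdot,z))$ there. Condition~2 forces each $\Phi(\gamma,z)$ to have a zero of order at least $\rho|\gamma|$ at the origin, so a cluster $(\gamma_1,\dots,\gamma_k)$ contributes only to Taylor coefficients of order at least $\rho\sum_i|\gamma_i|$. Hence to recover the first $m+1$ coefficients it suffices to sum~\eqref{eq:cluster_expansion} over clusters of total polymer size at most $M:=m/\rho$, discarding the tail.

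The main technical step is the efficient enumeration of these clusters. Because the Ursell function $\phi(\gamma_1,\dots,\gamma_k)$ vanishes whenever the underlying polymer multiset is disconnected in the incompatibility graph $(K,\not\sim)$, only connected multisets of total size at most $M$ need to be visited. I would generate them by iterating over every root polymer $\gamma_0\in K$ with $|\gamma_0|\leq M$, enumerated via Condition~4 in total time $\exp(\mathcal{O}(M)+\log|K|)$, and performing a bounded-depth search through $\not\sim$ using the incompatibility-oracle of Condition~5 to expand by neighbours of size at most $M$. Standard tree-based arguments as in~\cite{2017:Patel:poly_time_approx_partition_function,2018:Helmuth:algorithmic_pirogov_sinai_theory} then bound the number of rooted connected multisets of total size at most $M$ by $\exp(\mathcal{O}(M))$ per root. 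For each such multiset, one enumerates its $\mathcal{O}(M!)$ orderings into tuples, computes the Ursell function by summing $(-1)^{|E(H)|}$ over spanning connected subgraphs $H$ of the restricted incompatibility graph (of $\exp(\mathcal{O}(M))$ many such $H$), and multiplies the Taylor expansions of the $\Phi(\gamma_i,z)$ truncated at order $m$, each obtainable in $\exp(\mathcal{O}(m+\log|K|))$ time by Condition~3. Collecting all contributions modulo $z^{m+1}$ produces the desired Taylor coefficients of $\log Z(K,\Phi(\cdot,z))$.

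Summing the bounds gives an overall running time of
\[
\exp\bigl(\mathcal{O}(M+\log|K|)\bigr)=\exp\Bigl(\mathcal{O}\bigl(\rho^{-1}(1-|z|/\delta)^{-1}\log(d/\varepsilon)+\log|K|\bigr)\Bigr),
\]
as claimed. The hard part is the cluster enumeration: in~\cite{2018:Helmuth:algorithmic_pirogov_sinai_theory} one leverages that polymers live in a bounded-degree host graph, which automatically controls the local structure of the polymer incompatibility graph, whereas in the abstract setting this control must be extracted from Conditions~4 and~5 alone and organised so that every cluster is visited only $\exp(\mathcal{O}(M))$ times. Once this bookkeeping is done, the remaining ingredients (Barvinok's disk-to-coefficient reduction, Ursell-function evaluation, Taylor multiplication) are direct generalisations of the arguments establishing Theorem~\ref{thm:pigorov}.
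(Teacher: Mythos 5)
Your proposal follows the same overall route as the paper's proof: the Barvinok/Patel--Regts disk-to-coefficient reduction gives the truncation order $m$, the cluster expansion~\eqref{eq:cluster_expansion} is the Taylor series of $\log Z(K,\Phi(\cdot,z))$ inside the zero-free disk by Condition~6, Condition~2 restricts attention to clusters of bounded total polymer size, and Conditions~4--5 drive a rooted enumeration of connected subsets of the polymer incompatibility graph with Condition~3 supplying the truncated Taylor data of the weights. The structure is the same whether you phrase the enumeration as a bounded-depth search or, as the paper does following~\cite[Theorem~2.2]{2018:Helmuth:algorithmic_pirogov_sinai_theory}, as labelled-tree enumeration with polymer assignments.

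Two bookkeeping slips, however, would break the claimed running time if taken literally. First, you suggest computing the Ursell function ``by summing $(-1)^{|E(H)|}$ over spanning connected subgraphs $H$ of the restricted incompatibility graph (of $\exp(\mathcal{O}(M))$ many such $H$)''; the restricted incompatibility graph on a cluster of up to $m$ polymers can have $\Theta(m^2)$ edges, so this naive count is $\exp(\mathcal{O}(m^2))$, which for $m=\Theta(\log n)$ is quasi-polynomial, not polynomial. The paper sidesteps this by invoking \cite[Lemma~2.6]{2018:Helmuth:algorithmic_pirogov_sinai_theory}, which evaluates the Ursell function of a graph $H$ in $\exp(\mathcal{O}(|V(H)|))$ time by a recursion rather than by subset enumeration; you need that ingredient. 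Second, ``enumerates its $\mathcal{O}(M!)$ orderings into tuples'' would again push the cost to $\exp(\mathcal{O}(M\log M))$; one should instead work with the multiset directly and absorb the $1/k!$ from~\eqref{eq:cluster_expansion} into a combinatorial multiplicity factor (the tree-with-labels enumeration of the paper does this implicitly). Once these two points are patched, your argument is the paper's argument.
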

\begin{proof}
	First assume $Z(K,\Phi(\cdot,z))$ is a polynomial of $z$ degree $d$. Absolute convergence of $\log Z(K,\Phi(\cdot,z))$ for all $z$ with $|z|<\delta$ implies that $Z(K,\Phi(\cdot,z))$ has no roots in the open disc of radius $\delta$ around zero. By \cite[Lemma 2.2]{2017:Patel:poly_time_approx_partition_function} exponentiating the value of the Taylor series of $\log Z(K,\Phi(\cdot,z))$ truncated at $m=(1-\frac{|z|}{\delta})^{-1}\log(\frac d \varepsilon)$ yields a multiplicative $\varepsilon$-approximation for $Z(K,\Phi(\cdot,z))$, for each $z$ with $|z|<\delta$.
	
	Due to the representation given by \eqref{eq:cluster_expansion} and the absolute convergence of this series the $k$-th Taylor coefficient of $\log Z(K,\Phi(\cdot,z))$ can be computed with the Taylor coefficients of order $k$ of the  products of functions $\Phi(\gamma,z)$ for $\gamma\in S \subseteq K$ (here the subsets $S$ contain the elements of the corresponding $k$-tuple in \eqref{eq:cluster_expansion}). By condition~2, the $k$-th Taylor coefficient of $\Phi(\gamma,z)$ is non-zero only for polymers of size at most $\frac k\rho$. All $S\subseteq K$ which play a role in computing the $k$-th Taylor coefficient of $\log Z(K,\Phi(\cdot,z))$ are connected subgraphs of the polymer graph (otherwise the Ursell function is zero) which only contain at most  $k$ polymers of size at most $\frac k\rho$.
	
	Enumerating all these sets $S$ can be done like in the proof of~\cite[Theorem 2.2]{2018:Helmuth:algorithmic_pirogov_sinai_theory} by enumerating all trees with at most $k$ vertices (to be spanning trees that ensure connectivity) and assigning them polymers of size at most $\frac k \rho$ such that only incompatible polymers are assigned to adjacent vertices to create all subsets $S\subseteq K$ such that $G(S)$ is connected. Property~3 ensures that it is possible to efficiently enumerate the at most $\exp(\mathcal{O}(k)+\log |K|)$ polymers to choose for the root vertex, and moving downward in the tree, property~4 restricts and efficiently computes the choices to pick an incompatible neighbouring polymer. Regardless of the underlying polymer structure, the Ursell function of a graph $H$ (subgraph $G(S)$ of the polymer graph of the general polymer system) can be computed in $\ex^{\mathcal O(|V(H)|)}$ as described in~\cite[Lemma 2.6]{2018:Helmuth:algorithmic_pirogov_sinai_theory}. This enumeration of the relevant sets $S\subseteq K$ and the computation of their contribution to the $k$-th Taylor coefficient of $\log Z(K,\Phi(\cdot,z))$ can be done in $\exp(\mathcal{O}(k)+\log |K|)$. Overall, an $\varepsilon$-approximation for $Z(K,\Phi(\cdot,\cdot))$ can hence be computed by computing the $k$-th Taylor coefficient of $\log Z(K,\Phi(\cdot,z))$  for all $k\leq m$ which yields a total running time in  $\exp(\mathcal{O}((1-\frac{|z|}{\delta})^{-1}\log(\frac d \varepsilon))+\log |K|)$.
\end{proof}

\section{Holants as polymer models}
\label{sec:holant_polynomials}

In this section we explain how to express a Holant polynomial as the partition function of a polymer system. Recall the definition of a Holant polynomial

\[
Z_\SetFunctions(G,\pi,\mathbf{z})=\sum_{\sigma\in \domain^E} \prod_{v\in V(G)}f_v(\sigma|_{\adjedges(v)})\prod_{i=0}^\domainvar z_i^{|\sigma|_i},
\]
where $|\sigma|_i$ denotes the number of edges for which $\sigma$ assigns the value $i$. We implicitly assume a fixed total order on the edges in $G$ which gives a well-defined mapping of a subset of edges to a vector: in the above definition $\sigma|_{\adjedges(v)}$ represents the vector $(\sigma(e_1),\dots,\sigma(e_d))$, where $(e_1,\dots,e_d)$ is the ordered set of edges adjacent to $v$.


To represent  a general Holant polynomial as a polymer partition function we restrict the class of signatures to $\mathcal F_0=\{f\mid f(\mathbf{0})\neq 0\}$. We further require that $z_0\neq 0$, since if we are interested in approximating Holants with $\mathbf{z}$ that contains $0$ values, we can re-translate the Holant to reduce the domain and exclude these elements as any assignment that maps to them will not contribute to the partition function. These requirements ensure that the assignment $\sigma_0$ that maps every edge to $0$ contributes a non-zero weight to the partition function. The families of compatible polymers will express assignments in terms of their ``deviations'' from the ground state $\sigma_0$. 

Given $Z_{\SetFunctions_0}(G,\pi,\mathbf{z})$ over domain $\domain=\{0,1,\dots,\domainvar\}$ we define the following polymer model. A polymer $\dompoly$ for $G$ is a pair  $(\gamma, \phi_\gamma)$, where $\gamma\in \mathcal C(G)$ (so a polymer as before, a connected subgraph of $G$ with at least one edge) and  $\phi_\gamma\colon E(\gamma)\rightarrow [\domainvar]$. Let $\mathcal C_\domainvar(G)$ denote the set of all such polymers for $G$. As in the Boolean case, for $\dompoly,\dompoly'\in\mathcal C_\domainvar(G)$, we say that two polymer are incompatible~$\dompoly\not\sim\dompoly'$, if $V(\gamma)\cap V(\gamma')\neq\emptyset$.
Finally, we write $\mathcal{I}_\domainvar(G)$ to denote the collection of finite subsets of pairwise compatible polymers of $\mathcal C_\domainvar(G)$.

%

For $\pi\colon V\rightarrow \SetFunctions_0$, we define the weight function $\Phi_\pi$ for each $\dompoly\in \mathcal C_\domainvar(G)$ and $\mathbf{z}\in \mathbb C^{\domainvar+1}$ by 
\[
\Phi_\pi(\dompoly, \mathbf{z})=\left(\prod_{i=1}^\domainvar \left(\frac{z_i}{z_0}\right)^{|\adjkappa(\dompoly)|_i}\right)\prod_{v\in V(\gamma)} f_v(\adjkappa(\dompoly))\frac{1}{f_v(\mathbf{0})}
\]
for each polymer~$\dompoly$, where
we write $\adjkappa(\dompoly)$ to denote $(\varphi_\gamma(e_1),\dots, \varphi_\gamma(e_d))$ with $\adjedges(v)=\{e_1,\dots,e_d\}$ (listed in the implicitly assumed fixed total order on the edges), where $\varphi_\gamma(e)=\phi_\gamma(e)$, if $e\in \gamma$ and $\varphi_\gamma(e)=0$, otherwise. Further, $|\adjkappa(\dompoly)|_i$ denotes the number of occurrences of the value $i$ in $\adjkappa(\dompoly)$.

The partition function for this graph polymer model then translates to:
\begin{align*}
Z(\mathcal C_\domainvar(G),\Phi_\pi(\cdot,\mathbf{z})) &= \sum_{\Gamma\in \mathcal{I}_\domainvar(G)} \prod_{\dompoly\in\Gamma} \Phi_\pi(\dompoly, \mathbf{z}) \\
&= \sum_{\Gamma\in \mathcal{I}_\domainvar(G)} \prod_{\dompoly\in\Gamma} 
\left(\prod_{i=1}^\domainvar \left(\frac{z_i}{z_0}\right)^{|\adjkappa(\dompoly)|_i}\right)\prod_{v\in V(\gamma)} f_v(\adjkappa(\dompoly))\frac{1}{f_v(\mathbf{0})}\,.
\end{align*}

Now we can show that our polymer model encodes the Holant as intended, i.e. an $\varepsilon$-approximation for the polymer partition function yields a $\varepsilon$-approximation for the Holant polynomial.

\begin{lemma}
	\label{lem:Holant_polynomial_to_polymer_partition_function}
	For all finite graphs~$G=(V,E)$, all $\SetFunctions\subseteq\SetFunctions_0$, all $\pi\colon V\rightarrow\mathcal F_0$ and all $\mathbf{z}\in \mathbb C^{\domainvar+1}$ with $z_0\neq 0$,
	\[
	Z_{\mathcal F}(G,\pi,\mathbf{z})=z_0^{|E(G)|}\left( \prod_{v\in V} f_v(\mathbf{0})\right) Z(\mathcal C_\domainvar(G),\Phi_\pi(\cdot,\mathbf{z})).
	\]
	\end{lemma}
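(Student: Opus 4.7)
The plan is to establish the identity by exhibiting a natural bijection between edge-assignments $\sigma \in D^{E}$ and families $\Gamma \in \mathcal{I}_\domainvar(G)$, and then to match weights on each side.

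\textbf{Step 1 (the bijection).} Given $\sigma \in D^{E}$, consider the edge set $E_\sigma := \{e \in E : \sigma(e) \neq 0\}$ and let $H_\sigma$ be the edge-induced subgraph with edges $E_\sigma$ (so isolated vertices are discarded). Decompose $H_\sigma$ into its connected components $\gamma_1,\dots,\gamma_t$; each $\gamma_j$ is a connected subgraph of $G$ with at least one edge, i.e.\ an element of $\mathcal{C}(G)$. Equip it with $\phi_{\gamma_j} := \sigma|_{E(\gamma_j)}$, which takes values in $[\domainvar]$ by construction. This yields a family $\Gamma(\sigma) := \{(\gamma_j,\phi_{\gamma_j})\}_j$ of pairwise vertex-disjoint, hence compatible, polymers. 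Conversely, given $\Gamma \in \mathcal{I}_\domainvar(G)$, define $\sigma_\Gamma(e) = \phi_\gamma(e)$ if $e \in E(\gamma)$ for some (necessarily unique, by compatibility) $\dompoly \in \Gamma$, and $\sigma_\Gamma(e)=0$ otherwise. One checks that $\Gamma \mapsto \sigma_\Gamma$ and $\sigma \mapsto \Gamma(\sigma)$ are mutually inverse: the key point is that vertex-disjoint $\gamma,\gamma' \in \Gamma$ cannot be joined in $H_{\sigma_\Gamma}$ by any edge, because every edge not belonging to some polymer receives value $0$.

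\textbf{Step 2 (weight of a single $\sigma$).} Fix $\sigma$ and write $\Gamma = \Gamma(\sigma)$. Using $|\sigma|_0 = |E| - \sum_{i=1}^{\domainvar}|\sigma|_i$, factor
\[
\prod_{i=0}^{\domainvar} z_i^{|\sigma|_i} = z_0^{|E|} \prod_{i=1}^{\domainvar}\left(\frac{z_i}{z_0}\right)^{|\sigma|_i},
\]
and note that $|\sigma|_i = \sum_{\dompoly \in \Gamma}|\{e \in E(\gamma) : \phi_\gamma(e)=i\}|$ for $i\geq 1$. Partition $V$ into vertices $v \in V(\gamma)$ for some (unique) $\dompoly \in \Gamma$ and vertices outside $\bigcup_{\dompoly \in \Gamma}V(\gamma)$. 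For the latter, $\sigma|_{E(v)} = \mathbf{0}$, so their contribution is $\prod_v f_v(\mathbf{0})$. For the former, every edge incident to $v$ that does \emph{not} lie in $\gamma$ must satisfy $\sigma(e)=0$ (otherwise $v$ would belong to another polymer), which shows $\sigma|_{E(v)} = \adjkappa(\dompoly)$ at $v$. Multiplying and dividing by $\prod_{v \in V}f_v(\mathbf{0})$, we obtain
\[
\prod_{v\in V} f_v(\sigma|_{\adjedges(v)}) = \left(\prod_{v\in V} f_v(\mathbf{0})\right)\prod_{\dompoly\in \Gamma}\prod_{v\in V(\gamma)}\frac{f_v(\adjkappa(\dompoly))}{f_v(\mathbf{0})}.
\]

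\textbf{Step 3 (assembling the polymer weight and summing).} Combining the previous displays, the contribution of $\sigma$ to $Z_{\SetFunctions}(G,\pi,\mathbf{z})$ equals
\[
z_0^{|E|}\left(\prod_{v\in V} f_v(\mathbf{0})\right)\prod_{\dompoly \in \Gamma}\Phi_\pi(\dompoly,\mathbf{z}),
\]
since the per-polymer factors $\prod_i (z_i/z_0)^{|\cdot|_i}$ and $\prod_{v\in V(\gamma)}f_v(\adjkappa(\dompoly))/f_v(\mathbf{0})$ exactly reconstitute $\Phi_\pi(\dompoly,\mathbf{z})$. Summing over $\sigma$ and using the bijection to re-index as a sum over $\Gamma \in \mathcal{I}_\domainvar(G)$ yields precisely the right-hand side of the lemma.

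The only non-routine point is Step~1, and specifically the claim that compatibility (vertex-disjointness) of the polymers in $\Gamma$ is sufficient to guarantee that the reconstruction $\sigma_\Gamma$ really has $\gamma_1,\dots,\gamma_t$ as its connected components of non-zero edges; everything afterwards is bookkeeping that matches the definition of $\Phi_\pi$ term by term. No signatures need be evaluated at non-trivial points and no special structure of $\SetFunctions$ beyond $f_v(\mathbf{0})\neq 0$ (which is the hypothesis $\SetFunctions\subseteq \SetFunctions_0$) is used, exactly as expected.
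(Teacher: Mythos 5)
Your proposal is correct and follows essentially the same route as the paper's proof: establish the bijection between edge-assignments $\sigma$ and compatible families $\Gamma\in\mathcal I_\domainvar(G)$ via connected components of the nonzero-edge subgraph, then match weights term by term. The paper presents the weight-matching as a chain of algebraic manipulations starting from the right-hand side, while you argue per-assignment from the left-hand side, but the content is identical, including the observation that vertex-disjointness of polymers is exactly what makes the reconstruction map well-defined and inverse to the decomposition.
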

	\begin{proof}
		Every assignment~$\sigma$ uniquely corresponds to the set of $\domainvar$-edge-coloured connected subgraphs which are induced by the edges $e\in E$ with $\sigma(e)\neq 0$ together with the colouring given by~$\sigma$. More formally, let $G[\sigma]$ be the subgraph of~$G$ induced by the edges $e$ with $\sigma(e)\neq 0$. By definition, each maximal connected component $\gamma$ in $G[\sigma]$ is a polymer in $\mathcal C(G)$, so $(\gamma,\sigma|_{E(\gamma)})$ is a polymer in $\mathcal C_\domainvar(G)$; observe that by definition, on such a component $\sigma$ only assigns the colours in $[\domainvar]$.	Furthermore, the collection of all maximal connected components in $G[\sigma]$ is a set of pairwise compatible polymers in $\mathcal C(G)$, so their coloured versions are pairwise compatible in $\mathcal C_\domainvar (G)$.
		Consequently, the set of maximal connected components in $G[\sigma]$ together with their colourings assigned by $\sigma$ is a set in $\mathcal I_\domainvar(G)$.\par
		Conversely, every set $\bar\Gamma\in\mathcal I_\domainvar(G)$ uniquely corresponds to the assignment $\sigma_{\bar\Gamma}$ defined by $\sigma_{\bar \Gamma}(e)=\phi_\gamma(e)$ if $e\in E(\gamma)$, for some $\dompoly=(\gamma,\phi_\gamma)\in\bar\Gamma$ (observe that in this case there is only one $\dompoly\in \Gamma$ which covers this edge by the compatibility condition, so this is well-defined) and $\sigma_{\bar \Gamma}(e)=0$ otherwise.\par
		We abuse notation and denote families $\bar\Gamma \in \mathcal I_\kappa(G)$ with ``bar'' to use $\Gamma$ to refer to the corresponding collection of subgraphs $\gamma\in \mathcal C(G)$ with $(\gamma,\phi)\in\bar \Gamma$ for some colouring function $\phi$. Such an uncoloured collection $\Gamma$ can then be seen as a subgraph of $G$.
		%
		These observations yield
		\begin{align*}z_0^{|E(G)|}&\left( \prod_{v\in V} f_v(\mathbf{0})\right)Z(\mathcal C_\kappa(G),\Phi_\pi(\cdot,\mathbf{z}))\\
		=& z_0^{|E(G)|}\left( \prod_{v\in V} f_v(\mathbf{0})\right)\sum_{\bar \Gamma\in \mathcal{I}_\kappa(G)} \prod_{\dompoly\in\bar \Gamma} \left(\prod_{i=1}^\domainvar \left(\frac{z_i}{z_0}\right)^{|\adjkappa(\dompoly)|_i}\right)\prod_{v\in V(\gamma)} f_v(\adjkappa(\dompoly))\frac{1}{f_v(\mathbf{0})}\\[2ex]
		=& z_0^{|E(G)|}\sum_{\bar\Gamma\in \mathcal{I}_\kappa(G)} \left( \prod_{v\in V} f_v(\mathbf{0})\right)\prod_{v\in V(G[\Gamma])}\frac{1}{f_v(\mathbf{0})}
		\prod_{v\in V(G[\Gamma])} f_v(\adjkappa(\bar \Gamma)) \prod_{\dompoly\in\bar\Gamma} \left(\prod_{i=1}^\domainvar \left(\frac{z_i}{z_0}\right)^{|\adjkappa(\dompoly)|_i}\right)\\[2ex]
		=&z_0^{|E(G)|}\sum_{\bar\Gamma\in \mathcal{I}_\kappa(G)}\prod_{v\notin V(G[\Gamma])}f_v(\mathbf{0})\prod_{v\in V(G[\Gamma])}f_v(\adjkappa(\bar\Gamma)) \left(\prod_{i=1}^\domainvar \left(\frac{z_i}{z_0}\right)^{|\adjkappa(\bar\Gamma)|_i}\right)\\[2ex]
		=&z_0^{|E(G)|}\sum_{\bar\Gamma\in \mathcal{I}_\kappa(G)}\prod_{v\in V} f_v(\adjkappa(\bar\Gamma)) \left(\prod_{i=1}^\domainvar \left(\frac{z_i}{z_0}\right)^{|\adjkappa(\bar\Gamma)|_i}\right)\\[2ex]
		=&z_0^{|E(G)|}\sum_{\bar\Gamma\in \mathcal{I}_\kappa(G)}\prod_{v\in V} f_v(\sigma_{\bar \Gamma}|_{\adjedges(v)}) \left(\prod_{i=1}^\domainvar \left(\frac{z_i}{z_0}\right)^{|\sigma_{\bar\Gamma}|_i}\right)\\
		=& z_0^{|E(G)|}\sum_{\sigma\in\domain^E} \prod_{v\in V(G)}f_v(\sigma|_{\adjedges(v)})\prod_{i=1}^\domainvar \left(\frac{z_i}{z_0}\right)^{|\sigma|_i}\\
		=&\sum_{\sigma\in\domain^E} \prod_{v\in V(G)}f_v(\sigma|_{\adjedges(v)})\left(z_0^{|E(G)|-\sum_{i=1}^k|\sigma|_i}\right)\prod_{i=1}^\domainvar z_i^{|\sigma|_i}\\
		=&\sum_{\sigma\in\domain^E} \prod_{v\in V(G)}f_v(\sigma|_{\adjedges(v)})\prod_{i=0}^\domainvar z_i^{|\sigma|_i}\\
		=&Z_\SetFunctions(G,\pi,\mathbf{z})\,.
		\end{align*}
		This concludes the lemma.
		\end{proof}
		
We remark that the choice to use colour $0$ as a ground state is only for convenience. If there exists $c\in D$ with $z_c\neq 0$ and the all $c$ configuration~$\sigma_c$ evaluating a non-zero term in the Holant partition function, then we can use a polymer model whose compatible families express deviations from $\sigma_c$. In fact, as we will see in Section~\ref{sec:outlook} we can define polymer systems as deviations from any ground state assignment $\sigma$ with non-zero weight.

\section{Deterministic algorithms}
\label{sec:fptas}
We have already explained how to translate a Holant partition function $Z_{\mathcal F_0}(G,\pi,\mathbf{z})$ to a polymer partition function $Z(\mathcal{C}(G),\Phi_\pi(\cdot,\mathbf{z}))$, for $\mathbf{z}$ with $z_0\neq0$. In this section we will first identify the combinations of signature families and values of $\mathbf{z}$ for which the Taylor expansion of the Holants convergence absolutely via the Koteck{\'y}--Preiss condition and then show how Theorem~\ref{thm:polymers_fptas} applies.

\subsection{Holant polynomials}	\label{sec:polynomials_fptas}

			We begin by establishing absolute convergence for the logarithm of Holant polynomials. 
			As in the Boolean case, we use an upper bound on the number of connected subgraphs of the input graph $G$ that contain a fixed vertex similar to Lemma~\ref{lem:connected_subgraph_bound}. As we are now working in domain of arbitrary size, we will use the following generalisation of Lemma~\ref{lem:connected_subgraph_bound}.
			
		\begin{corollary}\label{lem:connected_subgraph_bound_kappa}
			In any graph $G$ of maximum degree $\Delta$, the number of  connected $\domainvar$-edge-coloured subgraphs of $G$ with $m$ edges containing a fixed vertex  is at most $\frac{(\Delta\domainvar\ex)^m}{2}$.
			\end{corollary}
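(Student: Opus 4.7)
The plan is to derive this directly from Lemma~\ref{lem:connected_subgraph_bound} by separating the combinatorial choice of the underlying uncoloured subgraph from the choice of the edge colouring. Concretely, a connected $\kappa$-edge-coloured subgraph with $m$ edges containing a fixed vertex $v$ is specified by two pieces of data: first, a connected subgraph of $G$ with $m$ edges containing $v$, and second, a map from its $m$ edges to the colour set $[\kappa] = \{1, \dots, \kappa\}$ (recall from the polymer construction in Section~\ref{sec:holant_polynomials} that $\phi_\gamma$ takes values in $[\kappa]$, not in $D$).

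First, I would invoke Lemma~\ref{lem:connected_subgraph_bound} to bound the number of uncoloured connected subgraphs of $G$ with $m$ edges containing a fixed vertex by $\frac{(e\Delta)^m}{2}$. Then, for each such subgraph, there are exactly $\kappa^m$ ways to assign a colour from $[\kappa]$ to each of its $m$ edges. Multiplying the two bounds yields
\[
\frac{(e\Delta)^m}{2} \cdot \kappa^m \;=\; \frac{(\Delta \kappa e)^m}{2},
\]
which is the desired bound.

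There is no real obstacle here: the statement is a corollary precisely because it is a one-line consequence of Lemma~\ref{lem:connected_subgraph_bound} once one notices that colouring the edges is independent of choosing the subgraph structure, and that the bound decomposes multiplicatively.
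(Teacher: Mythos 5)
Your proof is exactly the paper's argument: invoke Lemma~\ref{lem:connected_subgraph_bound} for the $\frac{(e\Delta)^m}{2}$ bound on uncoloured connected subgraphs and multiply by the $\kappa^m$ possible edge-colourings. Correct, and no differences worth noting.
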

			\begin{proof}	 
			Considering the $\domainvar^m$ different colourings that can be assigned to a graph with $m$ edges and using Lemma~\ref{lem:connected_subgraph_bound}, the claimed bound follows immediately.
			\end{proof}

Recall that for $f\in\domainvarfunctions$ with arity $d$ we define $r(f)=\max_{x\in\domain^d\setminus\{\mathbf 0\}}\{|f(x)|/|f(\mathbf{0})|\}$ and that for a function class $\SetFunctions\subseteq\domainvarfunctions$ we define $r(\SetFunctions)=\max_{f\in\SetFunctions}\{r(f)\}$. 
			
				\begin{lemma}\label{lem:holant_poly_zeros}
					Let $\mathcal F\subseteq \mathcal F_0$. For all graphs $G$ of maximum degree $\Delta$ and all $\pi\colon V\rightarrow \mathcal F$, the Taylor expansion of $\log Z_{\mathcal F}(G,\pi,\mathbf{z})$ converges absolutely in \[\left\{(z_0,z_1,\dots,z_\domainvar)\in \mathbb C^{\domainvar+1}\ \Big|\ z_0\neq0, \frac{|z_i|}{|z_0|}\leq (\Delta\domainvar\ex^{2}\maxValueOne(\maxValueOne+1))^{-1},\ 1\leq i\leq \domainvar\right\}\]
					with $\maxValueOne=\max\{1, r((\mathcal F))\}$.
					\end{lemma}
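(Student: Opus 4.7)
The plan is to apply the Kotecký--Preiss condition (Theorem~\ref{thm:cluster_expansion_polymer_models}) to the polymer system $(\mathcal C_\kappa(G), \not\sim)$ with the weight function $\Phi_\pi(\cdot,\mathbf{z})$ introduced in Section~\ref{sec:holant_polynomials}. By Lemma~\ref{lem:Holant_polynomial_to_polymer_partition_function} we have $Z_{\mathcal F}(G,\pi,\mathbf{z}) = z_0^{|E(G)|}\bigl(\prod_v f_v(\mathbf{0})\bigr)\, Z(\mathcal C_\kappa(G),\Phi_\pi(\cdot,\mathbf{z}))$, and since $z_0\neq 0$ and $\mathcal F\subseteq\mathcal F_0$ (so $f_v(\mathbf{0})\neq 0$), the logarithm of $Z_{\mathcal F}(G,\pi,\mathbf{z})$ differs from $\log Z(\mathcal C_\kappa(G),\Phi_\pi(\cdot,\mathbf{z}))$ by a function that is analytic in $z_0$, so absolute convergence of the latter's Taylor expansion (in the ratios $z_i/z_0$) transfers to $\log Z_{\mathcal F}$ throughout the stated region.

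For the verification I would pick the size function $a(\bar\gamma) = |E(\gamma)|$, which is strictly positive on $\mathcal C_\kappa(G)$. First I bound a single polymer weight. Since $\sum_{i=1}^{\kappa}|\kappa(\bar\gamma)|_i = |E(\gamma)|$, and since for every $v \in V(\gamma)$ the argument of $f_v$ is non-zero (as $v$ is incident to at least one edge of $\gamma$), the definition of $r(\mathcal F)$ gives
\[
|\Phi_\pi(\bar\gamma,\mathbf{z})| \;\leq\; z^{|E(\gamma)|}\, r(\mathcal F)^{|V(\gamma)|} \;\leq\; r_1\,(z\,r_1)^{|E(\gamma)|},
\]
where $z := \max_{1\leq i\leq\kappa} |z_i|/|z_0|$ and I used $|V(\gamma)|\leq |E(\gamma)|+1$ together with $r_1\geq 1$.

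Next I would count the incompatible polymers by size. If $\bar\gamma'\not\sim\bar\gamma$ then $\gamma'$ shares a vertex with $\gamma$, so Corollary~\ref{lem:connected_subgraph_bound_kappa} bounds the number of such $\bar\gamma'$ with $|E(\gamma')|=m$ by $|V(\gamma)|\,(\Delta\kappa\ex)^m/2$. Combining this with the weight bound and $e^{a(\bar\gamma')}=e^{|E(\gamma')|}$ gives
\[
\sum_{\bar\gamma' \not\sim \bar\gamma}\! |\Phi_\pi(\bar\gamma',\mathbf z)|\, e^{a(\bar\gamma')} \;\leq\; \frac{|V(\gamma)|\,r_1}{2}\sum_{m=1}^{\infty}\bigl(\Delta\kappa\ex^{2}\,z\,r_1\bigr)^{m}.
\]
Setting $t := \Delta\kappa\ex^{2} z\,r_1$, the hypothesis $z \leq (\Delta\kappa\ex^{2} r_1(r_1+1))^{-1}$ yields $t\leq 1/(r_1+1)$ and hence $\sum_{m\geq 1} t^m = t/(1-t) \leq 1/r_1$. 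The whole expression is therefore at most $|V(\gamma)|/2 \leq |E(\gamma)| = a(\bar\gamma)$, where the last inequality uses $|V(\gamma)|\leq |E(\gamma)|+1 \leq 2|E(\gamma)|$ for $|E(\gamma)|\geq 1$. Theorem~\ref{thm:cluster_expansion_polymer_models} then delivers absolute convergence of the cluster expansion of $\log Z(\mathcal C_\kappa(G),\Phi_\pi(\cdot,\mathbf{z}))$; by Dobrushin's identity (mentioned in Section~\ref{sec:intro_method}) this coincides with the Taylor series, which proves the lemma.

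The only genuinely delicate points are bookkeeping: pulling one factor of $r_1$ out of $r_1^{|V(\gamma)|}$ via the connectedness bound $|V(\gamma)|\leq |E(\gamma)|+1$ (so that the per-term decay depends only on $|E(\gamma')|$), and calibrating the quantifier $r_1(r_1+1)$ in the denominator of the bound on $|z_i/z_0|$ so that the geometric series sums to $1/r_1$ and cancels the leading $r_1$. The maximum-degree factor $\Delta$ and domain factor $\kappa$ then appear naturally from Corollary~\ref{lem:connected_subgraph_bound_kappa}.
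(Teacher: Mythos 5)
Your proof is correct and takes essentially the same route as the paper: translate to the polymer partition function via Lemma~\ref{lem:Holant_polynomial_to_polymer_partition_function}, verify the Koteck{\'y}--Preiss condition (Theorem~\ref{thm:cluster_expansion_polymer_models}) with a size function linear in $|E(\gamma)|$, and bound the incompatible-polymer sum using Corollary~\ref{lem:connected_subgraph_bound_kappa} and $|V(\gamma)|\leq |E(\gamma)|+1$. The only (cosmetic) difference is that you fix $a(\gamma)=|E(\gamma)|$ from the start and verify the stated bound directly, whereas the paper keeps $a(\gamma)=\alpha|E(\gamma)|$ with a free parameter, derives the optimal $\alpha$, and then sets $\alpha=1$ to get the same final expression.
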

					\begin{proof}
We want to apply Theorem~\ref{thm:cluster_expansion_polymer_models} on the polymer representation $Z(\mathcal C_\domainvar(G),\Phi_\pi(\cdot,\mathbf{z}))$. Consider a fixed  $\dompoly \in\mathcal C_\domainvar(G)$. Each polymer $\dompoly'$ incompatible with $\dompoly$ contributes to the sum that has to be bounded by $a(\dompoly)$ the term
\begin{align*}
|\Phi_\pi(\dompoly',\mathbf{z})|\,\ex^{a(\dompoly')}= & \left(\prod_{i=1}^\domainvar \left(\frac{|z_i|}{|z_0|}\right)^{|\adjkappa(\dompoly')|_i}\prod_{v\in V(\gamma')} |f_v(\adjkappa(\dompoly'))|\frac{1}{|f_v(\mathbf{0})|}\right)\ex^{a(\dompoly')}\\ \leq & \ \left(\max_{1\leq i\leq \domainvar}\left\{\frac{|z_i|}{|z_0|}\right\}\right)^{|E(\gamma')|}\,\maxValueOne^{|V(\gamma')|} \,\ex^{a(\dompoly')}.
\end{align*}
 
In the following we will estimate the number of polymers $\dompoly'$ incompatible with $\dompoly$ with respect to their number of edges. To this end, denote by $\mathcal C _\domainvar^{\dompoly}(i)$ the set of polymers $\dompoly'\in \mathcal C_\domainvar(G)$ with $\dompoly' \not\sim \dompoly$ and $i=|E(\gamma')|$, for each $1\leq i\leq |E(G)|$.
First, observe that by the definition of the polymers in $\mathcal{C}_\domainvar(G)$ the incompatibility $\dompoly' \not\sim \dompoly$ implies that the corresponding subgraphs $\gamma'$ and $\gamma$ share at least one vertex. Further,   $\gamma'$ has to be a connected subgraph of $G$. By Lemma~\ref{lem:connected_subgraph_bound_kappa} we conclude that $\mathcal C_\domainvar^{\dompoly}(i)$ has cardinality at most $|V(\gamma)|(\frac{(\Delta\domainvar\ex)^{i}}{2})$.
 
Consider $a(\dompoly)=\alpha|E(\gamma)|$ for some constant $\alpha>0$ to be assigned optimally later. This choice of $a$ with the bounds on the cardinality of $\mathcal C_\domainvar^{\dompoly}(i)$, and the fact that a connected subgraph with $i$ edges contains at most $i+1$ vertices, yields
\begin{align*}
\sum_{\dompoly' \not\sim \dompoly} |\Phi_\pi(\dompoly',\mathbf{z})| \ex^{a(\dompoly')}
\leq & \sum_{i=1}^{|E(G)|} |\mathcal C_\domainvar^{\dompoly}(i)|\,\left(\max_{1\leq i\leq \domainvar}\left\{\frac{|z_i|}{|z_0|}\right\}\right)^i\, \maxValueOne^{i+1}\,\ex^{\alpha i}
\\
\leq &\sum_{i=1}^{|E(G)|}|V(\gamma)|\frac{(\Delta\domainvar\ex)^{i}}{2} \left(\max_{1\leq i\leq \domainvar}\left\{\frac{|z_i|}{|z_0|}\right\}\right)^i\, \maxValueOne^{i+1}\,\ex^{\alpha i}.
\end{align*}
With $\beta$ as shorthand on the upper bound on $\max_{1\leq i\leq \domainvar}\left\{\frac{|z_i|}{|z_0|}\right\}$ to describe the region for $\mathbf{z}$ we can reformulate the condition of Theorem~\ref{thm:cluster_expansion_polymer_models}  to
\begin{align*}
&&\sum_{i=1}^{|E(G)|}& |V(\gamma)|\frac{(\Delta\domainvar\ex)^{i}}{2} \beta^i\, \maxValueOne^{i+1}\,\ex^{\alpha i} &&\leq& \alpha|E(\gamma)| \\
&\Leftrightarrow & \sum_{i=1}^{|E(G)|}&(\Delta\domainvar\ex^{\alpha+1}\beta \maxValueOne)^{i} &&\leq& 2\alpha\frac{|E(\gamma)|}{|V(\gamma)|\maxValueOne} \\
&\Leftarrow & &\frac{1}{1-\Delta\domainvar\ex^{\alpha+1}\beta \maxValueOne} -1 &&\leq& \frac{\alpha}{\maxValueOne} \\
&\Leftrightarrow & & \Delta\domainvar \ex^{\alpha+1}\beta\maxValueOne && \leq & \frac{\alpha}{\alpha + \maxValueOne} \\
&\Leftrightarrow & \beta&&&\leq&\frac{\alpha}{\maxValueOne\Delta\domainvar\ex^{\alpha+1}(\alpha+\maxValueOne)}.
\end{align*}
 
Maximizing the region for $\mathbf{z}$ for which approximation is possible hence means choosing $\alpha$ depending on $\maxValueOne$ to maximize $\frac{\alpha}{\maxValueOne\ex^{\alpha+1}(\alpha+\maxValueOne)}$.\\
The first derivative of this expression with respect to $\alpha$ is
\[\frac{\maxValueOne\ex^{\alpha+1}(\alpha+\maxValueOne)-\alpha(\maxValueOne\ex^{\alpha+1}(\alpha+\maxValueOne)+\maxValueOne\ex^{\alpha+1})}{(\maxValueOne\ex^{\alpha+1}(\alpha+\maxValueOne))^2} \,=\,\frac{\maxValueOne-\maxValueOne\alpha-\alpha^2}{\maxValueOne\ex^{\alpha+1}(\alpha+\maxValueOne)^2}.\]
The only positive value for $\alpha$ that sets this to zero is $\alpha=\frac12(\sqrt{\maxValueOne}\sqrt{\maxValueOne+4}-\maxValueOne)$, and furthermore the second derivative is negative at this point.\\
Plugging this into the bound on $\beta$ gives
\[\beta\leq \frac{\sqrt{\maxValueOne}\sqrt{\maxValueOne+4}-\maxValueOne}{\maxValueOne\Delta\domainvar(\sqrt{\maxValueOne}\sqrt{\maxValueOne+4}+\maxValueOne)\ex^{\frac12(\sqrt{\maxValueOne}\sqrt{\maxValueOne+4}-\maxValueOne)+1}}.\]

Note that any value of $\alpha>0$ gives a bound. As $\maxValueOne$ goes to infinity, $\alpha$ goes to $1$. Setting $\alpha=1$ gives the simpler bound
\[\beta\leq \frac{1}{\Delta\domainvar\ex^{2}\maxValueOne(\maxValueOne+1)}\]
stated in the lemma.
						\end{proof}

What remains to obtain an FPTAS is to show how to apply Theorem~\ref{thm:polymers_fptas} to the polymer partition function $Z(\mathcal C_\domainvar(G),\Phi_\pi(\cdot,\mathbf{z}))$. Observe, that from 
 the definition of the weights $\Phi_\pi(\dompoly, \mathbf{z})$, $Z(\mathcal C_\domainvar(G),\Phi_\pi(\cdot,\mathbf{z}))$ can be a multivariate polynomial, so not necessarily univariate. This introduces a catalogue of issues when attempting to apply Theorem~\ref{thm:polymers_fptas} on $Z(\mathcal C_\domainvar(G),\Phi_\pi(\cdot,\mathbf{z}))$ directly. To circumvent these issues, we introduce an additional variable $x$ and transform the weights $\Phi_\pi(\dompoly, \mathbf{z})$ to univariate polynomials of $x$ where we assume the variables $\mathbf{z}$ to be fixed.

For a given weight $\Phi_\pi(\dompoly, \mathbf{z})$ we define the univariate weight 
\[
\Phi^x_\pi(\dompoly, \mathbf{z},x) = \left(\prod_{i=1}^\domainvar \left(\frac{z_i}{z_0}\right)^{|\adjkappa(\dompoly)|_i}\right) x^{|E(\gamma)|} \prod_{v\in V(\gamma)} f_v(\adjkappa(\dompoly))\frac{1}{f_v(\mathbf{0})} ,
\]
which serves as input for the partition function $Z(\mathcal C_\domainvar(G),\Phi^x_\pi(\cdot,\mathbf{z},x))$. In particular, setting $x=1$ yields the original weight $\Phi^x_\pi(\dompoly, \mathbf{z},1) = \Phi_\pi(\dompoly, \mathbf{z})$ and, interpreted as this univariate polynomial weight-function, the degree of $\Phi^x_\pi(\cdot,\mathbf{z},\cdot)$ corresponds to the number of edges $\dompoly$ which enables the application of Theorem~\ref{thm:polymers_fptas}.

{\renewcommand{\thetheorem}{\ref{thm:Holant_poly_FPTAS}}
\begin{theorem}
	Let $\mathcal F\subseteq \mathcal F_0$. For all graphs $G$ of maximum degree $\Delta$ and all $\pi\colon V\rightarrow \mathcal F$, the Holant polynomial admits an FPTAS for $\mathbf{z}$ in
	\[\left\{(z_0,z_1,\dots,z_\domainvar)\in \mathbb C^{\domainvar+1}\ \Big|\  z_0\neq0, \frac{|z_i|}{|z_0|}< (\Delta\domainvar\ex^{2}\maxValueOne(\maxValueOne+1))^{-1},\ 1\leq i\leq \domainvar\right\}\]
	with $\maxValueOne = \max\{1, r(\mathcal F)\}$.
\end{theorem}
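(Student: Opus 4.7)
The plan is to reduce to the univariate polymer partition function $Z(\mathcal C_\domainvar(G),\Phi^x_\pi(\cdot,\mathbf{z},x))$ in the auxiliary variable $x$, and then verify each of the six hypotheses of Theorem~\ref{thm:polymers_fptas} at $x=1$. First, by Lemma~\ref{lem:Holant_polynomial_to_polymer_partition_function}, the Holant polynomial equals $Z(\mathcal C_\domainvar(G),\Phi_\pi(\cdot,\mathbf{z}))$ up to the scalar factor $z_0^{|E(G)|}\prod_{v\in V} f_v(\mathbf{0})$, so by Observation~\ref{obs:epsilon-approximation} a multiplicative $\varepsilon$-approximation for the polymer partition function lifts immediately to one for $Z_{\mathcal F}(G,\pi,\mathbf{z})$. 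Second, note that $\Phi^x_\pi(\dompoly,\mathbf{z},x) = x^{|E(\gamma)|}\Phi_\pi(\dompoly,\mathbf{z}) = \Phi_\pi(\dompoly,\mathbf{z}_x)$ with the rescaled tuple $\mathbf{z}_x=(z_0,xz_1,\dots,xz_\domainvar)$, so $Z(\mathcal C_\domainvar(G),\Phi^x_\pi(\cdot,\mathbf{z},x))=Z(\mathcal C_\domainvar(G),\Phi_\pi(\cdot,\mathbf{z}_x))$, and the value at $x=1$ is exactly what we want to approximate.

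I take the size function $|\dompoly|:=|E(\gamma)|$ and check the conditions of Theorem~\ref{thm:polymers_fptas}. Condition~1 holds with $d\leq |E(G)|$, since each $\Phi^x_\pi(\dompoly,\mathbf{z},\cdot)$ is the monomial $c_{\dompoly}\,x^{|E(\gamma)|}$. For the same reason, Condition~2 holds with $\rho=1$: the Taylor expansion of $\Phi^x_\pi(\dompoly,\mathbf{z},\cdot)$ around zero has a single non-zero term of order exactly $|E(\gamma)|$. Condition~3 reduces to computing the coefficient $c_{\dompoly}=\bigl(\prod_{i=1}^\domainvar (z_i/z_0)^{|\adjkappa(\dompoly)|_i}\bigr)\prod_{v\in V(\gamma)} f_v(\adjkappa(\dompoly))/f_v(\mathbf{0})$, which is immediate from polynomial computability of the signatures in~$\SetFunctions$. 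Conditions~4 and~5 reduce to enumerating $\domainvar$-edge-coloured connected subgraphs of $G$ of bounded size (anchored anywhere, respectively intersecting the vertex set of a fixed polymer): combining Lemma~\ref{lem:connected_subgraph_bound} with the factor $\domainvar^m$ of colourings, exactly as in Corollary~\ref{lem:connected_subgraph_bound_kappa}, yields the required $\exp(\mathcal{O}(m)+\log n)$ and $\exp(\mathcal{O}(m+\log|\dompoly|))$ bounds.

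Condition~6 is the point that needs genuine input, and I see it as the main obstacle: Theorem~\ref{thm:polymers_fptas} requires a zero-free open disc of radius strictly greater than~$|x|=1$ around the origin in the variable $x$. This is where the strict inequality $|z_i|/|z_0|<(\Delta\domainvar\ex^{2}\maxValueOne(\maxValueOne+1))^{-1}$ in the theorem statement is used. Writing $\delta_0=(\Delta\domainvar\ex^{2}\maxValueOne(\maxValueOne+1))^{-1}$, Lemma~\ref{lem:holant_poly_zeros} applied to the rescaled parameter $\mathbf{z}_x$ gives absolute convergence of $\log Z(\mathcal C_\domainvar(G),\Phi_\pi(\cdot,\mathbf{z}_x))$ whenever $|x|\cdot|z_i|/|z_0|\leq \delta_0$ for all~$i$, i.e.\ whenever $|x|\leq \delta':=\delta_0/\max_{1\leq i\leq \domainvar}(|z_i|/|z_0|)$. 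The strict hypothesis forces $\delta'>1$, so the open disc $\{|x|<\delta'\}$ properly contains the point $x=1$, which verifies Condition~6.

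With all six conditions verified, Theorem~\ref{thm:polymers_fptas} produces a multiplicative $\varepsilon$-approximation to $Z(\mathcal C_\domainvar(G),\Phi^x_\pi(\cdot,\mathbf{z},1))=Z(\mathcal C_\domainvar(G),\Phi_\pi(\cdot,\mathbf{z}))$ in time $\exp\bigl(\mathcal{O}((1-1/\delta')^{-1}\log(|E(G)|/\varepsilon)+\log n)\bigr)$; since $\delta'>1$ is determined by the fixed ratios $|z_i|/|z_0|$ and $\rho=1$, this is polynomial in $n$ and $1/\varepsilon$. Rescaling by the constant $z_0^{|E(G)|}\prod_v f_v(\mathbf{0})$ via Observation~\ref{obs:epsilon-approximation} yields the desired FPTAS for $Z_{\mathcal F}(G,\pi,\mathbf{z})$.
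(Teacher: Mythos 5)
Your proof is correct and follows essentially the same route as the paper: reduce via Lemma~\ref{lem:Holant_polynomial_to_polymer_partition_function} to the polymer partition function, pass to the auxiliary univariate weight $\Phi^x_\pi$, verify the six hypotheses of Theorem~\ref{thm:polymers_fptas} with $\rho=1$ and $|\dompoly|=|E(\gamma)|$, and invoke Lemma~\ref{lem:holant_poly_zeros} to supply the zero-free disc in $x$ of radius strictly greater than~$1$. The one cosmetic improvement is your explicit identification $\Phi^x_\pi(\dompoly,\mathbf{z},x)=\Phi_\pi(\dompoly,\mathbf{z}_x)$ with the rescaled tuple $\mathbf{z}_x$, which makes Condition~6 a direct application of Lemma~\ref{lem:holant_poly_zeros} rather than ``adding $x$ to the calculations'' as the paper phrases it.
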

}
\begin{proof}
	We have that due to the definition of $\domainvarfunctions$ and Lemma~\ref{lem:Holant_polynomial_to_polymer_partition_function}
	\[
	Z_{\mathcal F}(G,\pi,\mathbf{z})=z_0^{|E(G)|}\left( \prod_{v\in V} f_v(\mathbf{0})\right) Z(\mathcal C_\domainvar(G),\Phi_\pi(\cdot,\mathbf{z})),
	\]	
	thus it suffices to approximate $Z(\mathcal C_\domainvar(G),\Phi_\pi(\cdot,\mathbf{z}))$. We use the above defined representation as a univariate polynomial $Z(\mathcal C_\domainvar(G),\Phi^x_\pi(\cdot,\mathbf{z},x))$ to apply Theorem~\ref{thm:polymers_fptas} and show how $Z(\mathcal C_\domainvar(G),\Phi^x_\pi(\cdot,\mathbf{z},x))$  satisfies the theorem's conditions. 
	As size functions we choose $|\mathcal C_\domainvar(G)|$ the cardinality of $G$, i.e.~the number of vertices, and as $|\dompoly|$  the number of edges of the support graph. 
	
	\begin{enumerate}
		\item  The partition function $Z(\mathcal C_\domainvar(G),\Phi^x_\pi(\cdot,\mathbf{z},x))$ is a polynomial of $x$ of maximum degree $|E(G)|$.
		\item For each polymer $\dompoly$ its weight $\Phi^x_\pi(\dompoly,\mathbf{z},x)$ is a monomial in $x$ of degree $|E(\gamma)|=|\dompoly|$, thus condition~2 holds with $\rho=1$.
		\item For each polymer $\dompoly$ its weight $\Phi^x_\pi(\dompoly,\mathbf{z},x)$  can be computed exactly, not just its first $m$ Taylor coefficients, which can be done in $\mathcal O(G)=\mathcal O(|\mathcal C_\domainvar(G)|)$. Recall here that we only consider constraint functions that can be computed efficiently for the general ground class $\mathcal F_0$.
		\item The support $\gamma$ of a polymer $\dompoly$ is a connected subgraph of $G$ and adjoined with the edge-colouring $\phi_\gamma$ we obtain, using Lemma~\ref{lem:connected_subgraph_bound_kappa}, that there are at most $\frac{(\Delta\domainvar\ex)^m}{2}$ such edge-coloured connected subgraphs with at most $m$ edges containing a fixed vertex. 
		There are $|V(G)|$ many choices for a fixed vertex and thus there are at most $|V(G)|m{(\Delta\domainvar\ex)^m} \in \exp(\mathcal{O}(m + \log |\mathcal C_\domainvar (G)|))$ polymers of size $m$. We can enumerate these as in~\cite[Lemma~3.7.]{2017:Patel:poly_time_approx_partition_function} by enumerating all connected subgraphs $S \subseteq G$ of size $|S|\leq m$, which requires a run-time in $\mathcal{O}(|V(G)|^2 m^7))$. Afterwards, for every such subgraph~$S$ we add the $\domainvar^{|S|} \leq \domainvar^m$ different edge-colourings and obtain a running time in $\exp(\mathcal{O}(m + \log |V(G)|))$.
		\item For $\dompoly \in \mathcal C_\domainvar (G)$ with $|\gamma|=m$ we have that $\dompoly' \in \mathcal C_\domainvar (G)$ is incompatible with $\dompoly$ if their supports $\gamma$ and $\gamma'$ share at least one vertex $v$. Hence, we enumerate for all $v \in V(\gamma)$ all polymers $\dompoly'$ of size at most $m$ containing $v$ and then remove duplicates. There are at most $\frac{(\Delta\domainvar\ex)^m}{2}$ such edge-coloured connected subgraphs of size $m$ containing a fixed vertex due to Lemma~\ref{lem:connected_subgraph_bound_kappa}. For each $v\in V(\gamma)$, we can then enumerate all connected subgraphs $\gamma'\subseteq G$ containing  $v$ and of size at most $m$ as in~\cite[Lemma~3.4.]{2017:Patel:poly_time_approx_partition_function}, which takes time $\mathcal{O}(|\gamma|^2 m^6)) \subseteq \exp(\mathcal{O}(m + \log |{\gamma}|))$.
		\item From Lemma~\ref{lem:holant_poly_zeros}, we have that $\log (Z(\mathcal C_\domainvar(G),\Phi_\pi(\cdot,\mathbf{z})))$ converges absolutely for $\mathbf z$ with $\frac{|z_i|}{|z_0|}\leq (\Delta\domainvar\ex^{2}\maxValueOne(\maxValueOne+1))^{-1}$ for all $1\leq i\leq \domainvar$. For a fixed $\mathbf z$ in the region defined by the statement of the theorem, denote $q=\min\{\frac{|z_0|}{|z_i|}(\Delta\domainvar\ex^{2}\maxValueOne(\maxValueOne+1))^{-1} \mid 1 \leq i\leq \domainvar\}>1$.	
Adding $x$ with $|x| \leq q$  to the  calculations in  Lemma~\ref{lem:holant_poly_zeros} shows that $\log (Z(\mathcal C_\domainvar(G),\Phi^x_\pi(\cdot,\mathbf{z},x)))$ converges absolutely for  $|x| \leq q$ and $\mathbf z$ in the region defined in the statement of the theorem.
	\end{enumerate}
In summary, the conditions of Theorem~\ref{thm:polymers_fptas} are satisfied with $K=\mathcal C_\domainvar(G)$, $\rho=1$ and $d=E(G)$ and for all $x$ with $|x|\leq q$. 
This shows that a $(1+\varepsilon)$-approximation for $Z(\mathcal C_\domainvar(G),\Phi^x_\pi(\cdot,\mathbf{z},x)$ can be computed in $\exp(\mathcal O((1-\frac{|x|}{q})^{-1}\log(\frac{E(G)}{\varepsilon})+\log(|\mathcal C_\domainvar(G)|)))$. 	With our choice of $|\mathcal C_\domainvar(G)|=|V(G)|$ and with $x$ chosen to be~1, to approximate the original partition function $Z(\mathcal C_\domainvar(G),\Phi_\pi(\cdot,\mathbf{z}))$, this yields an overall running-time in $\exp(\mathcal O((1-\frac{1}{q})^{-1}\log(\frac{E(G)}{\varepsilon})+\log|V(G)|))$ which is in $\exp(\mathcal O(\log(\frac{1}{\varepsilon})+\log|V(G)|))$ for any fixed $\mathbf z$.
	\end{proof}

\subsection{Holant problems}

The partition function of a Holant problem~$Z_\mathcal{F}(G,\pi)$ can be viewed as the value of a Holant polynomial $Z_\mathcal{F}(G,\pi,\mathbf{1})$ at $\mathbf{z}=(1,1,\dots,1)$. This immediately gives a polymer representation for Holant problems. The polymer partition function is simply $Z(\mathcal C_\domainvar(G), \Phi_\pi(\cdot,\mathbf{1}))$. 
 In this section we derive bounds that depend only on the functions of $\mathcal F$ that target Holant problems in particular. This gives the analogue of Lemma~\ref{lem:holant_poly_zeros} for Holant problems.
						
\begin{lemma}
\label{lem:Holant_zeros}
Let $\mathcal F\subseteq \mathcal F_0$ with $\maxValue(\SetFunctions)\leq \max\{(2\sqrt\ex)^{-1}(\Delta\kappa\ex)^{-\frac\Delta2} ,0.2058(\domainvar+1)^{-\Delta}\}$. For all graphs $G$ of maximum degree $\Delta$ and all $\pi\colon V\rightarrow \mathcal F$, the Taylor expansion of $\log Z_{\mathcal F}(G,\pi)$ converges absolutely.
\end{lemma}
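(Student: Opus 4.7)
The plan is to set $\mathbf z = \mathbf 1$ in the polymer representation of Lemma~\ref{lem:Holant_polynomial_to_polymer_partition_function}, which gives
\[
Z_\mathcal F(G, \pi) = \Bigl(\prod_{v\in V} f_v(\mathbf 0)\Bigr)\, Z(\mathcal C_\domainvar(G), \Phi_\pi(\cdot, \mathbf 1)),
\]
and then to verify the Koteck\'y--Preiss condition (Theorem~\ref{thm:cluster_expansion_polymer_models}) for the polymer partition function on the right; absolute convergence of its cluster expansion will transfer immediately to $\log Z_\mathcal F(G, \pi)$. Specialising to $\mathbf z = \mathbf 1$ reduces the weight to $\Phi_\pi(\dompoly, \mathbf 1) = \prod_{v\in V(\gamma)} f_v(\adjkappa(\dompoly))/f_v(\mathbf 0)$, so $|\Phi_\pi(\dompoly,\mathbf 1)| \leq \maxValue(\SetFunctions)^{|V(\gamma)|}$ when $\maxValue(\SetFunctions) \leq 1$. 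Unlike in Lemma~\ref{lem:holant_poly_zeros}, all smallness needed to satisfy
\[
\sum_{\dompoly' \not\sim \dompoly} |\Phi_\pi(\dompoly', \mathbf 1)|\, \ex^{a(\dompoly')} \leq a(\dompoly)
\]
must now be extracted from the exponent $|V(\gamma')|$ together with a suitable linear auxiliary function $a$.

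Because the statement claims the larger of two expressions, I would establish each bound by a separate choice of $a$ and a separate combinatorial estimate. For the first bound $(2\sqrt{\ex})^{-1}(\Delta\domainvar\ex)^{-\Delta/2}$ I would take $a(\dompoly) = \alpha|E(\gamma)|$, enumerate incompatible polymers by edge count using Corollary~\ref{lem:connected_subgraph_bound_kappa}, and invoke the max-degree inequality $|V(\gamma')| \geq 2|E(\gamma')|/\Delta$ to replace $\maxValue(\SetFunctions)^{|V(\gamma')|}$ by $\maxValue(\SetFunctions)^{2i/\Delta}$ inside the summand. The right-hand side collapses into a geometric series in $\Delta\domainvar\ex^{\alpha+1}\maxValue(\SetFunctions)^{2/\Delta}$, and choosing $\alpha$ (together with $|V(\gamma)| \leq |E(\gamma)|+1$) yields the displayed constant; this is essentially the proof of Lemma~\ref{lem:holant_poly_zeros} specialised to $\beta = 1$.

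For the second bound $0.2058\,(\domainvar+1)^{-\Delta}$, which is sharper when $\domainvar$ is small relative to $\Delta$, I would instead measure polymers by vertex count and set $a(\dompoly) = \alpha|V(\gamma)|$. The key observation is that each vertex $v$ of a polymer carries a local state $\varphi_\gamma|_{E(v)} \in \{0,1,\ldots,\domainvar\}^{E(v)} \setminus \{\mathbf 0\}$ with at most $(\domainvar+1)^\Delta$ possibilities; enumerating an incompatible polymer $\dompoly'$ by a rooted traversal starting from a fixed vertex of $V(\gamma)\cap V(\gamma')$ and bounding by this local-state count yields an upper bound of the form $((\domainvar+1)^\Delta)^n$ on the number of polymers with $n$ vertices containing a given root. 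The Koteck\'y--Preiss condition becomes a geometric series in $(\domainvar+1)^\Delta\,\maxValue(\SetFunctions)\,\ex^\alpha$, and optimising $\alpha$ subject to $\sum_{n\geq 2}(\,\cdot\,)^n \leq \alpha$ produces the explicit numerical constant $0.2058$.

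The main obstacle will be the vertex-based enumeration underlying the second bound: securing the base $(\domainvar+1)^\Delta$ without a parasitic $\Delta$ factor, and accounting correctly for consistency between the local states of adjacent polymer vertices so that each polymer is counted only a bounded number of times along the traversal. Once that enumeration is in place, the optimisation of $\alpha$ is routine; the first bound, by contrast, is a direct specialisation of the calculation already carried out in the proof of Lemma~\ref{lem:holant_poly_zeros}.
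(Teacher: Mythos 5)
Your overall strategy -- passing to the polymer representation at $\mathbf{z}=\mathbf{1}$, verifying the Koteck\'y--Preiss condition, and handling the two expressions in the maximum by two separate combinatorial estimates -- matches the paper. The second-bound derivation (vertex-based $a$, enumeration by vertex count giving $(\domainvar+1)^{j\Delta}$ polymers per root vertex, geometric series in $(\domainvar+1)^\Delta \maxValue \ex^\alpha$, optimal $\alpha=\tfrac{\sqrt5-1}{2}$) is exactly the paper's; the paper simply imports the vertex-count enumeration bound from Borgs, Chayes, Kahn and Lov\'asz, so the traversal argument you flag as the main obstacle is not something you need to reconstruct from scratch.

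However, the first bound has a genuine gap. You propose $a(\dompoly)=\alpha|E(\gamma)|$, enumerate by edge count $i$, substitute $\maxValue^{|V(\gamma')|}\le\maxValue^{2i/\Delta}$, and claim optimising $\alpha$ recovers $(2\sqrt\ex)^{-1}(\Delta\domainvar\ex)^{-\Delta/2}$. That is not so. With this choice the Koteck\'y--Preiss inequality reduces (after $|E(\gamma)|/|V(\gamma)|\geq1/2$) to $\sum_i\bigl(\Delta\domainvar\ex^{\alpha+1}\maxValue^{2/\Delta}\bigr)^i\leq\alpha$, hence $\maxValue\le\bigl(\tfrac{\alpha}{(\alpha+1)\Delta\domainvar\ex^{\alpha+1}}\bigr)^{\Delta/2}$. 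Optimising gives $\alpha=\tfrac{\sqrt5-1}{2}$ and a bound of roughly $(0.076)^{\Delta/2}(\Delta\domainvar)^{-\Delta/2}$, which is strictly smaller (i.e.\ a weaker zero-free region) than the stated $\tfrac{1}{2\sqrt\ex}(\Delta\domainvar\ex)^{-\Delta/2}\approx0.30\cdot\ex^{-\Delta/2}(\Delta\domainvar)^{-\Delta/2}$ for every $\Delta\ge2$. The culprit is the $a$-function: with $a$ linear in edges, the penalty $\ex^{a(\gamma')}$ grows like $\ex^{\alpha\Delta|V(\gamma')|/2}$ on the densest polymers, and that extra $\Delta/2$ in the exponent is exactly what you lose. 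The paper instead keeps $a(\dompoly)=\alpha|V(\gamma)|$ for \emph{both} expressions, sums over the vertex count $i$ of $\gamma'$, bounds the polymer count by $|V(\gamma)|\tfrac{(\Delta\domainvar\ex)^{\Delta i/2}}{2}$ (a connected subgraph on $i$ vertices has at most $\tfrac{\Delta i}{2}$ edges), and arrives at $\sum_i\bigl((\Delta\domainvar\ex)^{\Delta/2}\maxValue\,\ex^\alpha\bigr)^i\leq2\alpha$; here optimising $\tfrac{2\alpha}{(2\alpha+1)\ex^\alpha}$ at $\alpha=\tfrac12$ gives the stated constant. Relatedly, describing the first bound as a ``direct specialisation of Lemma~\ref{lem:holant_poly_zeros} at $\beta=1$'' is misleading: that proof uses $\maxValue\ge1$ together with $\maxValue^{|V(\gamma')|}\le\maxValue^{|E(\gamma')|+1}$, which reverses direction once $\maxValue<1$, so the argument is necessarily different, not a specialisation.
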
							
\begin{proof}
We again want to apply Theorem~\ref{thm:cluster_expansion_polymer_models} on the polymer representation $Z(\mathcal C_\domainvar(G),\Phi_\pi(\cdot,\mathbf{1}))$.	Consider a fixed  $\dompoly \in\mathcal C_\domainvar(G)$. With $\mathbf{z}$ fixed to $\mathbf{1}$, each polymer $\dompoly'$ incompatible with $\dompoly$ contributes to the sum that has to be bounded by $a(\dompoly)$ the term
	\begin{align*}
		|\Phi_\pi(\dompoly',\mathbf{1})|\,\ex^{a(\dompoly')}= & \left(\prod_{v\in V(\gamma')} |f_v(\adjkappa(\dompoly'))|\frac{1}{|f_v(\mathbf{0})|}\right)\ex^{a(\dompoly')}\\ \leq & \ \maxValue(\SetFunctions)^{|V(\gamma')|} \,\ex^{a(\dompoly')}.
	\end{align*}
								
	As in the proof of Lemma~\ref{lem:holant_poly_zeros}, we use $\mathcal C_\domainvar^{\dompoly}(i)$ to denote the set of polymers with $i$ edges and incompatible with $\dompoly$, and use the estimate $|\mathcal C_\domainvar^{\dompoly}(i)|\leq |V(\gamma)|(\frac{(\Delta\domainvar\ex)^{i}}{2})$.
								
	Since now $\maxValue(\SetFunctions)<1$, we need to consider a lower bound on the number of vertices in a polymer in $\mathcal C_\domainvar^{\dompoly}(i)$ to upper bound its weight. 
		A connected subgraph of maximum degree $\Delta$ with~$i$ vertices contains at most $\frac{\Delta i}{2}$ edges. This bound together with the choice  $a(\dompoly)=\alpha|V(\gamma)|$ for some $\alpha>0$ to be fixed later and $\maxValue=\maxValue(\SetFunctions)$ yields
	\begin{align*}
		\sum_{\dompoly' \not\sim \dompoly} |\Phi_\pi(\dompoly',\mathbf{1})| \ex^{a(\dompoly')}		\leq & \sum_{i=1}^{|V(G)|} |\mathcal C_\domainvar^{\dompoly}(\tfrac{\Delta i}{2})|\,\maxValue^{i}\,\ex^{i\alpha}
		\\
		\leq &\sum_{i=1}^{|V(G)|}|V(\gamma)|\frac{1}{2}((\Delta\domainvar\ex)^{\frac{\Delta}{2}} \, \maxValue\,\ex^{\alpha})^i.
			\end{align*}
			
With geometric series estimate for this sum, bounding this by  $a(\dompoly)=\alpha|V(\gamma)|$ yields the inequality
	\begin{align*}
|V(\gamma)|\frac 12\left(\frac{1}{1-(\Delta\domainvar\ex)^{\frac{\Delta}{2}} \, \maxValue\,\ex^{\alpha}}-1\right) &\leq& \alpha |V(\gamma)|\\
{1-(\Delta\domainvar\ex)^{\frac{\Delta}{2}} \, \maxValue\,\ex^{\alpha}} &\geq& \frac{1}{2\alpha +1}\\
(\Delta\domainvar\ex)^{\frac{\Delta}{2}} \, \maxValue\,\ex^{\alpha} &\leq& {1-\frac{1}{2\alpha +1}} \\
\maxValue &\leq& \frac{2\alpha}{(2\alpha +1)(\Delta\domainvar\ex)^{\frac{\Delta}{2}} \,\ex^{\alpha}}. \\
	\end{align*}
The best bound for $\alpha$ maximizes the function $\frac{2\alpha}{(2\alpha +1)\ex^{\alpha}}$. The first derivative for $\alpha$ is $\frac{-4\alpha^2-2\alpha +2}{(2\alpha +1)^2\ex^{\alpha}}$.
		This expression is zero when $\alpha=\frac 12$ and the second derivative is negative at this point.
		This yields the bound
		
		\[\maxValue\leq \frac{1}{2(\Delta\domainvar\ex)^{\frac{\Delta}{2}} \,\sqrt{\ex}}. \]

%
	

	Alternatively,  we could use $\mathcal V_\domainvar^{\dompoly}(j)$ to denote the set of polymers with $j$ \emph{vertices} that are incompatible with $\dompoly$. By~\cite{2013:Borgs:convergence_graphs_bounded_degree}, we can use the estimate $|\mathcal V_\domainvar^{\dompoly}(j)|\leq |V(\gamma)|(\domainvar+1)^{j\Delta}$. This bound together with the choice  $a(\dompoly)=\alpha|V(\gamma)|$ for some $\alpha>0$ yields the bound
	\begin{align*}
		\sum_{\dompoly' \not\sim \dompoly} |\Phi_\pi(\dompoly',\mathbf{1})| \ex^{a(\dompoly')}
		\leq & \sum_{j=1}^{|V(G)|} |\mathcal V_\domainvar^{\dompoly}(j)|\, \maxValue^{j}\,\ex^{\alpha j}
		\\
		\leq &\sum_{j=1}^{|V(G)|} |V(\gamma)|(\domainvar+1)^{j\Delta}\, \maxValue^{j}\,\ex^{\alpha j}.
	\end{align*} 
The Koteck{\'y}--Preiss condition is hence satisfied for $\alpha$ and $\maxValue$ such that the inequality
		\begin{align*}
&&\sum_{j=1}^{|V(G)|} |V(\gamma)|(\domainvar+1)^{j\Delta}\, \maxValue^{j}\,\ex^{\alpha j}&\leq \alpha|V(\gamma)|\\
\Leftrightarrow &&\sum_{j=1}^{|V(G)|} ((\domainvar+1)^{\Delta} \maxValue\,\ex^{\alpha })^j&\leq \alpha\\
\Leftarrow && \frac{1}{1-(\domainvar+1)^{\Delta} \maxValue\,\ex^{\alpha }}&\leq \alpha+1\\
\Leftrightarrow && {1-(\domainvar+1)^{\Delta} \maxValue\,\ex^{\alpha }}&\geq \frac{1}{\alpha+1}\\
\Leftrightarrow && {(\domainvar+1)^{\Delta} \maxValue\,\ex^{\alpha }}&\leq 1-\frac{1}{\alpha+1}\\
\Leftrightarrow && \maxValue &\leq \frac{\alpha}{(\alpha+1)(\domainvar+1)^{\Delta} \ex^{\alpha}}\\
	\end{align*} 
holds. To find the best $\alpha$ to maximize the function $\frac{\alpha}{(\alpha+1)\ex^{\alpha}}$, we again find the positive root of the derivative
%
%

This gives $\alpha=\frac{\sqrt 5 -1}{2}$, and the second derivative is negative at this point.
We obtain the bound 
	\[\maxValue\leq  \frac{{\sqrt 5 -1}}{(\sqrt 5 +1)(\domainvar+1)^{\Delta} \ex^{\frac{\sqrt 5 -1}{2}}} \approx 0.2058(\domainvar+1)^{-\Delta}. \]

Together the two bounds give the lemma.
%
\end{proof}

As in the case for Holant polynomials, the above lemma gives rise to FPTAS' for the following Holant problems.
{\renewcommand{\thetheorem}{\ref{thm:Holant_FPTAS}}
\begin{theorem}
 Let $\mathcal F\subseteq \mathcal F_0$, with $\maxValue(\SetFunctions)< \max\{(2\sqrt\ex)^{-1}(\Delta\kappa\ex)^{-\frac\Delta2} ,0.2058(\domainvar+1)^{-\Delta}\}$. For all graphs $G$ of maximum degree $\Delta$ and all $\pi\colon V\rightarrow \mathcal F$, the Holant problem $Z_{\mathcal F}(G,\pi)$ admits an FPTAS.
\end{theorem}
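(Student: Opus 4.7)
The plan is to reduce to the polymer partition function and then invoke Theorem~\ref{thm:polymers_fptas}, exactly as was done for Theorem~\ref{thm:Holant_poly_FPTAS}. By Lemma~\ref{lem:Holant_polynomial_to_polymer_partition_function} applied at $\mathbf{z}=\mathbf{1}$ we have
\[
Z_{\mathcal F}(G,\pi)=\left(\prod_{v\in V} f_v(\mathbf{0})\right) Z(\mathcal C_\domainvar(G),\Phi_\pi(\cdot,\mathbf{1})),
\]
so a multiplicative $\varepsilon$-approximation of $Z(\mathcal C_\domainvar(G),\Phi_\pi(\cdot,\mathbf{1}))$ gives one for $Z_{\mathcal F}(G,\pi)$ by Observation~\ref{obs:epsilon-approximation}, provided $\prod_{v\in V} f_v(\mathbf{0})$ is computable (which it is, since $\mathcal F\subseteq\mathcal F_0$ consists of efficiently computable signatures).

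Since Theorem~\ref{thm:polymers_fptas} is stated for families of weights parametrised by a single complex variable, we introduce the auxiliary univariate weight
\[
\Phi^x_\pi(\dompoly, \mathbf{1}, x)=x^{|E(\gamma)|}\prod_{v\in V(\gamma)}\frac{f_v(\adjkappa(\dompoly))}{f_v(\mathbf{0})}
\]
and work with the polymer partition function $Z(\mathcal C_\domainvar(G),\Phi^x_\pi(\cdot,\mathbf{1},x))$, which satisfies $\Phi^x_\pi(\dompoly,\mathbf{1},1)=\Phi_\pi(\dompoly,\mathbf{1})$. With size functions $|\mathcal C_\domainvar(G)|=|V(G)|$ and $|\dompoly|=|E(\gamma)|$, conditions~1--5 of Theorem~\ref{thm:polymers_fptas} follow from the same arguments used in the proof of Theorem~\ref{thm:Holant_poly_FPTAS}: the partition function is a polynomial of degree at most $|E(G)|$ in $x$; each weight is a monomial in $x$ of degree $|E(\gamma)|$, giving $\rho=1$ for Assumption~1; the weights are efficiently computable; and edge-coloured connected subgraphs (with or without the incompatibility restriction) can be enumerated using Lemma~\ref{lem:connected_subgraph_bound_kappa} together with~\cite[Lemmas~3.4 and~3.7]{2017:Patel:poly_time_approx_partition_function}.

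The one step that is genuinely different from the Holant polynomial case is condition~6: we need absolute convergence of $\log Z(\mathcal C_\domainvar(G),\Phi^x_\pi(\cdot,\mathbf{1},x))$ on an open disk of radius $q>1$ around the origin in $x$, so that $x=1$ lies strictly inside the zero-free region. This is where the \emph{strict} inequality $r(\SetFunctions)<\max\{(2\sqrt\ex)^{-1}(\Delta\kappa\ex)^{-\Delta/2},\,0.2058(\domainvar+1)^{-\Delta}\}$ assumed in the statement pays off, as it provides slack over the bound of Lemma~\ref{lem:Holant_zeros}. Concretely, I would pick $q>1$ with $q^{\Delta/2}\,r(\SetFunctions)$ (respectively $q\,r(\SetFunctions)$ for the alternative estimate) still below the corresponding threshold, and then redo the Koteck{\'y}--Preiss estimate of Lemma~\ref{lem:Holant_zeros} with the extra factor $|x|^{|E(\gamma')|}\leq q^{|E(\gamma')|}$ inserted inside the weight bound; the geometric series argument goes through with the same choice of $a(\dompoly)=\alpha|V(\gamma)|$ and the previously optimal $\alpha$, establishing absolute convergence throughout $|x|\leq q$. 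The anticipated main obstacle is precisely this bookkeeping of the slack: one must make sure the extra factors of $q^{\Delta/2}$ (or $q$) can be absorbed without disturbing the rest of the computation, which is routine but needs some care. With all six conditions verified, Theorem~\ref{thm:polymers_fptas} applied at $x=1$ yields an $\varepsilon$-approximation in time $\exp\bigl(\mathcal O\bigl((1-1/q)^{-1}\log(|E(G)|/\varepsilon)+\log|V(G)|\bigr)\bigr)$, which is polynomial in $n$ and $1/\varepsilon$ for fixed $\mathcal F$, $\Delta$ and $\domainvar$, proving the FPTAS.
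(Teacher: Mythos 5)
Your proposal is correct and takes essentially the same route as the paper: reduce to the polymer partition function at $\mathbf{z}=\mathbf{1}$, verify conditions 1--5 of Theorem~\ref{thm:polymers_fptas} exactly as in the proof of Theorem~\ref{thm:Holant_poly_FPTAS}, and exploit the slack from the strict inequality on $r(\mathcal F)$ to choose $q>1$ so that the Koteck{\'y}--Preiss estimate of Lemma~\ref{lem:Holant_zeros} absorbs the extra $|x|^{|E(\gamma')|}\le q^{|E(\gamma')|}$ factor. (One minor slip in your parenthetical: in both branches of Lemma~\ref{lem:Holant_zeros} a polymer with $j$ vertices has at most $\Delta j/2$ edges, so the requirement is that $q^{\Delta/2}r(\mathcal F)$ stays below the threshold in both branches, not $q\,r(\mathcal F)$ in the alternative one; the paper's choice of $q$ with exponent $1/\Delta$ is a safe conservative choice satisfying this.)
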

}
\begin{proof}
 Since a Holant problem is a Holant polynomial at $\mathbf{z}=\mathbf{1}$, the proof is identical to the proof of Theorem~\ref{thm:Holant_poly_FPTAS} by using Lemma~\ref{lem:Holant_zeros} instead of Lemma~\ref{lem:holant_poly_zeros}. The only difference is that $q$ is now defined by $(\max\{\frac{(2\sqrt\ex)^{-1}(\Delta\kappa\ex)^{-\frac\Delta2}}{r(\mathcal F)} ,\frac{0.2058(\domainvar+1)^{-\Delta}}{r(\mathcal F)}\})^{\frac 1\Delta}$, where the exponent $\frac 1\Delta$ ensures that in the calculations of Lemma~\ref{lem:Holant_zeros} the term $r^{|V(\gamma')|}$ cancels out the $|x|^{|E(\gamma')|}$ term.
\end{proof}

\section{Fast randomised algorithms}
\label{sec:random}

Recently, Chen et al.~\cite{2019:Chen:fast_algorithms_low_temperature} studied Markov chains on vertex spin polymer models. Their results establish conditions yielding 
fast randomised sampling and counting algorithms, with polynomial run-time and linear dependency on the maximum degree~$\Delta$ of the input graph. We follow their approach and show how their algorithms can adapted to Holant problems and Holant polynomials exploiting the respective polymer models.

Consider a Holant polynomial $Z_\mathcal{F}(G,\pi,\mathbf{z})$, where for the rest of this section, we restrict the values of the signatures in $\mathcal F$ to be non-negative reals and we further restrict  $\mathbf{z}\in(\R_{\geq 0})^{\domainvar+1}$. We want to design fast algorithms that sample from the distribution $\mu_G$ with
\[
	\mu_G(\sigma)=\frac{\prod_{v\in V(G)}f_v(\sigma|_{\adjedges(v)})\prod_{i=0}^\kappa z_i^{|\sigma|_i}}{Z_{\mathcal F}(G,\pi,\mathbf{z})}\,.
\]
To do so, we will use polymer models. Recall our definition of the polymer partition function $Z(\mathcal C_\domainvar(G),\Phi_\pi(\cdot,\mathbf{z}))$ from Section~\ref{sec:holant_polynomials}. The restrictions to the non-negative reals imply $\Phi_\pi(\cdot,\mathbf{z})\in\R_{\geq 0}$. We will design a Markov chain with  the families of $\mathcal I_\domainvar(G)$ as state space with stationary distribution $\mu_{\mathcal C_\domainvar(G)}$, where
\[\mu_{\mathcal C_\domainvar(G)}(\Gamma)=\frac{\sum_{\dompoly\in\Gamma}\Phi_\pi(\dompoly,\mathbf{z})}{Z(\mathcal C_\domainvar(G),\Phi_\pi(\cdot,\mathbf{z}))}\,.\]
To ensure that the Markov chain we  study converges to its stationary distribution in polynomial time we require the following  condition. 

\begin{definition}\label{cond:mixing}
	Let $G$ be a graph. A polymer model $(\mathcal C_\domainvar(G),\Phi_\pi(\cdot,\mathbf{z}))$ satisfies the \emph{mixing condition} if there exists a constant $\xi\in(0,1)$ such that for each $\dompoly\in\mathcal C_\domainvar(G)$,
	\begin{equation*}
	\sum_{\dompoly'\not\sim\dompoly}|E(\dompoly')|\Phi_\pi(\dompoly',\mathbf{z})\leq\xi|E(\dompoly)|\,.
	\end{equation*}
\end{definition}

For every $e_0\in E(G)$, let $\mathcal C_\domainvar(e_0)=\{\dompoly\in \mathcal C_\domainvar(G)\mid e_0\in \gamma\}$ be the set of polymers whose underlying graph contains the edge $e_0$ and let $\alpha(e_0)=\sum_{\gamma\in\mathcal C_\domainvar(e_0)}\Phi_\pi(\dompoly,\mathbf{z})$.  We define the probability distribution $\mu_0$ on $\mathcal C(e_0)\cup\{\emptyset\}$ by $\mu_0(\dompoly)=\Phi_\pi(\dompoly,\mathbf{z})$ and $\mu_0(\emptyset)=1-\alpha(\gamma)$. Definition~\ref{cond:mixing} ensures that $\alpha(e_0)<1$, for every $e_0\in E(G)$ (by choosing any $\dompoly$ with $\gamma=e_0$), hence $\mu_0$ is a valid probability distribution. We are now ready to define the transition of the Holant polymer Markov chain $(\Gamma_t)_{t\in\N}$.
\begin{itemize}
	\item Choose $e_0\in E(G)$ uniformly at random.
	\item If there exists $\dompoly\in \Gamma_t$ with $e_0\in E(\dompoly)$, then $\Gamma_{t+1}\leftarrow \Gamma_t\setminus\{\dompoly\}$ with probability $1/2$.
	\item Otherwise, sample $\dompoly$ from $\mu_0$ and if $X_t\cup\{\dompoly\}\in\mathcal I(G)$, then $\Gamma_{t+1}\leftarrow \Gamma_t\cup\{\dompoly\}$ with probability $1/2$.
\end{itemize}
In all other cases, we set $\Gamma_{t+1}\leftarrow \Gamma_t$.

Let $P$ be the transition matrix of the above chain and let $T(\varepsilon)=\max_{\Gamma\in\mathcal I_\domainvar(G)}\min\{t\mid\|P^t(\Gamma,\cdot)-\mu_{\mathcal C_\domainvar(G)}(\cdot)\|_TV\leq\varepsilon\}$ be its mixing time. We have the following lemma.

\begin{lemma}
	Let $\mathcal F\subseteq \mathcal F_0$ and $G$ be a graph of maximum degree $\Delta$ and let $\pi\colon V(G)\rightarrow\mathcal F$. When the mixing condition applies (Definition~\ref{cond:mixing}), the Holant polymer Markov chain has stationary distribution $\mu_{\mathcal C_\domainvar(G)}$ and mixes in $T(\varepsilon)\in O(\Delta n\log(n/\varepsilon))$ many iterations.
\end{lemma}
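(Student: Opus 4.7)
I would prove the lemma in two parts, mirroring the structure of~\cite{2019:Chen:fast_algorithms_low_temperature} but adapted to our edge-based polymer model.

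For the stationary distribution, the natural path is to verify detailed balance on adjacent pairs $\Gamma$ and $\Gamma'=\Gamma\cup\{\dompoly\}$, since all non-self transitions differ by a single polymer. The forward probability $P(\Gamma,\Gamma')$ accumulates over each $e_0\in E(\gamma)$: conditional on picking $e_0$ (probability $1/|E(G)|$), no polymer of $\Gamma$ covers $e_0$, so we enter the ``addition'' branch, sample $\dompoly$ from $\mu_0$ (probability $\Phi_\pi(\dompoly,\mathbf z)$), and accept with probability $1/2$; summing over $e_0\in E(\gamma)$ yields $P(\Gamma,\Gamma')=\tfrac{|E(\gamma)|}{2|E(G)|}\Phi_\pi(\dompoly,\mathbf z)$. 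Conversely, the reverse $P(\Gamma',\Gamma)$ only needs the edge choice and the accept coin, giving $\tfrac{|E(\gamma)|}{2|E(G)|}$. Their ratio $\Phi_\pi(\dompoly,\mathbf z)$ matches exactly the Gibbs ratio of weights, so detailed balance holds. Irreducibility follows by driving any state down to $\emptyset$ through successive removals, and aperiodicity from the positive self-loop when the accept coin lands tails.

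For the mixing-time bound, I would apply path coupling in the sense of Bubley and Dyer on the graph metric whose elementary edges are additions/removals of a single polymer, weighting each such elementary edge by $|E(\dompoly)|$. It suffices to analyse the expected one-step change in distance between an adjacent pair $\Gamma,\Gamma'=\Gamma\cup\{\dompoly\}$. Using the synchronous coupling that selects the same $e_0$, samples the same $\dompoly''$ from $\mu_0$, and shares the accept coin across the two chains, I would split the analysis into cases according to whether $e_0\in E(\gamma)$ (the dominant \emph{good} event, in which $\Gamma'$ deletes $\dompoly$ and the chains tend to merge at $\Gamma$) or not (in which case a \emph{bad} event arises precisely when the sampled $\dompoly''$ satisfies $\dompoly''\sim\Gamma$ but $\dompoly''\not\sim\dompoly$, so the add-move succeeds in $\Gamma$ but fails in $\Gamma'$, increasing the weighted distance by $|E(\dompoly'')|$). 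Summing the bad-event contributions over $\dompoly''\not\sim\dompoly$ and all eligible $e_0$ and applying Definition~\ref{cond:mixing} bounds the expected increase by $\xi$ times the expected decrease, producing a contraction factor of $1-\Omega\!\left(\tfrac{1-\xi}{|E(G)|}\right)$ in the weighted metric.

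Since the diameter of the weighted metric is at most $|E(G)|\leq\Delta n/2$, the path coupling theorem then yields $T(\varepsilon)\in O\!\left(\tfrac{|E(G)|}{1-\xi}\log(|E(G)|/\varepsilon)\right)=O(\Delta n\log(n/\varepsilon))$ as claimed. The main obstacle will be verifying that the coupling in the subcase where $e_0\in E(\gamma)$ and $\Gamma'$ removes $\dompoly$ while $\Gamma$ simultaneously attempts to add $\dompoly''$ does not merely produce state-swaps (which leave the distance unchanged): handling this cleanly requires either decorrelating the accept coins in that specific subcase or a bookkeeping argument that exploits the identity $\mu_0(\dompoly)=\Phi_\pi(\dompoly,\mathbf z)$ to absorb the swap contribution into the detailed-balance relation. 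The remainder of the argument is essentially standard once the per-edge bad-event bound has been matched to the mixing condition.
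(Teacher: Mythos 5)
Your detailed-balance argument for identifying $\mu_{\mathcal C_\domainvar(G)}$ as the stationary distribution is correct and matches the paper exactly. The mixing-time part, however, has two gaps relative to what the paper actually does.

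First, the paper does \emph{not} weight the elementary moves by $|E(\dompoly)|$: it uses the plain unweighted metric $\dist(\Gamma,\Gamma\cup\{\dompoly\})=1$, extended additively along shortest paths, so that the diameter is bounded by $n$ (since vertex-disjoint polymers each have at least two vertices). Your weighted metric does not combine with Definition~\ref{cond:mixing}. In the weighted version, a bad event with sampled $\dompoly''$ occurs with total probability $\tfrac{|E(\gamma'')|}{2|E(G)|}\Phi_\pi(\dompoly'')$ (summing over the eligible edges $e_0\in E(\gamma'')$) and costs $|E(\gamma'')|$, so the expected increase is governed by $\sum_{\dompoly''\not\sim\dompoly}|E(\gamma'')|^2\Phi_\pi(\dompoly'')$, a second-moment quantity, whereas the mixing condition only controls $\sum_{\dompoly''\not\sim\dompoly}|E(\gamma'')|\Phi_\pi(\dompoly'')\leq\xi|E(\gamma)|$. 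These do not imply one another, so the claimed contraction does not follow from Definition~\ref{cond:mixing} as stated. With the unweighted metric the increase is governed by the first-moment sum and matches the condition cleanly, giving $\E[\dist(X_{t+1},Y_{t+1})]\leq 1-|E(\gamma)|\tfrac{1-\xi}{2|E(G)|}$.

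Second, the coupling you propose (same $e_0$, same sample from $\mu_0$, shared accept coin) does have the state-swap pathology you anticipate: when $e_0\in E(\gamma)$ and the coin accepts, $Y$ deletes $\dompoly$ while $X$ simultaneously adds the sampled $\dompoly''$, so the two chains end up differing by $\dompoly''$ instead of coalescing, and the distance does not decrease. The fix you gesture at (``decorrelating the accept coins in that specific subcase'') is in fact precisely what the paper does: in the $e_0\in E(\gamma)$ branch, the single coin is used to select \emph{which} chain moves --- with probability $1/2$ only $Y$ deletes $\dompoly$ (coalescing), and with the remaining probability only $X$ attempts an addition --- rather than being a shared accept bit for a synchronous move of both chains. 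This makes the ``good'' decrement of $1$ occur with probability $\tfrac{|E(\gamma)|}{2|E(G)|}$, which is the quantity the mixing condition needs to beat. So the resolution is not an open obstacle but a concrete coupling choice; you should state it and carry the one-step expectation through with the unweighted metric to recover the paper's bound $T(\varepsilon)\leq\tfrac{2|E(G)|}{1-\xi}\log(n/\varepsilon)\in O(\Delta n\log(n/\varepsilon))$.
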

\begin{proof}
	It is easy to see that the Markov chain~$(\Gamma_t)_{t\in\N}$  is ergodic and thus it has a unique stationary distribution, which is the limit distribution. We fist show that $\mu_{\mathcal C_\domainvar(G)}$ is the unique stationary distribution of the above Markov chain. Let $\Gamma,\Gamma'\in\mathcal I_\domainvar (G)$ such that there exists $\dompoly\in \mathcal C_\domainvar(G)$ with $\Gamma\cup\{\dompoly\}=\Gamma'$. It suffices to show that $\mu_{\mathcal C_\domainvar(G)}(\Gamma)P(\Gamma,\Gamma')=\mu_{\mathcal C_\domainvar(G)}(\Gamma')P(\Gamma',\Gamma)$. The latter holds, since
	\begin{align*}
	\mu_{\mathcal C_\domainvar(G)}(\Gamma)P(\Gamma,\Gamma') & = \frac{\prod_{\dompoly'\in\Gamma}\Phi_\pi(\dompoly',\mathbf{z})}{Z(G)}\frac{|E(\gamma)|}{|E(G)|}\frac{1}{2}\Phi_\pi(\dompoly) \\
	& =\frac{\prod_{\dompoly'\in\Gamma'}\Phi_\pi(\dompoly',\mathbf{z})}{Z(G)}\frac{|E(\gamma)|}{|E(G)|}\frac{1}{2} = \mu_{\mathcal C_\domainvar(G)}(\Gamma')P(\Gamma',\Gamma).
	\end{align*}
	
	To upper bound the mixing time of this Markov chain we will use path coupling as in~\cite{1999:Dyer:randomwalks_combinatorial_objects}. To this end, we define a metric $\dist(\cdot,\cdot)$ on $\Omega$ by setting $\dist(\Gamma,\Gamma')=1$ if for some $\dompoly\in\mathcal C_\domainvar(G)$, $\Gamma\cup\{\dompoly\}=\Gamma'$ or $\Gamma'\cup\{\dompoly\}=\Gamma$. This function $d$ can be naturally extended to all pairs $\Gamma,\Gamma'$ in $\Omega$ by shortest paths. Formally let $\Gamma=\Gamma_0,\Gamma_1,\dots,\Gamma_\ell=\Gamma'$ be a shortest path between $\Gamma$ and $\Gamma'$ in $\Omega$, then $\dist(\Gamma,\Gamma')=\sum_{i=1}^{\ell} \dist(\Gamma_{i-1},\Gamma)$. We can now observe that the diameter~$W=\max_{(\Gamma,\Gamma')}\{\dist(\Gamma,\Gamma')\}$ of $\Omega$ is at most $n$, since no family in $\mathcal I_\domainvar(G)$ can contain more than $\frac{n}{2}$ polymers.
	
	We define our coupling to be the chain denoted by $(X_t,Y_t)$ as follows. First observe that as we are using path coupling, we only need to define how the chain progresses for pairs  that only differ in one polymer. Let $\dompoly=(\gamma,\Phi)\in\mathcal C_\domainvar(G)$ be such that $X_t\cup\{\dompoly\}=Y_t$, then  $(X_t,Y_t)$ progresses as follows.
	\begin{itemize}
		\item For both $X_t$ and $Y_t$ choose the same $e_0\in E(G)$ uniformly at random.
		\item If $e_0\in\gamma$, let $X_{t+1}\leftarrow X_t$ and $Y_{t+1}\leftarrow Y_t\setminus\{\dompoly\}$ with probability $1/2$. With remaining probability $1/2$, sample $\dompoly'$ from $\mu_0$ and let $X_{t+1}\leftarrow X_t\cup\{\dompoly'\}$ and $Y_{t+1}\leftarrow Y_t$.
		\item If $e_0\notin\gamma$, then $X_t$ behaves as the original chain~$\Gamma_t$ and $Y_t$ copies $X_t$ if possible, otherwise $Y_t$ remains at the same state.
	\end{itemize}

	We will now show that the distance of adjacent states reduces in expectation. We have
	\[\E[\dist(X_{t+1},Y_{t+1})\mid X_t,Y_t]\leq1+\frac{1}{2|E(G)|}\left(-|E(\gamma_0)| + \sum_{\dompoly'\not\sim\dompoly}|E(\gamma')|\Phi_\pi(\dompoly',\mathbf{z})\right),\]
	since the distance decreases by $1$ with probability $1/2$ if we choose an edge in $\gamma$ and increases by $1$ with probability $1/2$ if we sample a polymer that is incompatible with $\dompoly$. The latter occurs with probability $|E(G)|^{-1}\sum_{\dompoly'\not\sim\dompoly}|E(\gamma')|\Phi_\pi(\dompoly',\mathbf{z})$.
	From the statement of the lemma the mixing condition applies, which yields
	\[\E[\dist(X_{t+1},Y_{t+1})\mid X_t,Y_t]\leq1-|E(\gamma)|\frac{1-\xi}{2|E(G)|}\leq1-\frac{1-\xi}{2|E(G)|}.\]
	
	Now we can use the path coupling lemma~\cite[(17) in Section~6]{1999:Dyer:randomwalks_combinatorial_objects} which yields $T(\varepsilon)\leq\log(W/\varepsilon)2|E(G)|/(1-\xi)\in O(\Delta n\log(n/\varepsilon))$.
\end{proof}
To obtain a fast sampling algorithm, we have to show that each iteration of the polymer Markov chain only requires expected constant time. To do so, as in Chen et al.~\cite[Definition~4]{2019:Chen:fast_algorithms_low_temperature}, we need to identify under which condition this holds.

\begin{definition}\label{cond:sampling}
	Given a graph $G$ of maximum degree $\Delta$ we say that a polymer model $(\mathcal C_\domainvar(G),\Phi(\cdot,\mathbf{z}))$ satisfies the polymer \emph{sampling condition} with constant $\tau\geq5+3\ln(\kappa\Delta)$ if $\Phi(\dompoly,\mathbf{z})\leq \ex^{-\tau|E(\gamma)|}$ for all $\dompoly\in\mathcal C_\domainvar(G)$.
\end{definition}

Mimicking~\cite{2019:Chen:fast_algorithms_low_temperature} we consider the following algorithm to sample from $\mu_0$ in expected constant time. Let $\varrho=\tau-2-\ln(\kappa\Delta)\geq3+2\ln(\kappa\Delta)$.
\begin{itemize}
	\item Choose $k$ according to the geometric distribution with parameter $1-\ex^{-\varrho}$, that is $\P(k=i)=(1-\ex^{-\varrho})\ex^{-\varrho i}$. Note that $\P(k\geq i)=\ex^{-\varrho i}$.
	\item List all polymers in $\mathcal C_\domainvar(e_0,k)=\{\dompoly\in\mathcal C_\domainvar(e_0)\mid |E(\gamma)|\leq k\}$ and compute their weight functions $\Phi(\cdot,\mathbf{z})$. Recall that $\mathbf{z}$ is considered to be fixed.
	\item Mutually exclusively output $\dompoly\in\mathcal C_\domainvar(e_0,k)$ with probability $\Phi(\dompoly,\mathbf{z})\ex^{\varrho|E(\gamma)|}$ and with all remaining probability output $\emptyset$. Observe that if $k=0$ then we output $\emptyset$ with probability~1.
\end{itemize}

\begin{lemma}
	The above sampling algorithm under the polymer sampling condition (Definition~\ref{cond:sampling}) outputs a polymer $\gamma$ from the distribution $\mu_e$ in expected constant time.
\end{lemma}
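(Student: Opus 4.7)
The plan is to verify two things separately: that the output distribution is exactly $\mu_0$, and that the expected running time is a constant independent of $G$.

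For correctness, I would first compute the probability that the algorithm outputs a fixed non-empty polymer $\dompoly\in\mathcal C_\domainvar(e_0)$ with $|E(\gamma)|=m$. Conditioning on $k$, this event requires $k\geq m$ (so that $\dompoly$ is in the listed set) and then $\dompoly$ is chosen at the final step with probability $\Phi_\pi(\dompoly,\mathbf{z})\ex^{\varrho m}$. Using $\P(k\geq m)=\ex^{-\varrho m}$, the two exponential factors cancel exactly, giving output probability $\Phi_\pi(\dompoly,\mathbf{z})=\mu_0(\dompoly)$. Summing over all polymers in $\mathcal C_\domainvar(e_0)$ then yields total non-empty mass $\alpha(e_0)$, so the residual probability assigned to $\emptyset$ is $1-\alpha(e_0)=\mu_0(\emptyset)$.

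The key sanity check is that the algorithm is well-defined, i.e.\ at each step $k$ the sum of the individual probabilities $\Phi_\pi(\dompoly,\mathbf{z})\ex^{\varrho|E(\gamma)|}$ over listed polymers is at most $1$. Here the sampling condition enters: since $\Phi_\pi(\dompoly,\mathbf{z})\leq\ex^{-\tau|E(\gamma)|}$, each individual term is bounded by $\ex^{-(\tau-\varrho)m}=\ex^{-(2+\ln(\kappa\Delta))m}\leq 1$. For the sum, I would apply Corollary~\ref{lem:connected_subgraph_bound_kappa} to bound the number of $\kappa$-edge-coloured connected subgraphs with $m$ edges containing the fixed edge $e_0$ (applying the vertex-containment bound to either endpoint) by $(\Delta\kappa\ex)^m$, so the total mass is at most $\sum_{m\geq 1}(\Delta\kappa\ex)^m\ex^{-(\tau-\varrho)m}=\sum_{m\geq 1}\ex^{-m}<1$. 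This simultaneously shows that $\emptyset$ receives a legal residual probability and certifies the procedure is a valid sampler.

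For the expected running time, the only work beyond constant-cost Bernoulli/geometric draws is enumerating $\mathcal C_\domainvar(e_0,k)$ and evaluating the corresponding weights. Using the Patel--Regts-style subgraph enumeration (as in condition~5 of Theorem~\ref{thm:polymers_fptas} applied to a single fixed root edge) together with the $\kappa^m$ colourings, the cost of step 2 at value $k$ is at most $C\cdot(\Delta\kappa\ex)^k\cdot\mathrm{poly}(k)$ for some absolute constant $C$, with signature evaluation subsumed into the polynomial factor because $\SetFunctions$ consists of polynomially computable signatures. Hence
\[
\E[\text{time}]\leq C\sum_{k\geq 0}(1-\ex^{-\varrho})\,\ex^{-\varrho k}\,(\Delta\kappa\ex)^k\,\mathrm{poly}(k).
\]
The ratio of consecutive terms is $\ex^{-\varrho}\Delta\kappa\ex\leq \ex^{1-\varrho}\Delta\kappa\leq \ex^{-2-2\ln(\kappa\Delta)}\Delta\kappa=\ex^{-2}/(\kappa\Delta)$, where I used $\varrho\geq 3+2\ln(\kappa\Delta)$. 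Since this ratio is bounded away from $1$ by an absolute constant, the series converges to a value independent of $G$, giving expected constant time.

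The main obstacle I anticipate is bookkeeping: getting the constant in the enumeration bound lined up with the ``$5+3\ln(\kappa\Delta)$'' lower bound on $\tau$ so that the geometric ratio is strictly less than $1$. Both the counting estimate from Corollary~\ref{lem:connected_subgraph_bound_kappa} and the choice $\varrho=\tau-2-\ln(\kappa\Delta)$ are tailored precisely for this — $\tau$ must beat $\varrho$ by at least $2+\ln(\kappa\Delta)$ for correctness of the probability distribution, and $\varrho$ must beat $1+\ln(\Delta\kappa)$ for the expected-time series to converge, which together force the threshold $\tau\geq 5+3\ln(\kappa\Delta)$ assumed in Definition~\ref{cond:sampling}.
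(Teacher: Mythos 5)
Your proposal has the same three-part structure as the paper's proof (well-definedness of the sampler, output distribution equals $\mu_0$, expected constant running time) and the first two parts match the paper's argument closely. The one place where your bookkeeping is slightly off is the running-time analysis: you bound the cost of listing $\mathcal C_\domainvar(e_0,k)$ and evaluating weights by $C\,(\Delta\kappa\ex)^k\,\mathrm{poly}(k)$, but the subgraph enumeration the paper invokes (Patel--Regts, Lemma~3.5) runs in $\mathcal O\bigl(k^5(\ex\Delta)^{2k}\bigr)$, so after adjoining the $\kappa^k$ colourings the true cost is on the order of $(\ex\Delta)^{2k}\kappa^k\,\mathrm{poly}(k)$ --- your estimate is missing a factor of $(\ex\Delta)^k$. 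With the corrected cost the geometric ratio becomes $\ex^{-\varrho}\,\ex^2\Delta^2\kappa$ rather than $\ex^{1-\varrho}\Delta\kappa$; since $\varrho\geq 3+2\ln(\kappa\Delta)$ this ratio is still at most $\ex^{-1}/\kappa<1$, so the series still converges and your conclusion holds. Note also that this corrected accounting changes your back-of-the-envelope justification of the constant $5+3\ln(\kappa\Delta)$ in Definition~\ref{cond:sampling}: it is the $(\ex\Delta)^{2k}$ enumeration cost (not $(\ex\Delta)^k$) that the sampling condition must beat, which is why the threshold has the slack it does. The paper itself defers the final running-time calculation to Chen et al.\ (Lemma~16 of \cite{2019:Chen:fast_algorithms_low_temperature}); your explicit version is fine once the enumeration exponent is corrected.
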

\begin{proof}
	 The proof follows the proof of \cite[Lemma~16]{2019:Chen:fast_algorithms_low_temperature}. We first show that the algorithm is well defined, i.e., the probabilities $\Phi(\dompoly,\mathbf{z})\ex^{\varrho|E(\gamma)|}$ sum to less than 1. Using Lemma~\ref{lem:connected_subgraph_bound} we have
	\begin{align*}
	\sum_{\gamma\in\mathcal C_\domainvar(e_0)}\Phi(\dompoly,\mathbf{z})\ex^{\varrho|E(\gamma)|}&\leq \frac{1}{2}\sum_{k\geq 1} \ex^{\tau k-\varrho k}(e\kappa\Delta)^k\\
	&= \frac{1}{2}\sum_{k\geq 1}\ex^{-k(2+\ln(\kappa\Delta))} (\ex\kappa\Delta)^k\\
	&= \frac{1}{2}\sum_{k\geq 1}\ex^{-k}< 1.
	\end{align*}
	
	Next, we will show that the output of the sampler has distribution $\mu_0$. Let $\dompoly\in\mathcal C_\domainvar(e_0)$. In order to have non-zero probability to output $\dompoly$ we must choose $k\geq|E(\gamma)|$. This happens with probability $\ex^{-\varrho|E(\gamma)|}$ by the distribution of $k$. Conditioned we choose an appropriate $k$, the probability of the sampler to output $\dompoly$ is then $\Phi(\dompoly,\mathbf{z})\ex^{\varrho|E(\gamma)|}$. Thus, the probability of choosing any $\dompoly\in\mathcal C_\domainvar(e_0)$ is $\Phi(\dompoly,\mathbf{z})$, showing that the outputs of our sampler are distributed according to $\mu_0$.
	
	Finally we analyse the expected run-time of the sampler. Note that from~\cite[Lemma~3.5]{2017:Patel:poly_time_approx_partition_function} we can enumerate all polymers in $\mathcal C_\domainvar(e_0,k)$ in time $\mathcal O(k^5(\ex\Delta)^{2k})$. From Corollary~\ref{lem:connected_subgraph_bound_kappa} the number of weights we have to consider are at most $k(\kappa \ex\Delta)^k/2$. Given a polymer with $k$ edges we can compute its weight in polynomial time, hence we can compute all weights in time $\mathcal O(k^c(\kappa \ex\Delta)^k)$ for some constant $c$. Following the calculations in \cite[Lemma~16]{2019:Chen:fast_algorithms_low_temperature} we can show that the expected running time is constant.
\end{proof}

\begin{corollary}
	Let $G$ be a graph of maximum degree $\Delta$, let $\mathcal F\subseteq\mathcal F_0$ and let $\pi\colon V(G)\rightarrow \mathcal F$. Let $(\mathcal C_\domainvar(G),\Phi_\pi(\cdot,\mathbf{z}))$ be a polymer model satisfying the sampling condition. There is an $\varepsilon$-approximate sampling algorithm for $\mu_{\mathcal C_\domainvar(G)}$ with run-time in $\mathcal O(\Delta n\log(n/\varepsilon))$.
\end{corollary}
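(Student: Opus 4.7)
The plan is to derive the corollary by combining the two preceding lemmas of this section, after first verifying that the sampling condition (Definition~\ref{cond:sampling}) implies the mixing condition (Definition~\ref{cond:mixing}). Once this implication is established, the first lemma yields that the Holant polymer Markov chain $\varepsilon$-mixes in $\mathcal O(\Delta n \log(n/\varepsilon))$ iterations, while the second lemma shows that a single transition can be simulated in expected $\mathcal O(1)$ time using the geometric sampler for $\mu_0$ (the remaining bookkeeping per step---picking $e_0$, removing a hit polymer, or checking compatibility of a single new polymer with the current family---is also $\mathcal O(1)$). Multiplying then gives the claimed $\mathcal O(\Delta n \log(n/\varepsilon))$ bound on the overall (expected) run-time.

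For the implication from the sampling condition to the mixing condition, I would fix $\dompoly \in \mathcal C_\domainvar(G)$ and estimate $\sum_{\dompoly' \not\sim \dompoly}|E(\gamma')|\Phi_\pi(\dompoly',\mathbf{z})$. Incompatibility forces the supports $\gamma$ and $\gamma'$ to share a vertex, so by Corollary~\ref{lem:connected_subgraph_bound_kappa} the number of incompatible $\dompoly'$ with exactly $i$ edges is at most $|V(\gamma)|(\Delta\domainvar\ex)^i/2$. Using $|V(\gamma)| \leq 2|E(\gamma)|$ (any connected subgraph with $i \geq 1$ edges has at most $i+1 \leq 2i$ vertices) together with the sampling-condition bound $\Phi_\pi(\dompoly',\mathbf{z}) \leq \ex^{-\tau i}$, one obtains
\[
\sum_{\dompoly' \not\sim \dompoly}|E(\gamma')|\Phi_\pi(\dompoly',\mathbf{z}) \;\leq\; |E(\gamma)| \sum_{i \geq 1} i \bigl(\Delta\domainvar\ex^{1-\tau}\bigr)^i.
\]
The hypothesis $\tau \geq 5+3\ln(\domainvar\Delta)$ yields $\Delta\domainvar\ex^{1-\tau} \leq \ex^{-4}(\domainvar\Delta)^{-2} \leq \ex^{-4}$, so the series $\sum_{i\geq 1} i r^i = r/(1-r)^2$ at $r = \ex^{-4}$ evaluates to an absolute constant, giving the mixing condition with some $\xi \in (0,1)$ independent of $n$, $\Delta$, and $\domainvar$.

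The main subtlety is that the mixing condition carries an extra factor of $|E(\gamma')|$ compared to the Koteck{\'y}--Preiss-style sums handled in earlier sections, which is precisely why the sampling condition demands the strong lower bound $\tau \geq 5+3\ln(\domainvar\Delta)$: the extra factor of $i$ in $\sum i r^i$ is absorbed by the $\ex^{-4}$ slack left over after cancelling $\Delta\domainvar\ex$ in each summand, while the $(\domainvar\Delta)^{-2}$ factor ensures that $\xi$ does not depend on $\domainvar$ or $\Delta$. Once this is verified, the two lemmas plug together routinely: $\mathcal O(\Delta n \log(n/\varepsilon))$ many iterations at expected $\mathcal O(1)$ cost each yield an algorithm with the desired (expected) run-time, with a standard Markov-style truncation argument converting the expected bound to a worst-case bound at the price of only a constant factor.
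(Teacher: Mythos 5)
Your proposal is correct and follows the approach the paper's proof implicitly invokes: the paper simply cites Chen et al.~(Theorem~5), and your reconstruction---showing the sampling condition implies the mixing condition via Corollary~\ref{lem:connected_subgraph_bound_kappa} and the slack left by $\tau \geq 5+3\ln(\kappa\Delta)$, then combining the mixing-time lemma with the expected-constant-cost-per-step lemma---is precisely that argument filled in. The sampling-implies-mixing calculation you give also parallels the explicit computation the paper performs in the proof of Theorem~\ref{thm:holant_poly_random}, where it verifies the mixing condition from the concrete weight bound $\Phi_\pi(\dompoly,\mathbf{z})\leq ((\Delta\domainvar)^3\ex^5)^{-|E(\gamma)|}$.
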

\begin{proof}
	The proof is identical to the proof of~\cite[Theorem~5]{2019:Chen:fast_algorithms_low_temperature}.
\end{proof}

In~\cite{2019:Chen:fast_algorithms_low_temperature} Chen et al. show how to use a sampler with simulated annealing in order to obtain a fast randomised algorithm (see~\cite[Section~3]{2019:Chen:fast_algorithms_low_temperature}). The runtime of this algorithm can be upper-bounded by the  $\mathcal O(n/\varepsilon^{2}\log(\Delta n/\varepsilon))$ calls to the randomised sampler. This approach can be used with the randomised sampler we describe above to get a randomised algorithm with runtime in $\mathcal O(\Delta n^2/\varepsilon^2\log^2(\Delta n/\varepsilon))$.

By randomised approximation algorithms we mean a \emph{fully polynomial-time randomised approximation scheme (FPRAS)}. An FPRAS is an algorithm that, given a problem and some fixed parameter $\varepsilon > 0$, produces with probability at least $3/4$ a multiplicative $\varepsilon$-approximation of its solution in time polynomial in in $n$ and $1/\varepsilon$.

{\renewcommand{\thetheorem}{\ref{thm:holant_poly_random}}
\begin{theorem}
 Let $G$ be a graph of maximum degree $\Delta$, $\mathcal F\subseteq\mathcal F_0$ and $\pi\colon V(G)\rightarrow \mathcal F$. For the Holant polynomial $Z_{\mathcal F}(G,\pi,\mathbf{z})$ there exists an
 $\varepsilon$-sampling algorithm from the distribution $\mu_G$ with run-time in $\mathcal O(\Delta n\log(n/\varepsilon))$ for $\mathbf{z}$ in
 \[\left\{(z_0,z_1,\dots,z_\domainvar)\in (\R_{\geq 0})^{\domainvar+1}\ \Big|\  z_0\neq0, \frac{z_i}{z_0}\leq ((\Delta\domainvar)^3\ex^5\maxValueOne^2)^{-1},\ 1\leq i\leq \domainvar\right\}\]
  with $\maxValueOne=\max\{1, r(\mathcal F)\}$. Furthermore, $Z_{\mathcal F}(G,\pi,\mathbf{z})$ admits an FPRAS with run-time $\mathcal O(\Delta n^2/\varepsilon^2\log^2(\Delta n/\varepsilon))$ for these values of $\mathbf{z}$.
\end{theorem}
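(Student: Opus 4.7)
The plan is to verify that the polymer model $(\mathcal C_\domainvar(G),\Phi_\pi(\cdot,\mathbf z))$ established in Section~\ref{sec:holant_polynomials} satisfies both the mixing condition (Definition~\ref{cond:mixing}) and the sampling condition (Definition~\ref{cond:sampling}) throughout the stated region of $\mathbf z$, and then invoke the corollary on $\varepsilon$-approximate sampling together with the simulated-annealing reduction of~\cite{2019:Chen:fast_algorithms_low_temperature}. The bijection from Lemma~\ref{lem:Holant_polynomial_to_polymer_partition_function} turns a sample from $\mu_{\mathcal C_\domainvar(G)}$ into a sample from $\mu_G$, and a multiplicative approximation of the polymer partition function into one for $Z_{\mathcal F}(G,\pi,\mathbf z)$ at no additional cost.

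First I would obtain a uniform weight bound. Using $|V(\gamma)|\le|E(\gamma)|+1\le 2|E(\gamma)|$ for any connected polymer and the definitions of $r(\mathcal F)$ and $\maxValueOne$, one estimates
\[
\Phi_\pi(\dompoly,\mathbf z)\le\left(\max_{1\le i\le\domainvar}\tfrac{z_i}{z_0}\right)^{|E(\gamma)|}\maxValueOne^{|V(\gamma)|}\le (\beta\,\maxValueOne^{2})^{|E(\gamma)|},
\]
where $\beta=\max_{1\le i\le\kappa}z_i/z_0$. For $\mathbf z$ in the region of the theorem, $\beta\,\maxValueOne^{2}\le((\Delta\domainvar)^{3}\ex^{5})^{-1}=\ex^{-\tau}$ with $\tau=5+3\ln(\Delta\domainvar)$, which is exactly the sampling condition of Definition~\ref{cond:sampling}.

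Second I would verify the mixing condition. The count of polymers $\dompoly'$ incompatible with $\dompoly$ that have $i$ edges is bounded, by Corollary~\ref{lem:connected_subgraph_bound_kappa}, by $|V(\gamma)|(\Delta\domainvar\ex)^{i}/2$. Substituting the weight bound above and using the sampling-condition inequality $\beta\,\maxValueOne^{2}\le(\Delta\domainvar)^{-3}\ex^{-5}$,
\begin{align*}
\sum_{\dompoly'\not\sim\dompoly}|E(\gamma')|\,\Phi_\pi(\dompoly',\mathbf z)
&\le \sum_{i\ge1}|V(\gamma)|\,i\,\tfrac{(\Delta\domainvar\ex)^{i}}{2}(\beta\maxValueOne^{2})^{i}\\
&\le |V(\gamma)|\sum_{i\ge1}\tfrac{i}{2}\bigl((\Delta\domainvar)^{-2}\ex^{-4}\bigr)^{i}.
\end{align*}
Since $q:=(\Delta\domainvar)^{-2}\ex^{-4}<1$, the series $\sum_{i\ge1}iq^{i}=q/(1-q)^{2}$ is a tiny absolute constant, so using $|V(\gamma)|\le 2|E(\gamma)|$ the sum is bounded by $\xi|E(\gamma)|$ for some fixed $\xi\in(0,1)$ depending only on $\Delta$ and $\domainvar$. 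This gives the mixing condition uniformly in the stated region.

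Third I would plug these two conditions into the sampling corollary stated just before the theorem to obtain the $\mathcal O(\Delta n\log(n/\varepsilon))$ $\varepsilon$-sampler for $\mu_{\mathcal C_\domainvar(G)}$, and translate it to a sampler for $\mu_G$ via the bijection in Lemma~\ref{lem:Holant_polynomial_to_polymer_partition_function}. For the FPRAS, I would apply the simulated-annealing scheme of~\cite[Section~3]{2019:Chen:fast_algorithms_low_temperature} with $\mathcal O(n\varepsilon^{-2}\log(\Delta n/\varepsilon))$ calls to the sampler, multiplying the per-call cost gives the claimed $\mathcal O(\Delta n^{2}\varepsilon^{-2}\log^{2}(\Delta n/\varepsilon))$ running time; the identity $Z_{\mathcal F}(G,\pi,\mathbf z)=z_{0}^{|E(G)|}\bigl(\prod_{v}f_{v}(\mathbf 0)\bigr)Z(\mathcal C_\domainvar(G),\Phi_\pi(\cdot,\mathbf z))$ turns the approximation of the polymer partition function into one for the Holant polynomial. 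The only step requiring care is book-keeping of the mixing constant $\xi$ so that it is bounded away from $1$ uniformly; no single calculation here is delicate, but the two constraints must be tightened together so that the $(\Delta\domainvar)^{3}\ex^{5}$ factor is simultaneously strong enough for Definition~\ref{cond:sampling} and leaves the leftover geometric series strictly summable for Definition~\ref{cond:mixing}, which the computation above confirms.
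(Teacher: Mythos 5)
Your argument matches the paper's proof essentially step for step: you verify the sampling condition (Definition~\ref{cond:sampling}) from the weight bound $\Phi_\pi(\dompoly,\mathbf z)\le(\beta\maxValueOne^2)^{|E(\gamma)|}\le\ex^{-\tau|E(\gamma)|}$ with $\tau=5+3\ln(\Delta\domainvar)$, then the mixing condition (Definition~\ref{cond:mixing}) via the polymer count from Corollary~\ref{lem:connected_subgraph_bound_kappa}, and finally appeal to the sampling corollary, the bijection of Lemma~\ref{lem:Holant_polynomial_to_polymer_partition_function}, and the simulated-annealing scheme of Chen et al.\ for the FPRAS. The only cosmetic difference is that the paper absorbs the factor $i$ into an extra $\ex^{-i}$ (yielding $((\Delta\domainvar)^2\ex^3)^{-i}$), while you evaluate $\sum_i i q^i=q/(1-q)^2$ directly; both yield a $\xi<1$ bounded away from $1$.
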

}
\begin{proof}
Recall from the proof of Lemma~\ref{lem:Holant_polynomial_to_polymer_partition_function} that there is a bijection between the assignments $\sigma$ for the Holant polynomial and the families of polymers $\Gamma\in\mathcal I_\domainvar(G)$. Using this bijection, it suffices to sample from the equivalent distribution $\mu_{\mathcal C_\domainvar(G)}$. It remains to show that the mixing and the sampling conditions apply for our polymer model.

From the statement of the lemma, for any $\dompoly\in\mathcal C_\domainvar(G)$, we have 
\begin{align*}
\Phi_\pi(\dompoly,\mathbf{z})& = \left(\prod_{i=1}^\domainvar \left(\frac{z_i}{z_0}\right)^{|\adjkappa(\dompoly)|_i}\right)\prod_{v\in V(\gamma)} f_v(\adjkappa(\dompoly))\frac{1}{f_v(\mathbf{0})}\\
&\leq ((\Delta\domainvar)^3\ex^5\maxValueOne^2)^{-|E(\gamma)|}r(\mathcal F)^{|V(\gamma)|}.\\
\end{align*}

The graphs of our polymers are connected, hence for each $\dompoly\in\ \mathcal C_\domainvar(G)$, $|V(\gamma)|\leq |E(\gamma)|+1$. Together with the fact that $\maxValueOne=\max\{1, r(\mathcal F)\}$, this gives 
\begin{equation}\label{eq:sampling}
\Phi_\pi(\dompoly,\mathbf{z}) \leq ((\Delta\domainvar)^3\ex^5)^{-|E(\gamma)|}.
\end{equation}
Thus, this polymer system clearly satisfies the sampling condition (Definition~\ref{cond:sampling}).

For the mixing condition recall that for any polymer $\dompoly$, $|\mathcal C_\domainvar^{\dompoly}(i)|\leq|V(\gamma)|(\frac{(\Delta\domainvar\ex)^{i}}{2})$. Using the latter and~\eqref{eq:sampling}, we have
\begin{align*}
 \sum_{\dompoly'\not\sim\dompoly}|E(\dompoly')|\Phi_\pi(\dompoly',\mathbf{z})&\leq \sum_{i=1}^mi|V(\gamma)|(\frac{(\Delta\domainvar\ex)^{i}}{2})((\Delta\domainvar)^3\ex^5)^{-i}\\
 & \leq \sum_{i=1}^m\frac{|V(\gamma)|}{2}((\Delta\domainvar)^2\ex^3)^{-i}\\
 & <\frac{|V(\gamma)|}{2}.
\end{align*}
Since for any connected graph $\gamma$, $|V(\gamma)|\leq E(\gamma)+1$, the mixing condition (Definition~\ref{cond:mixing}) holds.
\end{proof}

Similar to the Holant polynomials, we can prove the following theorem for Holant problems.
{\renewcommand{\thetheorem}{\ref{thm:holant_random}}
\begin{theorem}
 Let $G$ be a graph of maximum degree $\Delta$, $\mathcal F\subseteq\mathcal F_0$ such that $r(\mathcal{F})\leq ((\Delta\domainvar)^{-\frac{3\Delta}{2}}\ex^{-\frac{5\Delta}{2}}) $ and $\pi\colon V(G)\rightarrow \mathcal F$. There exists an $\varepsilon$-sampling algorithm from the distribution $\mu_G$ for $Z_{\mathcal F}(G,\pi)$ with run-time in $\mathcal O(\Delta n\log(n/\varepsilon))$. Furthermore, $Z_{\mathcal F}(G,\pi)$ admits an FPRAS with run-time $\mathcal O(\Delta n^2/\varepsilon^2\log^2(\Delta n/\varepsilon))$.
\end{theorem}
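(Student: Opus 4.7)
The plan is to mirror the proof of Theorem~\ref{thm:holant_poly_random}, viewing the Holant problem $Z_{\mathcal F}(G,\pi)$ as the Holant polynomial $Z_{\mathcal F}(G,\pi,\mathbf{1})$. By Lemma~\ref{lem:Holant_polynomial_to_polymer_partition_function}, it then suffices to sample from $\mu_{\mathcal C_\domainvar(G)}$ with weights $\Phi_\pi(\cdot,\mathbf{1})$ and to approximate $Z(\mathcal C_\domainvar(G),\Phi_\pi(\cdot,\mathbf{1}))$. The proof reduces to showing that both the mixing condition (Definition~\ref{cond:mixing}) and the sampling condition (Definition~\ref{cond:sampling}) hold under the hypothesis $r(\mathcal{F})\leq(\Delta\kappa)^{-3\Delta/2}\ex^{-5\Delta/2}$, after which the corollary and the simulated-annealing reduction cited at the end of Section~\ref{sec:random} yield the two runtimes claimed.

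For the sampling condition, the key observation is that, because $\mathbf{z}=\mathbf{1}$, the weight reduces to
\[
\Phi_\pi(\dompoly,\mathbf{1})=\prod_{v\in V(\gamma)}\frac{f_v(\adjkappa(\dompoly))}{f_v(\mathbf{0})}\leq r(\mathcal F)^{|V(\gamma)|}.
\]
Since the support $\gamma$ is a connected subgraph of $G$ of maximum degree $\Delta$, the handshake bound $|E(\gamma)|\leq \tfrac{\Delta}{2}|V(\gamma)|$ gives $|V(\gamma)|\geq \tfrac{2}{\Delta}|E(\gamma)|$, and hence
\[
\Phi_\pi(\dompoly,\mathbf{1})\leq r(\mathcal F)^{2|E(\gamma)|/\Delta}.
\]
The hypothesis on $r(\mathcal F)$ translates into $r(\mathcal F)^{2/\Delta}\leq (\Delta\kappa)^{-3}\ex^{-5}$, which gives exactly the sampling condition with $\tau=5+3\ln(\kappa\Delta)$, namely $\Phi_\pi(\dompoly,\mathbf{1})\leq((\Delta\kappa)^3\ex^5)^{-|E(\gamma)|}$.

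Once this pointwise bound is in hand, the mixing condition follows by exactly the calculation at the end of the proof of Theorem~\ref{thm:holant_poly_random}: combining $|\mathcal C_\domainvar^\dompoly(i)|\leq |V(\gamma)|(\Delta\kappa\ex)^i/2$ from Corollary~\ref{lem:connected_subgraph_bound_kappa} with the above weight bound and the crude estimate $i\leq\ex^i$ yields
\[
\sum_{\dompoly'\not\sim\dompoly}|E(\gamma')|\,\Phi_\pi(\dompoly',\mathbf{1})\leq \sum_{i\geq 1}\frac{|V(\gamma)|}{2}\bigl((\Delta\kappa)^2\ex^3\bigr)^{-i}<\frac{|V(\gamma)|}{2}\leq |E(\gamma)|,
\]
the last step using $|V(\gamma)|\leq |E(\gamma)|+1\leq 2|E(\gamma)|$ for any polymer, and the geometric series being bounded strictly below $1$ so that some $\xi\in(0,1)$ can be extracted.

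The only step requiring care is the translation from a bound on $r(\mathcal F)$ to the $|E(\gamma)|$-exponential required by the sampling condition, and this is precisely the role played by the exponent $\Delta$ in the hypothesis: the factor $2/\Delta$ loss from the handshake inequality is absorbed by raising the Holant-polynomial threshold $(\Delta\kappa)^{-3}\ex^{-5}$ to the power $\Delta/2$, which produces the stated bound $(\Delta\kappa)^{-3\Delta/2}\ex^{-5\Delta/2}$. Once the two polymer conditions are verified, the $\mathcal O(\Delta n\log(n/\varepsilon))$ sampler and the $\mathcal O(\Delta n^2/\varepsilon^2\log^2(\Delta n/\varepsilon))$ FPRAS follow verbatim from the argument used for Theorem~\ref{thm:holant_poly_random}.
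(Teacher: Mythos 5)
Your proof mirrors the paper's argument almost exactly: reduce to the polynomial at $\mathbf z=\mathbf 1$, bound $\Phi_\pi(\dompoly,\mathbf 1)\leq r(\mathcal F)^{|V(\gamma)|}$, use the handshake inequality $|V(\gamma)|\geq 2|E(\gamma)|/\Delta$ together with $r(\mathcal F)<1$ to get the sampling condition, and then verify the mixing condition by the same geometric-series calculation as in Theorem~\ref{thm:holant_poly_random}. You are actually slightly more explicit than the paper (which just says the mixing step is identical), but it is the same argument.
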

}
\begin{proof}
 Following the proof of Theorem~\ref{thm:holant_poly_random}, we only have to show that the sampling and the mixing condition hold.
 
 From the statement of the lemma, for any $\dompoly\in\mathcal C_\domainvar(G)$, we have 
\[
\Phi_\pi(\dompoly,\mathbf{z}) = \prod_{v\in V(\gamma)} f_v(\adjkappa(\dompoly))\frac{1}{f_v(\mathbf{0})}
\leq r(\mathcal F)^{|V(\gamma)|}.
\]
From the conditions of the lemma, we have $r(\mathcal F)<1$. Recall that $G$ has maximum degree $\Delta$, thus for each $\dompoly\in$ $\gamma$, we have $|V(\gamma)|\geq (2|E(\gamma)|)/\Delta$ yielding
\[\Phi_\pi(\dompoly,\mathbf{z})\leq (r(\mathcal F))^{-\frac{2|E(\gamma)|}{\Delta}}\leq((\Delta\domainvar)^3\ex^5)^{-|E(\gamma)|},\]
which implies the sampling condition (Definition~\ref{cond:sampling}).

The proof for the mixing condition (Definition~\ref{cond:mixing}) is identical to the proof Theorem~\ref{thm:holant_poly_random}.
\end{proof}

\section{Further applications of abstract polymers}
\label{sec:outlook}

In this section we will show how the problem of counting weighted solutions to a system of sparse equations as studied by Barvinok and Regts~\cite{2019:Barvinok:integer_points} can be modelled as a polymer system and derive zero-free regions for this problem.
The parameter of this problem is a set of complex numbers $w_1,w_2,\dots,w_m\in\C$. We define the weight of a vector $\mathbf{x}\in\Z^m_{\geq0}$ with $\mathbf{x}=(x_1,x_2,\dots,x_m)$ as $w(\mathbf{x})=w_1^{x_1}w_2^{x_2}\dots w_m^{x_m}$. Given a finite set $X\subset\Z^m_{\geq0}$ we define its weight, 
\[
	w(X)=\sum_{\mathbf{x}\in X}w(\mathbf{x}).
\]

Given an $n\times m$ matrix $A=(a_{i,j})$ with $a_{i,j}\in\Z$ and positive integers $\kappa_1,\dots,\kappa_m\in\Z_{>0}$ the task is to approximate $w(X)$ with
\begin{equation}\label{eq:linear_subspace}
 X=\left\{\mathbf{x}\in\Z^m_{\geq 0}\mid Ax=\mathbf{0} \textrm{ and for all }j\in[m], \, x_j\leq \kappa_j\right\}.
\end{equation}

We view the above as a Holant problem where the input is a hypergraph. Here, the vectors correspond to edge assignments and the constraints given by each row of the matrix correspond to the vertex functions. We are going to model this problem as a polymer system. 

First, we explain the underlying hypergraph $H_A$ of the matrix~$A$. Given an $n\times m$ integer matrix $A=(a_{i,j})$, we define the hypergraph
$H_A$ with $n$ vertices and $m$ edges, where vertex $i$ is in edge $j$ if $a_{i,j}\neq 0$.
Hence computing $w(X)$ is a Holant problem on the hypergraph $H_A$, where edge $j$ can take an assignment in $[\kappa_j]$ and the signature of vertex $i$ outputs the value~1 if the constraint $\sum_{j=1}^m a_{i,j}x_j=0$ is satisfied. A vector $\mathbf{x}\in X$ corresponds to an assignment to the set of hyperedges of $H$. 

Given a vector $\mathbf{x}$, its support $\supp(\mathbf{x})=\{j\in[m]\mid x_j\neq0\}$ is the set of indices of its non-zero entries. Given a hypergraph $H$ and a hyperedge set $S\subseteq E(H)$, let $H[S]$ be the subhypergraph induced by the hyperedge set $S$. A hypergraph $H$ is connected if its underlying bipartite graph $G_H$ is connected, where $G_H=(V_1,V_2,E)$, with $V_1=V(H)$, $V_2=E(H)$ and $E=\{\{u,v\}\mid u\in V_1,v\in V_2\textrm{ and }v\in u \}$. 
We say that a vector $\mathbf{x}\in \Z^m_{\geq0}$ is connected if the subhypergraph $H[\supp(\mathbf{x})]$ induced by the hyperedge set in the support of $\mathbf{x}$ is connected.

The idea of our polymer system is to express the solutions in terms of their distance from $\mathbf{0}$, which is a solution to the system, and to represent this distance by a sum of connected vectors.
Hence, we define the set of polymers $\mathcal C(X)$ to contain the connected (with respect to $H_A$) vectors of $X\setminus\{\mathbf{0}\}$ and the weight of a polymer $\gamma\in\mathcal C(X)$ to be $\Phi(\gamma)=w(\gamma)$. The reason we exclude $\mathbf{0}$ from our polymer set is that this will be the ground state in our polymer representation. Two polymers $\gamma_1,\gamma_2\in\mathcal C(X)$ are defined to be incompatible, $\gamma_1\not\sim \gamma_2$, if their underlying hypergraphs share vertices, i.e.~$V(H_A[\supp(\gamma_1)])\cap V(H_A[\supp(\gamma_2)])\neq\emptyset$. The resulting families of pairwise compatible polymers denoted by $\mathcal I(X)$ contain then vectors with pairwise disjoint support. This is crucial to show that the sum over all vectors in such a family yields a vector in $X$.

\begin{lemma}\label{lem:points_polymers}
	For the polymer system $\mathcal C(X)$, we have
	\[Z(\mathcal C(X),\Phi(\cdot))=w(X).\]
\end{lemma}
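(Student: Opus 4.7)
The plan is to exhibit a weight-preserving bijection between $X$ and $\mathcal I(X)$ and then observe that this bijection matches the two sums term by term. Given a family $\Gamma \in \mathcal I(X)$, define $\mathbf{x}_\Gamma = \sum_{\gamma \in \Gamma} \gamma$. Since compatibility forces the vertex sets $V(H_A[\supp(\gamma)])$ to be disjoint across $\Gamma$, the supports $\supp(\gamma)$ are pairwise disjoint hyperedge sets (any shared hyperedge would share all of its vertices). Consequently $(\mathbf{x}_\Gamma)_j \leq \kappa_j$ for every $j$, and $A \mathbf{x}_\Gamma = \sum_{\gamma \in \Gamma} A\gamma = \mathbf{0}$, so $\mathbf{x}_\Gamma \in X$. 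The weight splits multiplicatively across the disjoint supports, giving $w(\mathbf{x}_\Gamma) = \prod_{\gamma \in \Gamma} w(\gamma) = \prod_{\gamma \in \Gamma} \Phi(\gamma)$, including the base case $\Gamma = \emptyset \mapsto \mathbf{0}$ with empty product $1$.

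For the inverse direction, given $\mathbf{x} \in X$, decompose the induced hypergraph $H_A[\supp(\mathbf{x})]$ into its connected components $C_1,\dots,C_k$ (with $k=0$ when $\mathbf{x}=\mathbf{0}$), and let $\gamma_i$ be the restriction of $\mathbf{x}$ to hyperedges in $C_i$. Each $\gamma_i$ is connected by construction, and its coordinate bounds are inherited from $\mathbf{x}$; any two $\gamma_i, \gamma_{i'}$ have disjoint vertex sets in $H_A$, since a shared vertex would connect $C_i$ and $C_{i'}$ through the bipartite incidence graph $G_{H_A}$. So $\Gamma(\mathbf{x}) := \{\gamma_1,\dots,\gamma_k\}$ is a candidate element of $\mathcal I(X)$, provided each $\gamma_i$ actually lies in $X$.

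The only non-routine step is verifying $A\gamma_i = \mathbf{0}$. Fix a row index $\ell$. If the vertex $\ell$ is not incident in $H_A$ to any hyperedge in $C_i$, then the $\ell$-th entry of $A\gamma_i$ is trivially zero. Otherwise, $\ell \in V(H_A[C_i])$; I claim that then $\ell$ is not incident to any hyperedge of $\supp(\mathbf{x}) \setminus C_i$, because any such hyperedge would share the vertex $\ell$ with $C_i$ in the incidence graph $G_{H_A}$, contradicting that $C_i$ is a maximal connected component of $H_A[\supp(\mathbf{x})]$. Hence $\sum_j a_{\ell,j}(\gamma_i)_j = \sum_{j \in \supp(\mathbf{x})} a_{\ell,j} x_j = 0$, using $A\mathbf{x} = \mathbf{0}$. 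This is the step I expect to be the main obstacle, and it hinges entirely on the fact that connectivity is defined via the bipartite incidence graph $G_{H_A}$ so that distinct components share no vertices of $H_A$ at all.

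Finally, the maps $\Gamma \mapsto \mathbf{x}_\Gamma$ and $\mathbf{x} \mapsto \Gamma(\mathbf{x})$ are mutually inverse: starting from $\Gamma$, the connected components of $H_A[\supp(\mathbf{x}_\Gamma)]$ are exactly the supports of the polymers in $\Gamma$ (using compatibility and connectedness), and starting from $\mathbf{x}$, summing the component-restrictions recovers $\mathbf{x}$. Combining this bijection with the weight identity $w(\mathbf{x}_\Gamma) = \prod_{\gamma \in \Gamma} \Phi(\gamma)$ yields
\[
w(X) = \sum_{\mathbf{x} \in X} w(\mathbf{x}) = \sum_{\Gamma \in \mathcal I(X)} \prod_{\gamma \in \Gamma} \Phi(\gamma) = Z(\mathcal C(X), \Phi),
\]
which is the statement of the lemma.
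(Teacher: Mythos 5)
Your proof is correct and follows the same overall strategy as the paper: a weight-preserving bijection between $\mathcal I(X)$ and $X$, with the forward map given by summing the polymers in a family and the inverse given by restricting a vector to the connected components of $H_A[\supp(\mathbf{x})]$. Two points of comparison. First, you establish bijectivity by checking that the two maps are mutually inverse, while the paper proves injectivity directly by a case analysis on a maximal-support polymer in the symmetric difference of two differing families; your route is cleaner and avoids that bookkeeping. Second, and more substantively, you explicitly verify the step the paper glosses over in its surjectivity argument, namely that each component restriction $\gamma_i$ actually satisfies $A\gamma_i = \mathbf{0}$ (so that $\gamma_i \in X$ and is a legitimate polymer). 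The paper only observes that the $\gamma_i$ have connected support. Your argument---that a vertex of $H_A$ incident to some hyperedge in the component $C_i$ cannot be incident to any other hyperedge of $\supp(\mathbf{x})$, since $C_i$ is a maximal connected component, so the $\ell$-th coordinate of $A\gamma_i$ collapses to the $\ell$-th coordinate of $A\mathbf{x}=\mathbf{0}$---is exactly what is needed, and you are right that it is the one nontrivial verification in the lemma.
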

\begin{proof}
	We first define the function $\sigma\colon\mathcal I(X)\rightarrow \Z^m_{\geq 0}$ with $\sigma(\emptyset)=\mathbf{0}$ and for each $\Gamma\in\mathcal I(X)$, $\sigma(\Gamma)=\sum_{\gamma\in\Gamma}\gamma$, where the summation of vectors is done element-wise. As we defined above, two compatible polymers $\gamma_1\sim\gamma_2$ cannot share vertices in the underlying hypergraph~$H_A$, which implies that $\supp(\gamma_1)\cap\supp(\gamma_2)=\emptyset$. Thus for a family of compatible polymers $\Gamma$ we have that $\sigma(\Gamma)\in X$ since there is no $j\in[m]$ with $\sigma(\Gamma)_j>\kappa_j$, and $A\sigma(\Gamma)=\mathbf{0}$ since for each polymer $\gamma\in\Gamma$ we have $A\gamma=\mathbf{0}$. We now show that $\sigma$ is bijective.

	For injectivity consider two families of compatible polymers $\Gamma_1\neq\Gamma_2$. Let $\gamma_1\in\Gamma_1\Delta\Gamma_2$ be a polymer with maximum support cardinality and assume without loss of generality that $\gamma_1\in\Gamma_1\setminus\Gamma_2$. First assume the existence of $j\in\supp(\gamma_1)$ that is not in the support of any polymer in $\Gamma_2$. The latter assumption implies $\sigma(\Gamma_1)\neq\sigma(\Gamma_2)$, since $(\sigma(\Gamma_1))_j\neq 0=(\sigma(\Gamma_2))_j$. 
	
	Now assume otherwise, i.e. $\supp(\gamma_1)\subseteq\bigcup_{\gamma\in\Gamma_2}\supp(\gamma)$. Since $\gamma_1$ is connected in $H_A$, the support of $\gamma_1$, which represents a set of connected hyperedges in $H_A$, cannot be partitioned into pairwise compatible sets. Hence, $\Gamma_2$ can contain at most one polymer $\gamma_2$ with $\supp(\gamma_1)\subseteq\supp(\gamma_2)$. Due to the maximality of the support cardinality in the choice of $\gamma_1$ we have $\supp(\gamma_1)=\supp(\gamma_2)$. Since $\gamma_1\notin\Gamma_2$, this means that $\gamma_1\neq\gamma_2$ which means that there is at least one index~$j$ in which $\gamma_1$ and $\gamma_2$ differ. By the definition of incompatibility, no other polymer in $\Gamma_1$ or $\Gamma_2$ can affect this index~$j$ and it follows that $(\sigma(\Gamma_1))_j=(\sigma(\{\gamma_1\}))_j\neq(\sigma(\{\gamma_2\}))_j=(\sigma(\Gamma_2))_j$. This implies $\sigma(\Gamma_1)\neq\sigma(\Gamma_2)$ and concludes the proof that $\sigma$ is injective.
	
	For surjectivity we will argue that, given a vector $x\in X$, there is always a family of polymers $\Gamma\in \mathcal I(X)$ with $\sigma(\Gamma)=x$. Assume that the hypergraph $H_A[\supp(x)]$ induced by the support of $x$ has $k$ connected components induced by the sets of edges $C_1\dots C_k$ for some $k\geq0$. We deduce that the vectors $\gamma_i$ with $(\gamma_i)_j=x_j$ for all $j\in C_i$ and $(\gamma_i)_j=0$ otherwise are in $\mathcal C(X)$ since by definition their support induces connected subhypergraphs of $H_A$. Hence, for the family $\Gamma=\{\gamma_1,\dots\gamma_k\}$ we have $\sigma(\Gamma)=x$.
	
	Finally, observe that for any family $\Gamma\in\mathcal I(X)$ we have $\prod_{\gamma\in\Gamma}w(\gamma)=w(\sigma(\Gamma))$. The latter yields
	\begin{align*}
		Z(\mathcal C(X),\Phi(\cdot))& = \sum_{\Gamma\in\mathcal I(X)}\prod_{\gamma\in\Gamma}w(\gamma)\\
		&=\sum_{\Gamma\in\mathcal I(X)}w(\sigma(\Gamma))\\
		&=\sum_{x\in X}w(x) = w(X),
	\end{align*}
		where the last equality comes from the bijectivity of $\sigma$.
	\end{proof}

By this polymer representation we can now apply Theorem~\ref{thm:cluster_expansion_polymer_models} to prove absolute convergence of $\log w(X)$ which yields the following.
\begin{lemma}\label{lem:points}
	Let $A$ be an $n\times m$ integer matrix, let $\kappa_1,\dots,\kappa_m\in\Z_{>0}$ and $X$ as in \eqref{eq:linear_subspace}. Assume the number of non-zero entries in every row of $A$ does not exceed $r$ for some $r\geq2$ and the number of non-zero entries in every column of $A$ does not exceed $c$ for some $c\geq 1$. Let $w=\max\{|w_1|, \dots, |w_m|\}$ and $\kappa=\max\{\kappa_1, \dots, \kappa_m\}$. If $w\leq ({(r\ex+1)c\kappa\ex^{1/2}})^{-1}$, then $\log w(X)$ converges absolutely.
\end{lemma}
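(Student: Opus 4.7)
The plan is to apply the Kotecký--Preiss criterion (Theorem~\ref{thm:cluster_expansion_polymer_models}) to the polymer system $(\mathcal{C}(X),\Phi)$ with $\Phi(\gamma)=w(\gamma)$ constructed just before the lemma. Once its hypothesis is verified, Theorem~\ref{thm:cluster_expansion_polymer_models} yields absolute convergence of the cluster expansion of $Z(\mathcal{C}(X),\Phi)$, and Lemma~\ref{lem:points_polymers} identifies this polymer partition function with $w(X)$, so in particular the Taylor expansion of $\log w(X)$ converges absolutely.

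As the Kotecký--Preiss size function I take $a(\gamma)=\alpha\,|\supp(\gamma)|$ for some $\alpha>0$ to be tuned at the end; the factor $\ex^{1/2}$ in the target suggests $\alpha=1/2$ is near-optimal. Fix a polymer $\gamma$ with $s=|\supp(\gamma)|$. Since every hyperedge of $H_A$ has at most $c$ vertices, the vertex set $V=V(H_A[\supp(\gamma)])$ satisfies $|V|\leq cs$; since every vertex lies in at most $r$ hyperedges, the set $N(\gamma)$ of hyperedges of $H_A$ meeting $V$ has $|N(\gamma)|\leq rcs$. Any $\gamma'\not\sim\gamma$ must contain in its support at least one hyperedge of $N(\gamma)$, and I use such a hyperedge as a root to set up the enumeration.

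Grouping incompatible polymers by $s':=|\supp(\gamma')|$, I apply Lemma~\ref{lem:connected_subgraph_bound} to the hyperedge intersection graph of $H_A$ (whose maximum degree is at most $c(r-1)$) to count the connected $s'$-hyperedge subsets of $H_A$ containing the chosen root by at most $\tfrac{1}{2}(\ex c(r-1))^{s'}$. For the weight contribution I relax the linear constraint $A\gamma'=\mathbf{0}$, which can only reduce the sum, and factor the assignment sum for a fixed support $S'$:
\[
\sum_{\phi'}|w(\gamma')|=\prod_{j\in S'}\sum_{x=1}^{\kappa_j}|w_j|^x\leq(\kappa w)^{s'}.
\]
Multiplying by $\ex^{\alpha s'}$, summing the resulting geometric series in $s'$ (which converges provided $\ex c(r-1)\kappa w\ex^{\alpha}<1$), and comparing the result against $a(\gamma)=\alpha s$ reduces the Kotecký--Preiss inequality to an explicit condition on $w$ parameterised by $\alpha$.

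The main obstacle is extracting the precise constant $(r\ex+1)c\kappa\ex^{1/2}$ in the statement. A naive execution of the scheme above yields a quantitatively similar but not identical bound, because the prefactor $|N(\gamma)|\leq rcs$ and the branching factor $\ex c(r-1)$ together introduce two factors of $r$ instead of one. Matching the stated bound will require a more refined enumeration of incompatible polymers, treating the first ``entry'' hyperedge of $\gamma'$ into $V$ separately from the subsequent growth so that one factor of $r$ is absorbed into the root-selection step and the remaining expansion contributes only the geometric factor with ratio proportional to $\ex\kappa w\ex^{\alpha}$, followed by optimising $\alpha=1/2$ via a short calculus argument to collect the prefactor into the form $(r\ex+1)$.
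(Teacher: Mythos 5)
You have the right skeleton: reduce to the polymer partition function via Lemma~\ref{lem:points_polymers}, verify the Kotecký--Preiss hypothesis with $a(\gamma)=\alpha|\supp(\gamma)|$, and optimise $\alpha$ (with $\alpha=1/2$ giving the stated simplification). Your per-polymer weight bound, summing $\sum_{x=1}^{\kappa_j}|w_j|^x\leq\kappa w$ over each coordinate to get $(\kappa w)^{s'}$, is equivalent to the paper's (which bounds each fixed colouring's weight by $w^{s'}$ and multiplies separately by the $\kappa^{s'}$ colourings).

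The genuine gap is in the enumeration of incompatible polymer supports, and you already sense it. Passing to the hyperedge intersection graph of $H_A$ and invoking Lemma~\ref{lem:connected_subgraph_bound} does not directly apply (that lemma counts connected subgraphs by their number of \emph{edges}, not by their number of vertices/hyperedges), and even the appropriate variant gives a base $\ex c(r-1)$ with no prefactor savings; together with your root-selection factor $|N(\gamma)|\leq rcs$ this produces two factors of $r$, which is exactly why you cannot hit $(r\ex+1)c\kappa\ex^{1/2}$. The paper instead invokes a hypergraph-specific counting result (``insects'', Corollary~3.8 of Liu~2018) applied to the bipartite/dual view of $H_A$: the number of $\kappa$-edge-coloured connected hyperedge-induced subhypergraphs with $i$ hyperedges containing a fixed one is at most $\tfrac{(\ex rc)^{i-1}}{2}\kappa^i$, and the exponent $i-1$ (rather than $i$) is precisely what produces the prefactor $1/(2rc\ex)$ that absorbs the excess. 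Combined with $|V(\gamma)|\leq c|E(\gamma)|$, this turns the Kotecký--Preiss condition into $\tfrac{1}{1-\ex^{\alpha+1}rc\kappa w}-1\leq 2r\ex\alpha$, whence $w\leq\tfrac{2\alpha}{(2r\ex\alpha+1)c\kappa\ex^{\alpha}}$, and $\alpha=1/2$ gives the claimed constant. So the ``more refined enumeration'' you anticipate is not a cosmetic improvement but the missing ingredient; without a hypergraph subgraph-counting lemma of this form (saving a full factor of $\ex rc$ in the count), the argument does not close.
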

\begin{proof}
Consider the polymer representation of $w(X)$ from Lemma~\ref{lem:points_polymers}. Based on the definition of $w$ and $w\leq 1$ we deduce that for every $\gamma\in\mathcal C(X)$ it holds $w_j^{x_j}\leq w$ for all $j\in\supp(\gamma)$. Thus, $\Phi(\gamma)=w(\gamma)\leq w^{|E(\gamma)|}$.
To apply Theorem~\ref{thm:cluster_expansion_polymer_models} consider an arbitrary fixed polymer $\gamma\in \mathcal C(X)$ and denote by $\mathcal C^\gamma(i) = |\{\gamma'\in \mathcal C (X)\mid \gamma'\nsim\gamma, |E(\gamma)|=i \}|$ the number of polymers of size $i$ incompatible with $\gamma$. As in our other bounds, we choose $a(\gamma') = \alpha|E(\gamma')|$ where $\alpha\in\mathbb R_{>0}$ will be chosen later in order to optimise the bound.
\begin{align*}
	\sum_{\gamma'\nsim\gamma} |\Phi(\gamma')| \ex^{a(\gamma')} &\leq \sum_{\gamma'\nsim\gamma} w^{|E(\gamma')|} \ex^{a(\gamma')} \\
	& \leq \sum_{i=1}^{m} \mathcal C^\gamma(i) w^i \ex^{ \alpha i}.
\end{align*}
	
Observe that in our hypergraph representation $r$ denotes the maximum degree of $H_A$ and $c$ denotes the maximum hyperedge size of $H_A$. Further observe that a hypergraph $H=(V,E)$ is also a hypergraph $H'=(E,V)$ with the same underlying bipartite graph representation, i.e., $G_H=G_{H'}$. Now we can use~\cite[Corollary~3.8]{Liu_2018} to deduce that there are at most $\frac{(erc)^{i-1}}{2}\kappa^i$ many edge coloured connected hyperedge-induced subhypergraphs of $H_A$ with $i$ hyperedges that contain a fixed hyperedge. The latter corollary applies as coloured hypergraphs fall into the definition of \emph{insects}~\cite[Definition~3.8]{Liu_2018}.
This bound implies $\mathcal C^\gamma(i)\leq \frac{(\ex rc)^{i-1}}{2}\kappa^i |V(\gamma)|$, which gives
\begin{align*}
	\sum_{\gamma'\nsim\gamma} |\Phi(\gamma')| \ex^{a(\gamma')} &\leq \sum_{i=1}^{m}\frac{(\ex rc)^{i-1}}{2}\kappa^i |V(\gamma)|w^i \ex^{ \alpha i}\\
	 &\leq \frac {|V(\gamma)|}{2rc\ex}\sum_{i=1}^{m}(\ex^{\alpha+1}rc\kappa w)^i .
\end{align*}

Applying Theorem~\ref{thm:cluster_expansion_polymer_models} it remains to identify the values of $w$ for which the following holds,
\[
	\frac {|V(\gamma)|}{2rc\ex}\sum_{i=1}^{m}(\ex^{\alpha+1}rc\kappa w)^i\leq \alpha|E(\gamma)|.
\]

Since $m$ can be arbitrarily large, we can use the geometric series formula for the sum on the left-hand side of the above inequality, hence the above equation is implied by
\begin{align*}
&&\sum_{i=0}^{\infty}(\ex^{\alpha+1}rc\kappa w)^i -1&\leq 2rc\ex\alpha\frac{|E(\gamma)|}{|V(\gamma)|} \\
\Leftarrow\quad && \frac{1}{1-\ex^{\alpha+1}rc\kappa w}-1&\leq 2rc\ex\alpha\frac{|E(\gamma)|}{|V(\gamma)|}.
\end{align*}

We know that in $H$, each hyperedge contains at most $c$ vertices, therefore for any polymer~$\gamma$ we have $|V(\gamma)|\leq c|E(\gamma)|$. Using this the above inequality is implied by
\begin{align*}
  && \frac{1}{1-\ex^{\alpha+1}rc\kappa w}-1 &\leq 2r\ex\alpha\\
  \Leftrightarrow\quad\quad && \ex^{\alpha+1}rc \kappa w &\leq\frac{2r\ex\alpha}{2r\ex\alpha+1} \\
  \Leftrightarrow\quad\quad && w &\leq \frac{2\alpha}{(2r\ex\alpha+1)c\kappa \ex^\alpha}.
\end{align*}
Through derivative analysis, we find this is maximal for $2\ex r\alpha^2+\alpha-1=0$, which resolves to $\alpha = \frac{\sqrt{8\ex r+1}-1}{4\ex r}$.
This brings the final bound to
\[
	w\leq \frac{\sqrt{8\ex r+1}-1}{(\sqrt{8\ex r+1}+1)rc\kappa\ex^{(\sqrt{8\ex r+1}-1)/(4\ex r)+1}}.
\]

Choosing $\alpha=\frac{1}{2}$ for simplicity instead yields the bound given in the statement
\[
	w\leq \frac{1}{(r\ex+1)c\kappa\ex^{1/2}}.
\]
\end{proof}

When $X$ is restricted to vectors in $\{0,1\}^m$ Barvinok and Regts~\cite[Theorem 1.1]{2019:Barvinok:integer_points} using a different technique were able to obtain the improved bound of $w\leq\frac{0.46}{r\sqrt{c}}$. Furthermore given an integer $\kappa>1$ for $X\subseteq (\mathbb Z/\kappa\mathbb Z)^m$, \cite[Theorem 1.2]{2019:Barvinok:integer_points} yields the bound $w\leq\frac{0.46}{(\kappa-1)r\sqrt{c}}$.
Our bounds for $w(X)$ yield smaller regions than the theorems of Barvinok and Regts but are applicable to a broader class of sets $X$. 
As a possible application of their theorem, they give a polynomial expressing perfect matchings in terms of their distance from a given perfect matching. As we show in the next section, we can abuse the type of constraints perfect matchings impose and obtain a better bound by using our polymer approach.

\subsection{Improved zero-free regions for a perfect matching polynomial}
\label{sec:hpm}

Given a $k$-hypergraph $H=(V,E)$ and a perfect matching $\groundmatching\subseteq E$ one can define a partition function for perfect matchings by expressing the distance to $\groundmatching$ in the following way
\[
	Z_\mathrm{pm}(H,\groundmatching,z)=\sum_{\sigma\in\{0,1\}^E}\prod_{v\in V(G)} f(\sigma|_{\adjedges(v)})z^{|\sigma\difference \sigma_\groundmatching|}.
\]
Here, $\sigma_\groundmatching\in\{0,1\}^E$ denotes the signature corresponding to $\groundmatching$, $\sigma\difference\bar\sigma=\{e\in E\mid \sigma(e)\neq \bar\sigma(e)\}$ for any two assignments $\sigma, \bar \sigma\in\{0,1\}^E$ and $f$ is the ``exactly-one''-function. Note that finding a perfect matching on a $k$-hypergraph with $k\geq3$ is $\np$-complete (see e.g.~\cite{1999:Ausiello:Complexity_and_approximation}), therefore we assume that $\groundmatching$ is given. Barvinok and Regts~\cite{2019:Barvinok:integer_points} prove that for $|z|\leq \frac{0.46}{\Delta\sqrt{k}}$ this polynomial has no roots. By using our technique we can improve this bound to $|z|\leq((\Delta-1+k)e)^{-1}$.

Let $H=(V,E)$ be a $k$-hypergraph of max degree $\Delta$. Let $G=(V_L, V_R,E)$ be the corresponding bipartite graph, where $V_L=V(H)$, $V_R=E(H)$ and $E(G)=\{(u,e)\mid u\in V(H),e\in E(H) \textrm{ and } u\in e \}$. Let $\Delta_L=\Delta$ be the maximum degree of the vertices in $V_L$ and $\Delta_R=k$ be the maximum degree of the vertices in $V_R$.
Any perfect matching $M'$ in $H$ can be seen as an assignment $\sigma_{M'}:V_R\rightarrow \{0,1\}$, with $\sigma_{M'}(v)=1$ if and only if $v\in M'$. Let $\groundmatching$ be the ``ground'' perfect matching for $H$. Hence $\groundmatching\subseteq V_R$ in the bipartite graph $G$. 

Given a perfect matching $M'$, let $E_{M'}=\{v\in V_R\mid \sigma_{\groundmatching}(v)\neq \sigma_{M'}(v)\}$ and consider the subgraph~$G[S_{M'}]$ induced by the set $S_{M'}=E_{M'}\cup\bigcup_{v\in E_{M'}}\Gamma_G(v)$. 
We can rewrite $Z_\mathrm{pm}$ in terms of the bipartite graph $G$ as

\[
	Z_\mathrm{pm}(G,\groundmatching,z)=\sum_{M'\in\mathcal{PM}(H)} z^{|V_R(G[S_{M'}])|},
\]
where $\mathcal{PM}(H)$ denotes the set of perfect matchings of $H$. Observe that for each $M'\in\mathcal{PM}(H)$, $G[S_{M'}]$ has the property that each vertex $v\in V(G[S_{M'}])\cap V_L$ has exactly one neighbour in $M$ and one neighbour in $ V_R\setminus M$.

We are now ready to define a polymer model for $Z_\mathrm{pm}(G,z)$. The set of polymers $\mathcal C(G)$ contains the connected subgraphs of $G$ such that every vertex in $V_L$ has exactly one neighbour in $M$ and exactly one neighbour in $V_R\setminus M$ and that for every $v\in V_R$, $\Gamma_G(v)\subseteq V(\gamma)$. We say that two polymers $\gamma\not\sim\gamma'$ are incompatible if $V(\gamma)\cap V(\gamma')\neq\emptyset$. Given a polymer $\gamma$ and $z\in\mathbb C$, we define $\Phi_\mathrm{pm}(\gamma,z)=z^{|V_R(\gamma)|}$. 

\begin{lemma}
 Let $H=(V,E)$ be a finite hypergraph and let $G$ be the corresponding bipartite graph. $Z_\mathrm{pm}(H,M,z)=Z(\mathcal C(G),\Phi_\mathrm{pm}(\cdot,z))$ for all $z\in \mathbb C$.
\end{lemma}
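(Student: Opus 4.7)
The plan is to exhibit a weight-preserving bijection between the set $\mathcal{PM}(H)$ of perfect matchings of $H$ and the set $\mathcal{I}(G)$ of families of pairwise compatible polymers, and then to conclude by comparing the two sums term by term.

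First, I would set up the map from perfect matchings to polymer families. Given $M' \in \mathcal{PM}(H)$, I would take $\Gamma_{M'}$ to be the set of connected components of $G[S_{M'}]$. The work is to check that each such component is a valid polymer: a vertex $v \in V_L$ lies in $S_{M'}$ exactly when the hyperedge of $M$ covering $v$ differs from the hyperedge of $M'$ covering $v$; in that case both of these hyperedges belong to $E_{M'} = M \triangle M'$, so $v$ has exactly one neighbour in $M$ and exactly one neighbour in $V_R \setminus M$ inside its component. Moreover, by construction $S_{M'} \supseteq \Gamma_G(e)$ for every $e \in E_{M'} = S_{M'} \cap V_R$, so the closure requirement on $V_R$-vertices of a polymer is met. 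Distinct components are vertex-disjoint, hence pairwise compatible, so $\Gamma_{M'} \in \mathcal{I}(G)$.

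Next I would construct the inverse. Given $\Gamma \in \mathcal{I}(G)$, define
\[
M'_\Gamma \;=\; \Bigl(M \setminus \bigcup_{\gamma \in \Gamma} V_R(\gamma)\Bigr) \;\cup\; \bigcup_{\gamma \in \Gamma} \bigl(V_R(\gamma) \setminus M\bigr),
\]
i.e.~flip the $M$-status of every hyperedge lying in some polymer. To see this is a perfect matching, I would argue: (i) every $v \in V_L$ that is in some (unique by compatibility) $V(\gamma)$ loses its unique $M$-neighbour in $\gamma$ and gains its unique $V_R \setminus M$-neighbour in $\gamma$, so is still covered exactly once; (ii) every $v \in V_L$ not in any polymer is still covered by its $M$-edge, since the closure property forbids that $M$-edge from lying in a polymer that avoids $v$; (iii) two chosen hyperedges cannot share a hypergraph vertex, as such a vertex would force them into the same polymer and contradict the one-$M$-neighbour/one-non-$M$-neighbour property. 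A direct check confirms $\Gamma_{M'_\Gamma} = \Gamma$ and $M'_{\Gamma_{M'}} = M'$, so the two assignments are mutually inverse.

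Finally I would match the weights. For any $M' \in \mathcal{PM}(H)$, only assignments corresponding to perfect matchings survive the product of $f$-values in $Z_\mathrm{pm}(H,M,z)$ (all others contribute zero), and for such a matching $\sigma_{M'} \Delta \sigma_M = E_{M'} = V_R(G[S_{M'}])$, so the contribution is exactly $z^{|V_R(G[S_{M'}])|}$. Because the components $\gamma \in \Gamma_{M'}$ partition $V(G[S_{M'}])$, we get
\[
\prod_{\gamma \in \Gamma_{M'}} \Phi_\mathrm{pm}(\gamma, z) \;=\; z^{\sum_{\gamma} |V_R(\gamma)|} \;=\; z^{|V_R(G[S_{M'}])|},
\]
and summing the bijection over $\mathcal{PM}(H)$ yields $Z_\mathrm{pm}(H,M,z) = Z(\mathcal{C}(G), \Phi_\mathrm{pm}(\cdot,z))$. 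The main obstacle is the bijection itself, in particular verifying that flipping an arbitrary compatible family actually produces a perfect matching; the polymer closure condition $\Gamma_G(e) \subseteq V(\gamma)$ for $e \in V_R(\gamma)$ together with compatibility does exactly the book-keeping needed to prevent uncovered or doubly-covered vertices of $H$.
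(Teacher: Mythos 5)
Your proof is correct and follows the same bijection strategy as the paper: mapping a perfect matching $M'$ to the connected components of $G[S_{M'}]$ and matching weights termwise. You flesh out more detail than the paper's own (rather terse) proof, in particular by explicitly constructing the inverse map via flipping the $M$-status of polymer hyperedges and verifying it lands on a perfect matching, but the underlying idea is identical.
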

\begin{proof}
Since $Z_\mathrm{pm}(H,M,z)=Z_\mathrm{pm}(G,M,z)$, it suffices to show that $Z_\mathrm{pm}(G,M,z)=Z(\mathcal C(G),\Phi_\mathrm{pm}(\cdot,z))$. We now give a bijection between the perfect matchings $M'$ and the families of polymers $\Gamma$, such that $z^{|V_R(G[S_{M'}])|}=\prod_{\gamma\in\Gamma}\Phi_\mathrm{pm}(\gamma,z)$. 

The ground matching $M$ corresponds to the empty set. Given a perfect matching $M'$ observe that the connected components of $G[S_{M'}]$ are in fact polymers in $\mathcal C(G)$, say $\gamma_1,\gamma_2,\dots,\gamma_j$, where $j\geq0$. Since $\gamma_1,\gamma_2,\dots,\gamma_j$ are the connected components of $G[S_{M'}]$, they are pairwise compatible. From the definition of $\Phi_\mathrm{pm}(\gamma,z)$, it follows that $\prod_{i=1}^j\Phi_\mathrm{pm}(\gamma_i,z)=\prod_{i=1}z^{|V_R(\gamma_i)|} =z^{|V_R(G[S_{M'}])|}$. 
\end{proof}

In order to apply Theorem~\ref{thm:cluster_expansion_polymer_models} we will use the following lemma.

\begin{lemma}\label{lem:bipartite_graph_enumeration}
 Let $G=(V_L, V_R,E)$ be a bipartite graph, where $\Delta_L$ is the maximum degree of the vertices in $V_L$, such that the vertices in $V_R$ have degree at least 2.  Let $M\subseteq V_R$, such that every vertex $v\in V_L$ has exactly one neighbour in $M$. Given a vertex $u\in V_L $, the number of connected induced subgraphs~$G'$ of $G$ such that the following hold:
 \begin{enumerate}
  \item $u\in V(G')$;
  \item $|V(G')\cap V_R|=k$;
  \item for each $v\in V_R$, $\Gamma_G(v)\subseteq V(G')$; and
  \item each $v\in (V(G')\cap V_L)$ has exactly one neighbour in $V(G')\cap M$ and exactly one neighbour in $V(G')\cap (V_R\setminus M)$
 \end{enumerate}
is at most $(\Delta_L-1)^{k-1}$.
\end{lemma}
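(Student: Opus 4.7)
The plan is to inject the set of valid $G'$ into $\{1,\ldots,\Delta_L-1\}^{k-1}$ via a canonical, step-by-step enumeration of $W:=V(G')\cap V_R$, and thereby bound the count by $(\Delta_L-1)^{k-1}$.

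First, I would read off the structural consequences of conditions 3 and 4. Since $u\in V(G')$, the unique $M$-neighbour $m_u$ of $u$ is forced to be in $W$, so the enumeration can be seeded at $m_u$. More importantly, conditions 3 and 4 together force every $w\in V(G')\cap V_L$ to have \emph{exactly two} neighbours inside $W$, namely the $M$-neighbour $m_w$ and one chosen vertex of $V_R\setminus M$; this degree-2 property of $V_L$-vertices inside $G'$ is the structural feature I will exploit.

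Fix a total order $\prec$ on $V_L$ and, for each $w\in V_L$, a total order on $\Gamma_G(w)$. Setting $W^{(0)}=\{m_u\}$, at step $i\in\{1,\dots,k-1\}$ I let $w_i$ be the $\prec$-smallest vertex in $\Gamma_G(W^{(i-1)})$ having \emph{exactly one} $V_R$-neighbour in $W^{(i-1)}$, take $v_i$ to be the unique $V_R$-neighbour of $w_i$ lying in $W\setminus W^{(i-1)}$, put $W^{(i)}=W^{(i-1)}\cup\{v_i\}$, and record $c_i$ as the position of $v_i$ in the fixed ordering of $w_i$'s $V_R$-neighbours that are \emph{not} in $W^{(i-1)}$. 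Since at least one $V_R$-neighbour of $w_i$ already lies in $W^{(i-1)}$, this rank lies in $\{1,\dots,\deg_G(w_i)-1\}\subseteq\{1,\dots,\Delta_L-1\}$.

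The two facts left to verify are: (i) a valid $w_i$ exists whenever $W^{(i-1)}\subsetneq W$, and (ii) the map $G'\mapsto(c_1,\dots,c_{k-1})$ is injective. Fact (i) uses connectedness of $G'$: any $v\in W\setminus W^{(i-1)}$ is joined to $W^{(i-1)}$ by a shortest path in $G'$, and the last $V_L$-vertex on that path before it enters $W^{(i-1)}$ has, inside $G'$, exactly one $V_R$-neighbour in $W^{(i-1)}$ and one in $W\setminus W^{(i-1)}$, so it satisfies the ``exactly one'' condition. Fact (ii) is immediate: the selection of $w_i$ depends only on $W^{(i-1)}$, so two enumerations agreeing on $W^{(i-1)}$ agree on $w_i$; if the corresponding $W$'s differ they must diverge at some step, and at the first such step the differing $v_i$ forces a different $c_i$. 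Verifying (i) is the main subtlety, as it is what guarantees that the enumeration yields precisely $k-1$ well-defined choices and never stalls earlier.
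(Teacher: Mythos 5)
Your proof is correct and follows the same core strategy as the paper's: seed the exploration at the forced $M$-neighbour $m_u$ of $u$, grow $G'$ step by step, and charge at most $\Delta_L-1$ options to each newly revealed $V_R$-vertex, using connectedness of $G'$ to guarantee the process never stalls. The main difference is presentational: you iterate over the $k-1$ vertices of $V_R'$ beyond $m_u$ and package the count as an explicit injection into $\{1,\dots,\Delta_L-1\}^{k-1}$ via a canonical choice of $w_i$, whereas the paper iterates over $V_L'$-vertices and argues more informally that only $|V_R'\setminus M|\le k-1$ of those choices are effective; your bookkeeping makes the bound cleaner to verify.
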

\begin{proof}
Let $V_L'=V(G')\cap V_L$ and $V_R'=V(G')\cap V_R$. Since $u\in V_L$, we have $u\in V_L'$. Condition 4 implies that in $G'$, $u$ has exactly two neighbours $v_M\in M$ and $v_{\bar{M}}\notin M$. By the definition of $M$, the choice of $v_M$ is fixed and there are at most $\Delta_L-1$ choices for $v_{\bar{M}}\in V_R\setminus M$. Since $G'$ is connected, either $k=2$ and the lemma is proved, or there must exist at least one other vertex $u_1\in\Gamma_{G'}(v_M)\cup\Gamma_{G'}(v_{\bar{M}})$. As in the case of $u$, Condition 4 implies the existence of a fixed neighbour of $u_1$ in $M$ and at most $\Delta_L-1$ choices for a neighbour of $u_1$ in $V_R\setminus M$. Proceeding with this argumentation for each vertex in $V_L'$ we can observe that each vertex in $V_R'\setminus M$ is a result of one of the up to $\Delta_L-1$ choices for each vertex in $V_L'$. Since there must be at least one vertex $v\in M$ whose choice is fixed, the number of the induced subgraphs $G'$ is at most $(\Delta_L-1)^{k-1}$.
\end{proof}


\begin{lemma}
 For all $k$-hypergraphs~$H$ with maximum degree $\Delta$ and given a ground perfect matching $\groundmatching$, the polynomial $Z_\mathrm{pm}(H,M,z)$ has no roots in $|z|\leq((\Delta-1+k)e)^{-1}$.
\end{lemma}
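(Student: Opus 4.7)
The plan is to invoke the Koteck{\'y}--Preiss condition (Theorem~\ref{thm:cluster_expansion_polymer_models}) on the polymer representation just established, $Z_\mathrm{pm}(H,M,z)=Z(\mathcal C(G),\Phi_\mathrm{pm}(\cdot,z))$. Since $|\Phi_\mathrm{pm}(\gamma,z)|=|z|^{|V_R(\gamma)|}$, I take as witness $a(\gamma)=|V_R(\gamma)|$ and must verify
\[
\sum_{\gamma'\not\sim\gamma}|\Phi_\mathrm{pm}(\gamma',z)|\,e^{a(\gamma')}\leq a(\gamma)
\]
for every $\gamma\in\mathcal C(G)$. Absolute convergence of the cluster expansion, and in particular $Z_\mathrm{pm}(H,M,z)\neq 0$, will follow directly.

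The first step is a short structural observation: incompatibility $\gamma\not\sim\gamma'$ is equivalent to $V_L(\gamma)\cap V_L(\gamma')\neq\emptyset$. The reverse direction is immediate; for the forward direction, any shared vertex of $V_R$, say $v$, must---by the polymer requirement that $\Gamma_G(v)\subseteq V(\cdot)$---drag all $k$ of its $V_L$-neighbours into both $\gamma$ and $\gamma'$, so any common vertex forces a common $V_L$-vertex. Every $\gamma'$ incompatible with $\gamma$ therefore contains at least one vertex of $V_L(\gamma)$, so applying Lemma~\ref{lem:bipartite_graph_enumeration} to each such fixed $V_L$-vertex and union-bounding yields
\[
\bigl|\{\gamma'\in\mathcal C(G) : \gamma'\not\sim\gamma,\ |V_R(\gamma')|=j\}\bigr|\leq |V_L(\gamma)|\,(\Delta-1)^{j-1}.
\]

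The remainder is a routine geometric series computation. Using the crude bound $|V_L(\gamma)|\leq k|V_R(\gamma)|$ (each $V_R$-vertex of $\gamma$ contributes at most $k$ vertices to $V_L(\gamma)$), I estimate
\[
\sum_{\gamma'\not\sim\gamma}|\Phi_\mathrm{pm}(\gamma',z)|\,e^{|V_R(\gamma')|}\leq k|V_R(\gamma)|\sum_{j\geq 1}(\Delta-1)^{j-1}(|z|e)^{j}=\frac{k|V_R(\gamma)|\,|z|e}{1-(\Delta-1)|z|e},
\]
and asking that the right-hand side be at most $|V_R(\gamma)|=a(\gamma)$ rearranges to exactly $|z|\leq((\Delta-1+k)e)^{-1}$. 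The same hypothesis guarantees $(\Delta-1)|z|e\leq(\Delta-1)/(\Delta-1+k)<1$, so the geometric series indeed converges. The only genuinely nontrivial part of the argument is the structural equivalence pinning every incompatibility on $V_L$; once that is in place the arithmetic lines up so that the choice $a(\gamma)=|V_R(\gamma)|$ reproduces the claimed radius with equality at the boundary.
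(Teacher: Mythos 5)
Your proposal is correct and follows essentially the same route as the paper: same polymer model, same Koteck{\'y}--Preiss verification with $a(\gamma)$ proportional to $|V_R(\gamma)|$, same invocation of Lemma~\ref{lem:bipartite_graph_enumeration} with a union bound over $V_L(\gamma)$, same use of $|V_L(\gamma)|\leq k|V_R(\gamma)|$, and the same geometric series; the only difference is that you fix $\alpha=1$ from the outset, whereas the paper first optimises over a parameter $\alpha$ (finding $\alpha=\tfrac{\sqrt{k^2+4(\Delta-1)k}-k}{2(\Delta-1)}$) before specialising to $\alpha=1$ to obtain the simpler stated bound, so you arrive at the identical conclusion without the discarded optimisation step. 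Your explicit structural observation that incompatibility is pinned to a shared $V_L$-vertex (because $\Gamma_G(v)\subseteq V(\gamma)$ for $v\in V_R(\gamma)$) makes precise a step the paper only uses implicitly when applying the enumeration lemma.
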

\begin{proof}
 Let $H$ be a $k$-hypergraph with $G=(V_L,V_R,E)$ its corresponding bipartite graph, where $\Delta_L=\Delta$ and $\Delta_R=k$. Consider the polymer model with $Z(\mathcal C(G),\Phi_\mathrm{pm}(\cdot,z))=Z_\mathrm{pm}(H,M,z)$. As in all our absolute convergence proofs we apply Theorem~\ref{thm:cluster_expansion_polymer_models} for $Z(\mathcal C(G),\Phi_\mathrm{pm}(\cdot,z))$ by choosing $a(\gamma)=\alpha|V_R(\gamma)|$, where $\alpha\in\mathbb{R}_{\geq0}$ to be chosen later. Let $\mathcal C_\mathrm{pm}^\gamma(i)$ be the number of polymers $\gamma'\not\sim\gamma$ with $|V_R(\gamma')|=i$. We have
 
\begin{align*}
	\sum_{\gamma'\nsim\gamma} |\Phi(\gamma')| \ex^{a(\gamma')} &\leq \sum_{\gamma'\nsim\gamma}{|z|}^{|V_R(\gamma')|} \ex^{\alpha|V_R(\gamma')|} \\
	& \leq \sum_{i=1}^{|V_R|} \mathcal C_\mathrm{pm}^\gamma(i) {|z|}^i \ex^{ \alpha i}.
\end{align*}
	
We observe that each polymer $\gamma$ is a connected induced subgraph of $G$ that fulfills Conditions~3 and~4 of the statement of Lemma~\ref{lem:bipartite_graph_enumeration}, with $M=M_0$. Therefore, by Lemma~\ref{lem:bipartite_graph_enumeration} we have $\mathcal C_\mathrm{pm}^\gamma(i)\leq |V_L(\gamma)|(\Delta_L-1)^{i-1}$. Therefore, to prove the lemma it suffices to show that
 \begin{align*}
 & \sum_{i=1}^{|V_R|}|V_L(\gamma)|(\Delta_L-1)^{i-1}{|z|}^i \ex^{ \alpha i} \leq\alpha|V_R(\gamma)|\\
 \Leftrightarrow\quad\quad & \frac{|V_L(\gamma)|}{\Delta_L-1}\sum_{i=1}^{|V_R|}(\Delta_L-1)^{i}{|z|}^i\ex^{ \alpha i}\leq\alpha|V_R(\gamma)|.
 \end{align*}
 Since $\Delta_R$ is the maximum degree of each vertex in $V_R$, for each polymer $\gamma$, $\frac{1}{\Delta_R}\leq \frac{|V_R(\gamma)|}{|V_L(\gamma)|}$. It remains to show
 \[ \sum_{i=0}^{|V_R|}(\Delta_L-1)^{i}{|z|}^i\ex^{ \alpha i}\leq\frac{\alpha(\Delta_L-1)+\Delta_R}{\Delta_R}.\]
 
Since $|V_R|$ might be arbitrarily large, using the formula for geometric series, the above inequality is implied by
 \begin{align*}
  & \frac{1}{1-(\Delta_L-1)e^\alpha {|z|}}\leq \frac{\alpha(\Delta_L-1)+\Delta_R}{\Delta_R}\\
  \Leftrightarrow\quad\quad & (\Delta_L-1)e^\alpha {|z|}\leq 1-\frac{\Delta_R}{\alpha(\Delta_L-1)+\Delta_R}\\
  \Leftrightarrow\quad\quad & (\Delta_L-1)e^\alpha {|z|}\leq \frac{\alpha(\Delta_L-1)}{\alpha(\Delta_L-1)+\Delta_R}\\
  \Leftrightarrow\quad\quad & {|z|}\leq \frac{\alpha}{(\alpha(\Delta_L-1)+\Delta_R)\ex^\alpha}.
 \end{align*}
Substituting $\Delta=\Delta_L$ and $k=\Delta_R$, we obtain
\[{|z|}\leq \frac{\alpha}{(\alpha(\Delta-1)+k)\ex^\alpha}.\]

The optimal bound can be found by maximizing the function $f(\alpha)=\frac{\alpha}{(\alpha(\Delta-1)+k)\ex^\alpha}$. We observe that
\[f'(\alpha)=\frac{(\alpha(\Delta-1)+k)\ex^\alpha-\alpha((\Delta-1)\ex^\alpha+(\alpha(\Delta-1)+k)\ex^\alpha)}{((\alpha(\Delta-1)+k)\ex^\alpha)^2}.\]

Solving $f'(\alpha)=0$ for $\alpha>0$ yields $\alpha=\frac{\sqrt{k^2+4(\Delta-1)k}-k}{2(\Delta-1)}$ and, furthermore, $f''(\alpha)<0$ at this value.

Substituting we obtain the bound
\[{|z|}\leq\frac{\sqrt{k^2+4(\Delta-1)k}-k}{(\Delta-1)(\sqrt{k^2+4(\Delta-1)k}+k)\ex^\frac{\sqrt{k^2+4(\Delta-1)k}-k}{2(\Delta-1)}}.\]

Setting $\alpha=1$ for simplicity yields the bound $|z|\leq((\Delta-1+k)e)^{-1}$ given in the statement of the lemma.
\end{proof}

\subsubsection{Perfect matchings on graphs}

\label{sec:pm}

%
When $H=G$ is a graph of maximum degree $\Delta$, instead of a hypergraph, we obtain even better bounds for $Z_\mathrm{pm}(G,M,z)$. Since perfect matchings on graphs $G$ are Holants, recall the definition of our polymer system we employed for the study of the Holant framework. That is, $\mathcal C(G)$ is the set of connected subgraphs containing at least one edge and for two polymers $\gamma, \gamma' \in \mathcal C(G)$ we defined $\gamma \not\sim \gamma'$ if and only if $V(\gamma) \cap V(\gamma') \neq \emptyset$. Furthermore, we identified an assignment $\sigma \in \{0,1\}^E$ of edges with a vector in $\{0,1\}^{|E|}$ by assuming an inherent enumeration of the edges. In fact, we can use this model to consider the distance to a ground state different from the empty set. To this end, for $v \in V$ we define the vertex constraints $f_v$ to pick only assignments corresponding to cycles alternating between the perfect matching $M$ and the assigned edgeset, where we also allow an empty cycle. Formally, $f_v$ assigns~$1$ to $\mathbf{0}$ as well as to vectors $\mathds{1}_{e_M}+ \mathds{1}_e$ for $e_M\in M$ and $e\in E(G)\setminus M$ with $v\in e_M\cap e$, and~$0$ otherwise.
We define the weight function for a polymer $\gamma\in \mathcal C(G)$ and $z\in \mathbb C$ by
\[
	\Phi_\mathrm{pm}(\gamma, z)=\prod_{v\in V(\gamma)} f_v(\mathds{1}_{E(\gamma)})z^{|E(\gamma)|}.
\]

\begin{lemma}
	$Z_\mathrm{pm}(G,z)=Z(\mathcal C(G),\Phi_\mathrm{pm}(\cdot,z))$ for all finite graphs~$G=(V,E)$  and all $z\in \mathbb C$.
\end{lemma}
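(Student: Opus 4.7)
The plan is to establish a bijection between the set $\mathcal{PM}(G)$ of perfect matchings of $G$ and the set $\mathcal{I}(G)$ of families of pairwise compatible polymers, in such a way that weights correspond.

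First I would observe that, since $f$ is the ``exactly-one'' function, the only assignments $\sigma\in\{0,1\}^E$ contributing non-zero terms to $Z_\mathrm{pm}(G,M,z)$ are indicators of perfect matchings $M'$ of $G$. For each such $M'$ the contribution is $z^{|\sigma_{M'}\difference\sigma_M|}=z^{|M\difference M'|}$, so
\[
Z_\mathrm{pm}(G,M,z)=\sum_{M'\in\mathcal{PM}(G)} z^{|M\difference M'|}.
\]
Next I would analyse the structure of $M\difference M'$. Since both $M$ and $M'$ are perfect matchings, every vertex $v\in V(M\difference M')$ has exactly one incident edge in $M$ and exactly one incident edge in $M'$, and these two edges differ; hence $v$ has degree exactly $2$ in $M\difference M'$, with one $M$-edge and one non-$M$-edge. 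Consequently, each connected component of the edge-induced subgraph $G[M\difference M']$ is an alternating cycle.

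Now I would define the bijection $M'\mapsto \Gamma_{M'}$ where $\Gamma_{M'}$ is the set of connected components of $G[M\difference M']$. Each component $\gamma$ is a connected subgraph with at least one edge, so $\gamma\in\mathcal C(G)$; moreover, at every $v\in V(\gamma)$ the restricted assignment $\mathds{1}_{E(\gamma)}$ corresponds to exactly one $M$-edge and one non-$M$-edge incident to $v$, hence $f_v(\mathds{1}_{E(\gamma)})=1$. Distinct components are vertex-disjoint by construction, so $\Gamma_{M'}\in\mathcal I(G)$. Conversely, given $\Gamma\in\mathcal I(G)$, I would define $M_\Gamma=M\difference \bigcup_{\gamma\in\Gamma} E(\gamma)$ and verify it is a perfect matching: for $v\notin V(\Gamma)$ no incident edge is toggled, so $v$ retains its unique $M$-edge; for $v\in V(\gamma)$ with $\gamma\in\Gamma$, the unique $M$-edge at $v$ inside $\gamma$ is removed and the unique non-$M$-edge at $v$ inside $\gamma$ is added, while no other incident edges change (by compatibility of polymers in $\Gamma$), so $v$ ends up covered exactly once. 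These two maps are mutually inverse, giving the bijection.

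Finally, I would match the weights: for $\Gamma=\Gamma_{M'}$ we have
\[
\prod_{\gamma\in\Gamma}\Phi_\mathrm{pm}(\gamma,z)=\prod_{\gamma\in\Gamma}\Bigl(\prod_{v\in V(\gamma)}f_v(\mathds{1}_{E(\gamma)})\Bigr)z^{|E(\gamma)|}=z^{\sum_{\gamma\in\Gamma}|E(\gamma)|}=z^{|M\difference M'|},
\]
since all vertex factors equal $1$ and the $E(\gamma)$ partition $M\difference M'$. Summing over $\Gamma\in\mathcal I(G)$ via the bijection yields $Z(\mathcal C(G),\Phi_\mathrm{pm}(\cdot,z))=Z_\mathrm{pm}(G,M,z)$. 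The main subtle point, and the only one that needs care, is the verification that the symmetric-difference construction produces exactly the alternating-cycle polymers admitted by the constraint $f_v$ (and nothing else), which is why the restriction of $f_v$ to ``one $M$-edge plus one non-$M$-edge, or the zero vector'' is precisely tailored so that compatible polymer families are in bijection with perfect matchings.
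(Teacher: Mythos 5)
Your argument follows the same route as the paper's: decompose the symmetric difference $M\difference M'$ into vertex-disjoint $M$-alternating cycles, identify these cycles with non-zero-weight polymers, and match weights. The forward direction is done carefully and you even verify $f_v(\mathds{1}_{E(\gamma)})=1$ for each vertex of each component, which the paper leaves implicit.

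There is one imprecision in the converse direction that you should tighten. You take an \emph{arbitrary} $\Gamma\in\mathcal I(G)$ and claim that $M_\Gamma=M\difference\bigcup_{\gamma\in\Gamma}E(\gamma)$ is a perfect matching, justifying this by appealing to ``the unique $M$-edge at $v$ inside $\gamma$'' and ``the unique non-$M$-edge at $v$ inside $\gamma$.'' Those objects need not exist for a generic $\gamma\in\mathcal C(G)$: e.g.\ if $\gamma$ consists of a single non-$M$-edge, then neither is defined and $M_\Gamma$ would fail to be a matching. The fix (which is exactly what the paper does) is to restrict the inverse map to families $\Gamma$ all of whose polymers have $\prod_{v\in V(\gamma)}f_v(\mathds{1}_{E(\gamma)})=1$, i.e.\ non-zero weight. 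Families violating this contribute $0$ to $Z(\mathcal C(G),\Phi_\mathrm{pm}(\cdot,z))$ and so can be dropped from the sum; on the remaining families the alternating-cycle structure is guaranteed by the constraint $f_v$, and your degree-counting argument for $M_\Gamma$ then goes through. You flag this subtlety in your closing paragraph but present the converse map as total on $\mathcal I(G)$, which is what needs to change. With that adjustment your proof matches the paper's.
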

\begin{proof}
	Every perfect matching corresponds to an assignment $\sigma\in\{0,1\}^E$ with $f(\sigma)=1$ that contributes $z^{\sigma\difference \sigma_\groundmatching}$ to the partition function $Z_\mathrm{pm}(G,z)$. The difference $\sigma\difference \sigma_\groundmatching$ corresponds to a set of disjoint cycles, alternating with respect to $\groundmatching$. In this way, any perfect matching $\sigma$ translates to a set of polymers, each  of non-zero weight, which together yield the value $z^{\sigma\difference \sigma_\groundmatching}$. 
	
	Conversely, an incompatible set of polymers $\Gamma\in \mathcal I(G)$ of non-zero weight assigned by $\Phi_\mathrm{pm}$ translates to a perfect matching by editing the ground matching $\groundmatching$ according to the edges altered by the polymer as follows. The set $\Gamma$ corresponds to the perfect matching $\groundmatching_\Gamma=(\groundmatching\setminus(\groundmatching\cap E(\Gamma)))\cup (E(\Gamma)\setminus \groundmatching)$ which differs from $\groundmatching$ by $ |\groundmatching\cap E(\Gamma))|+|E(\Gamma)\setminus \groundmatching|=|E(\Gamma)|$ edges, so $\prod_{\gamma \in \Gamma} \Phi_\mathrm{pm}(\gamma,z) =z^{|E(\Gamma)|}=z^{\sigma_\groundmatching\difference\sigma_{\groundmatching_\Gamma}}$.
	
	In both cases, the empty family $\{\emptyset\}\in\mathcal I (G)$ corresponds to the assignment $\sigma=\mathbf{0}$ which counts 1 to represent the contribution of the ground polymer.
\end{proof}

\begin{theorem}\label{thm:perfect_matchings}
	For all graphs $G$ of maximum degree $\Delta$ the polynomial $Z_\mathrm{pm}(G,z)$ has no roots in $\{z\in \mathbb C\mid |z|\leq \left(\sqrt{4.85718 \, (\Delta -1)}\right)^{-1}\}$ and the cluster expansion of $\log Z_\mathrm{pm}(G,z)$ is converging absolutely.
\end{theorem}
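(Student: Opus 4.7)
The plan is to apply the Koteck\'y--Preiss criterion (Theorem~\ref{thm:cluster_expansion_polymer_models}) to the polymer model $(\mathcal{C}(G),\not\sim,\Phi_\mathrm{pm}(\cdot,z))$, which by the preceding lemma represents $Z_\mathrm{pm}(G,z)$; both conclusions of the theorem --- absence of zeros of $Z_\mathrm{pm}(G,z)$ and absolute convergence of its cluster expansion --- will follow simultaneously.

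First I would identify the polymers with non-zero weight. The constraint $f_v$ at $\mathds{1}_{E(\gamma)}$ vanishes unless $v$ has exactly one incident $M$-edge and one incident non-$M$-edge inside $\gamma$, so every vertex of a non-zero-weight polymer $\gamma$ has degree two in $\gamma$ with incident edges alternating between $M$ and $E\setminus M$. Combined with the connectedness condition in the definition of $\mathcal{C}(G)$, this forces such a $\gamma$ to be an $M$-alternating cycle of length $2k$ for some $k\geq 2$, and then $|\Phi_\mathrm{pm}(\gamma,z)| = |z|^{2k}$.

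The key combinatorial step is an enumeration, in the spirit of Lemma~\ref{lem:bipartite_graph_enumeration}, of $M$-alternating cycles of length $2k$ through a fixed vertex $v$. Traversing such a cycle from $v$ along its unique $M$-edge, every subsequent $M$-edge is prescribed by $M$, while every non-$M$-edge offers at most $\Delta-1$ choices; the final non-$M$-edge is then forced to close back at $v$. Hence there are at most $(\Delta-1)^{k-1}$ such cycles through $v$, so at most $|V(\gamma)|(\Delta-1)^{k-1}$ polymers of length $2k$ incompatible with a given $\gamma$.

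I would then set $a(\gamma) = \alpha|E(\gamma)|$ for a parameter $\alpha > 0$ to be optimised. Since $|V(\gamma)| = |E(\gamma)|$ for cycles, the Koteck\'y--Preiss inequality reduces to
\[
\frac{1}{\Delta-1}\sum_{k\geq 2}\bigl((\Delta-1)|z|^{2}\ex^{2\alpha}\bigr)^{k} \leq \alpha,
\]
with the sum starting at $k=2$ because every alternating cycle has at least four edges. This gap is crucial: it yields a quadratic (rather than linear) inequality in $y = (\Delta-1)|z|^{2}\ex^{2\alpha}$, which in turn produces the $|z| = O((\Delta-1)^{-1/2})$ scaling of the zero-free region rather than the $O((\Delta-1)^{-1})$ scaling encountered in earlier sections. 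Summing the geometric series gives $y^{2} \leq \alpha(\Delta-1)(1-y)$; solving this quadratic for $y$ and then maximising the resulting upper bound on $|z|^{2}$ over $\alpha > 0$ should yield the stated region $|z|\leq(\sqrt{4.85718\,(\Delta-1)})^{-1}$. The main technical hurdle is this final optimisation, whose optimum is characterised by a transcendental equation in $\alpha$ and accounts for the somewhat unusual numerical constant.
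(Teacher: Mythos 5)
Your overall strategy is the same as the paper's: translate $Z_\mathrm{pm}$ to a polymer partition function, identify the polymers with non-zero weight as $M$-alternating cycles (even length $2k$, $k\geq 2$), bound the number of such cycles incompatible with a fixed $\gamma$, and apply the Koteck\'y--Preiss criterion with $a(\gamma)=\alpha|E(\gamma)|$, optimising over $\alpha$. That structure is sound and would indeed establish a zero-free disc. However, three of your specific claims do not match what actually happens.

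First, the scaling attribution is wrong. You say that the $|z|=O((\Delta-1)^{-1/2})$ behaviour is caused by the sum starting at $k=2$ (so giving a quadratic $y^2\leq\alpha(\Delta-1)(1-y)$ rather than a linear inequality in $y$). In fact the paper simply bounds $\sum_{j\geq 2}q^j$ above by $\sum_{j\geq 1}q^j=\frac{q}{1-q}$, i.e.\ by a \emph{linear} expression in $q$, and still obtains the $\sqrt{\Delta-1}$ scaling. The square-root improvement comes entirely from the fact that a polymer with $2j$ edges contributes weight $|z|^{2j}$ while there are only $O((\Delta-1)^{j})$ such polymers incident to any vertex (the $M$-edges are determined by the perfect matching, so they cost no choices); hence the geometric ratio is $(\Delta-1)|z|^2 e^{2\alpha}$ rather than $(\Delta-1)|z|e^{\alpha}$. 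Starting the sum at $k=2$ is not load-bearing for the scaling.

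Second, your enumeration is \emph{tighter} than the paper's, so it will not reproduce the constant $4.85718$. The paper bounds $|\mathcal C_\mathrm{pm}^\gamma(2j)|$ by $\tfrac12|E(\gamma)|(\Delta-1)^{j}$ (they count by choosing one of the $\tfrac12|E(\gamma)|$ $M$-edges of $\gamma$ and then, somewhat loosely, allow $(\Delta-1)^{j}$ continuations). Your count $|V(\gamma)|(\Delta-1)^{j-1}=|E(\gamma)|(\Delta-1)^{j-1}$ is smaller by a factor $\tfrac{\Delta-1}{2}$. This is not a mistake --- it is a genuinely stronger bound --- but it carries a residual $\tfrac{1}{\Delta-1}$ factor into the Koteck\'y--Preiss inequality and therefore yields a $\Delta$-\emph{dependent} constant strictly better than $4.85718$. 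You should either notice that you are proving a stronger statement, or intentionally relax (for example by replacing $(\Delta-1)^{k-1}$ with $\tfrac12(\Delta-1)^{k}$ and $\sum_{k\geq 2}$ with $\sum_{k\geq 1}$) to match the paper's $\Delta$-independent optimisation.

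Third, the optimisation in the paper's argument is not transcendental. With the paper's bound the condition becomes $q\leq\tfrac{2\alpha}{1+2\alpha}$ with $q=(\Delta-1)|z|^2e^{2\alpha}$, so one maximises $g(\alpha)=\tfrac{2\alpha}{(1+2\alpha)e^{2\alpha}}$; setting $g'(\alpha)=0$ gives $4\alpha^2+2\alpha-1=0$, hence $\alpha=\tfrac{\sqrt5-1}{4}$, and substituting yields $\sqrt{\tfrac{3-\sqrt5}{2(\Delta-1)e^{(\sqrt5-1)/2}}}\approx(\sqrt{4.85718\,(\Delta-1)})^{-1}$. The ``somewhat unusual'' constant is just this algebraic optimum, not the root of a transcendental equation.
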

\begin{proof}
	We will again apply Theorem~\ref{thm:cluster_expansion_polymer_models} to the polymer representation $Z(\mathcal C(G),\Phi_\mathrm{pm}(\cdot,z))$. For the sake of simplicity, we now restrict this polymer definition to the set of polymers with non-zero contribution to the partition function. Hence, let 
	\[
	\mathcal C_\mathrm{pm}(G)=\{\gamma\in \mathcal C(G)\mid\prod_{v\in V(\gamma)}f_v(\mathds{1}_{E(\gamma)})=1\}
	\] 
	and obviously $Z(\mathcal C(G),\Phi_\mathrm{pm}(\cdot,z))=Z(\mathcal C_\mathrm{pm}(G),\Phi_\mathrm{pm}(\cdot,z))$. 
	
	We fix a polymer $\gamma\in\mathcal C_\mathrm{pm}(G)$ and in order to bound the number of polymers incompatible with $\gamma$ we denote by $\mathcal C_\mathrm{pm}^\gamma(i)$ the number of polymers $\gamma'\not\sim\gamma$ with $|E(\gamma')|=i$. First, observe that each polymer in $\mathcal C_\mathrm{pm}(G)$ has an even number of edges, and more precisely, $|E(\gamma')\cap\groundmatching|=\frac12|E(\gamma')|$. Further, by the special structure of our polymers in $\mathcal C_\mathrm{pm}(G)$ we can estimate $|\mathcal C_\mathrm{pm}^{\gamma}(i)|$ better than with the general Lemma~\ref{lem:connected_subgraph_bound}.
	
	Any $\gamma'\in\mathcal C_\mathrm{pm}(G)$ with $\gamma'\not\sim\gamma$ has to intersect with $\gamma$ on at least one of the edges from $\groundmatching$. There are exactly $\frac 12 |E(\gamma)|$ choices for such a common edge in $\groundmatching\cap E(\gamma)\cap E(\gamma')$. Starting with such a fixed edge from $\groundmatching$, $\gamma'$ builds a cycle alternating between edges from $E(G)\setminus \groundmatching$ and $\groundmatching$. As $\groundmatching$ is a perfect matching the choice of edges from $E(G)\setminus \groundmatching$ completely determines the edges from $\groundmatching\cap E(\gamma')$. Moreover, in a graph of maximum degree $\Delta$ there are only $\Delta-1$ choices to continue from a fixed vertex with edges from $E(G)\setminus \groundmatching$. This yields a total of at most $\frac 12 |E(\gamma)|(\Delta-1)^{i/2}$ polymers in $\mathcal C_\mathrm{pm}^\gamma(i)$.
	
	We now set $a(\gamma)=\alpha |E(\gamma)|$, where we will choose $\alpha > 0$ later in order to improve the bound on the region. By this choice of $a(\gamma)$ and with $j=\frac{i}{2}$ we obtain
	\begin{align*}
		\sum_{\gamma' \not\sim \gamma} |\Phi_\mathrm{pm}(\gamma',{z})| \ex^{a(\gamma')}
		\leq & \sum_{j=2}^{\frac12 |V(G)|} |\mathcal C_\mathrm{pm}^{\gamma}(2j)|\,|z|^{2j}\,\ex^{2\alpha j}
		\\
		\leq &  \sum_{j=2}^{\frac12 |V(G)|} \frac 12 |E(\gamma)|(\Delta-1)^{j}\,|z|^{2j}\,\ex^{2\alpha j } .
	\end{align*}
	In order to apply Theorem~\ref{thm:cluster_expansion_polymer_models}, we are going to choose a bound on $|z|$ such that
	\[
		\sum_{j=2}^{\frac12 |V(G)|} \frac 12 |E(\gamma)|(\Delta-1)^{j}\,|z|^{2j}\,\ex^{2 \alpha j} \leq a (\gamma) =  \alpha |E(\gamma)| ,
	\]
	which is equivalent to
	\[
		\sum_{j=2}^{\frac12 |V(G)|} (\Delta-1)^{j}\,|z|^{2j}\,\ex^{2 \alpha j} \leq 2 \alpha .
	\]	
	We now set
	$
		q := (\Delta-1) \,  |z|^{2} \, \ex^{2 \alpha},
	$
	which is equivalent to 
	$
		|z| = \sqrt{\frac{q}{(\Delta-1)\ex^{2 \alpha} }},
	$
	and by this choice and the geometric series we obtain
	\[
		\sum_{j=2}^{\frac12 |V(G)|} (\Delta-1)^{j}\,|z|^{2j}\,\ex^{2 \alpha j} = \sum_{j=2}^{\frac12 |V(G)|} q ^j < \frac{1}{1-q} - 1
	\]	
	assuming that $q\in (0, 1)$.
	Hence, we have to choose $\alpha$ such that
	$
		\frac{1}{1-q} - 1 \leq 2 \alpha ,
	$
	which holds exactly if $q\leq \frac{2 \alpha}{1 + 2 \alpha}$. We insert this choice of $q$ into the chosen bound on $|z|$ and obtain
	\begin{equation}\label{eq:perfect_matching_optimal_alpha}
		|z| \leq \sqrt{\frac{2\alpha}{1+2\alpha} \, \frac{1}{(\Delta-1)\ex^{2 \alpha} }}.
	\end{equation}
	Define $g(\alpha)=\sqrt{\frac{2\alpha}{1+2\alpha} \, \frac{1}{(\Delta-1)\ex^{2 \alpha} }}$. The derivative of $g$
	has a positive root at $\alpha = \frac{-1 + \sqrt{5}}{4}$ and, furthermore, $f''(\alpha) <0$ at this point.
	
	Therefore, choosing $\alpha = \frac{-1 + \sqrt{5}}{4}$ maximises the bound of \eqref{eq:perfect_matching_optimal_alpha} to
	
	\[
		|z| \leq \sqrt{\frac{3- \sqrt{5}}{2(\Delta-1)\left(\ex^{\frac{\sqrt5 -1}{2}}\right)}} \approx \left(\sqrt{4.85718 \, (\Delta -1)}\right)^{-1}.
	\]
\end{proof}

\section{Acknowledgements}
The authors would like to thank Heng Guo and Guus Regts for helpful comments on an early draft of this work.

\bibliographystyle{plain}
\bibliography{references}

\begin{thebibliography}{10}

\bibitem{1999:Ausiello:Complexity_and_approximation}
Giorgio Ausiello, Alberto Marchetti{-}Spaccamela, Pierluigi Crescenzi, Giorgio
  Gambosi, Marco Protasi, and Viggo Kann.
\newblock {\em Complexity and approximation: combinatorial optimization
  problems and their approximability properties}.
\newblock Springer, 1999.

\bibitem{2018:Backens:dichotomy_complex_Holant}
Miriam Backens.
\newblock A complete dichotomy for complex-valued
  {H}olant\({}^{\mbox{\emph{c}}}\).
\newblock In {\em Proceedings of {ICALP}~2018}, pages 12:1--12:14, 2018.

\bibitem{2019:Barvinok:integer_points}
Alexander Barvinok and Guus Regts.
\newblock Weighted counting of solutions to sparse systems of equations.
\newblock {\em Combinatorics, Probability and Computing}, pages 1--24, 2019.

\bibitem{2016:Barvinok:combinatorics_complexity_partition_functions}
Alexander~I. Barvinok.
\newblock {\em Combinatorics and Complexity of Partition Functions}, volume~30
  of {\em Algorithms and combinatorics}.
\newblock Springer, 2016.

\bibitem{2016:Barvinok:computing_permanent_complex_matrices}
Alexander~I. Barvinok.
\newblock Computing the permanent of (some) complex matrices.
\newblock {\em Foundations of Computational Mathematics}, 16(2):329--342, 2016.

\bibitem{2018:Bezakova:complexity_approx_matching_polynomial}
Ivona Bez{\'{a}}kov{\'{a}}, Andreas Galanis, Leslie~Ann Goldberg, and Daniel
  Stefankovic.
\newblock {The Complexity of Approximating the Matching Polynomial in the
  Complex Plane}.
\newblock In {\em Proceedings of {ICALP}~2019}, pages 22:1--22:13, 2019.

\bibitem{2013:Borgs:convergence_graphs_bounded_degree}
Christian Borgs, Jennifer~T. Chayes, Jeff Kahn, and L{\'{a}}szl{\'{o}}
  Lov{\'{a}}sz.
\newblock Left and right convergence of graphs with bounded degree.
\newblock {\em Random Structures and Algorithms}, 42(1):1--28, 2013.

\bibitem{1997:Bubley:graph_orientations_no_sink}
Russ Bubley and Martin Dyer.
\newblock Graph orientations with no sink and an approximation for a hard case
  of \#{SAT}.
\newblock In {\em Proceedings of {SODA}~1997}, pages 248--257, 1997.

\bibitem{2014:Cai:complexity_edge_colorings_higher_domain_Holant}
Jin{-}Yi Cai, Heng Guo, and Tyson Williams.
\newblock The complexity of counting edge colorings and a dichotomy for some
  higher domain {H}olant problems.
\newblock In {\em Proceedings of {FOCS}~2014}, pages 601--610, 2014.

\bibitem{2016:Cai:dichotomy_vanishing_signatures}
Jin{-}Yi Cai, Heng Guo, and Tyson Williams.
\newblock A complete dichotomy rises from the capture of vanishing signatures.
\newblock {\em {SIAM} J. Comput.}, 45(5):1671--1728, 2016.

\bibitem{2019:Cai:Pefect_Matchings}
Jin{-}Yi Cai and Tianyu Liu.
\newblock Counting perfect matchings and the eight-vertex model.
\newblock {\em CoRR}, abs/1904.10493, 2019.

\bibitem{2019:Cai:approximability_six_vertex_model}
Jin{-}Yi Cai, Tianyu Liu, and Pinyan Lu.
\newblock Approximability of the six-vertex model.
\newblock In {\em Proceedings of {SODA}~2019}, pages 2248--2261, 2019.

\bibitem{2011:Cai:complexity_Holant}
Jin{-}yi Cai, Pinyan Lu, and Mingji Xia.
\newblock Computational complexity of {H}olant problems.
\newblock {\em {SIAM} J. Comput.}, 40(4):1101--1132, 2011.

\bibitem{2013:Cai:dichotomy_Holant_domain_3}
Jin{-}Yi Cai, Pinyan Lu, and Mingji Xia.
\newblock Dichotomy for {H}olant* problems with domain size 3.
\newblock In {\em Proceedings of {SODA}~2013}, pages 1278--1295, 2013.

\bibitem{2018:Cai:dichotomy_real_Holant}
Jin{-}Yi Cai, Pinyan Lu, and Mingji Xia.
\newblock Dichotomy for real {H}olant\({}^{\mbox{\emph{c}}}\) problems.
\newblock In {\em Proceedings of {SODA}~2018}, pages 1802--1821, 2018.

\bibitem{2019:Chen:fast_algorithms_low_temperature}
Zongchen Chen, Andreas Galanis, Leslie~Ann Goldberg, Will Perkins, James
  Stewart, and Eric Vigoda.
\newblock Fast algorithms at low temperatures via markov chains.
\newblock In {\em Proceedings of {APPROX}/{RANDOM}~2019}, pages 41:1--41:14,
  2019.

\bibitem{2011:Csikvari:roots_edge_cover_polynomial}
P\'{e}ter Csikv\'{a}ri and Mohammad~Reza Oboudi.
\newblock On the roots of edge cover polynomials of graphs.
\newblock {\em European Journal of Combinatorics}, 32(8):1407 -- 1416, 2011.

\bibitem{1996:Dobrushin:Estimates_of_semiinvariants}
Roland~L. Dobrushin.
\newblock Estimates of semiinvariants for the ising model at low temperatures.
\newblock {\em American Mathematical Society Translations: Series 2}, 177,
  1996.

\bibitem{1999:Dyer:randomwalks_combinatorial_objects}
Martin Dyer and Catherine Greenhill.
\newblock Random walks on combinatorial objects.
\newblock In {\em Surveys in Combinatorics 1999}, pages 101--136. University
  Press, 1999.

\bibitem{2004:Goldberg:relative_complexity_approx_coutning}
Martin~E. Dyer, Leslie~Ann Goldberg, Catherine~S. Greenhill, and Mark Jerrum.
\newblock The relative complexity of approximate counting problems.
\newblock {\em Algorithmica}, 38(3):471--500, 2004.

\bibitem{friedli_velenik_2017}
Sacha Friedli and Yvan Velenik.
\newblock {\em Statistical Mechanics of Lattice Systems: A Concrete
  Mathematical Introduction}.
\newblock Cambridge University Press, 2017.

\bibitem{2014:Galanis:inapproximability_independent_hard_core}
Andreas Galanis, Qi~Ge, Daniel Stefankovic, Eric Vigoda, and Linji Yang.
\newblock Improved inapproximability results for counting independent sets in
  the hard-core model.
\newblock {\em Random Structures and Algorithms}, 45(1):78--110, 2014.

\bibitem{1971:Gruber:properties_polymer_systems}
Christian Gruber and Herv\'{e} Kunz.
\newblock General properties of polymer systems.
\newblock {\em Communications in Mathematical Physics}, 22(2):133--161, 1971.

\bibitem{2019:Guo:Holant}
Heng Guo, Chao Liao, Pinyan Lu, and Chihao Zhang.
\newblock Zeros of {H}olant problems: locations and algorithms.
\newblock In {\em Proceedings of {SODA}~2019}, pages 2262--2278, 2019.

\bibitem{1972:Heilmann:Monomer_Dimer_systems}
Ole~J. Heilmann and Elliott~H. Lieb.
\newblock Theory of {M}onomer-{D}imer systems.
\newblock {\em Communications in Mathematical Physics}, 25(3):190--232, 1972.

\bibitem{2018:Helmuth:algorithmic_pirogov_sinai_theory}
Tyler Helmuth, Will Perkins, and Guus Regts.
\newblock Algorithmic {P}irogov-{S}inai theory.
\newblock In {\em Proceedings of {STOC}~2019}, pages 1009--1020, 2019.
\newblock The referenced version can be found at
  \url{https://arxiv.org/abs/1806.11548}.

\bibitem{2016:Huang:canonical_paths_art_science}
Lingxiao Huang, Pinyan Lu, and Chihao Zhang.
\newblock Canonical paths for {MCMC:} from art to science.
\newblock In {\em Proceedings of {SODA}~2016}, pages 514--527, 2016.

\bibitem{1989:Jerrum:approximating_permanent}
Mark Jerrum and Alistair Sinclair.
\newblock Approximating the permanent.
\newblock {\em {SIAM} J. Comput.}, 18(6):1149--1178, 1989.

\bibitem{1993:Jerrum:poly_time_approximation_Ising}
Mark Jerrum and Alistair Sinclair.
\newblock Polynomial-time approximation algorithms for the {I}sing model.
\newblock {\em {SIAM} J. Comput.}, 22(5):1087--1116, 1993.

\bibitem{Jerrum:1986:Random_Generation}
Mark Jerrum, Leslie~G. Valiant, and Vijay~V. Vazirani.
\newblock Random generation of combinatorial structures from a uniform
  distribution.
\newblock {\em Theoretical Computer Science}, 43(2-3):169--188, 1986.

\bibitem{1986:Kotecky:cluster_expansion_polymer_models}
Roman Koteck{\'y} and David Preiss.
\newblock Cluster expansion for abstract polymer models.
\newblock {\em Communications in Mathematical Physics}, 103(3):491--498, 1986.

\bibitem{2017:Lin:complexity_Holant_Boolean_weights}
Jiabao Lin and Hanpin Wang.
\newblock The complexity of {H}olant problems over {B}oolean domain with
  non-negative weights.
\newblock In {\em Proceedings of {ICALP}~2017}, pages 29:1--29:14, 2017.

\bibitem{2014:Liu:FPTAS_weighted_edge_covers}
Jingcheng Liu, Pinyan Lu, and Chihao Zhang.
\newblock {FPTAS} for counting weighted edge covers.
\newblock In {\em Proceedings of {ESA}~2014}, pages 654--665, 2014.

\bibitem{Liu_2018}
Jingcheng Liu, Alistair Sinclair, and Piyush Srivastava.
\newblock The ising partition function: Zeros and deterministic approximation.
\newblock {\em Journal of Statistical Physics}, 174(2):287–315, Dec 2018.

\bibitem{2014:Lu:FPTAS_weighted_fibonacci_gates}
Pinyan Lu, Menghui Wang, and Chihao Zhang.
\newblock {FPTAS} for weighted {F}ibonacci gates and its applications.
\newblock In {\em Proceedings of {ICALP}~2014, Part {I}}, pages 787--799, 2014.

\bibitem{1941:Mayer:molecular_distribution}
Joseph~E. Mayer and Elliott Montroll.
\newblock Molecular distribution.
\newblock {\em The Journal of Chemical Physics}, 9(1):2--16, 1941.

\bibitem{2013:McQuillan:approximating_Holant}
Colin McQuillan.
\newblock Approximating {H}olant problems by winding.
\newblock {\em CoRR}, abs/1301.2880, 2013.

\bibitem{2017:Patel:poly_time_approx_partition_function}
Viresh Patel and Guus Regts.
\newblock Deterministic polynomial-time approximation algorithms for partition
  functions and graph polynomials.
\newblock {\em {SIAM} J. Comput.}, 46(6):1893--1919, 2017.

\bibitem{2017:Peters:conjecture_of_Sokal}
Han Peters and Guus Regts.
\newblock On a conjecture of {S}okal concerning roots of the independence
  polynomial.
\newblock {\em Michigan Mathematical Journal}, 68(1):33--55, 04 2019.

\bibitem{2018:Regts:Zero-free_regions}
Guus Regts.
\newblock Zero-free regions of partition functions with applications to
  algorithms and graph limits.
\newblock {\em Combinatorica}, 38(4):987--1015, 2018.

\bibitem{Sinclair:1989:Approximate_Counting}
Alistair Sinclair and Mark Jerrum.
\newblock Approximate counting, uniform generation and rapidly mixing markov
  chains.
\newblock {\em Information and Computation}, 82(1):93--133, 1989.

\bibitem{2010:Sly:computational_transition}
Allan Sly.
\newblock Computational transition at the uniqueness threshold.
\newblock In {\em Proceedings of {FOCS}~2010}, pages 287--296, 2010.

\bibitem{2008:Valiant:holographic_algorithms}
Leslie~G. Valiant.
\newblock Holographic algorithms.
\newblock {\em {SIAM} J. Comput.}, 37(5):1565--1594, 2008.

\bibitem{2006:Weitz:counting_independent_sets}
Dror Weitz.
\newblock Counting independent sets up to the tree threshold.
\newblock In {\em Proceedings of {STOC}~2006}, pages 140--149, 2006.

\end{thebibliography}
	
\end{document}